\documentclass[letter,11pt]{article}
\usepackage{graphicx}
\usepackage{amsfonts} 
\usepackage{amssymb}
\usepackage{amsthm}
\usepackage{amsmath}
\usepackage{geometry}
\usepackage{lastpage}
\usepackage{fancyhdr}
\usepackage{cite}
\usepackage{enumerate}
\usepackage{todonotes}
\usepackage{tcolorbox}
\tcbuselibrary{skins}
\usepackage[pdftex,unicode, plainpages = false, pdfpagelabels, 
bookmarks=false,
bookmarksopen = true,
bookmarksnumbered = true,
breaklinks = true,
linktocpage,
pagebackref,
colorlinks = true,  
linkcolor = blue,
urlcolor  = blue,
citecolor = red,
anchorcolor = green,
hyperindex = true,
hyperfigures
]{hyperref} 
\hypersetup{
    colorlinks=true, 
    linktoc=all,     
    linkcolor=blue,
    citecolor=red,  
}
\usepackage{thmtools}   
\usepackage{cleveref}   
\usepackage{tikz}
\usetikzlibrary{positioning, arrows.meta}
\usepackage{booktabs}
\usepackage{boxedminipage}

\usepackage{algorithm}
\usepackage[noend]{algpseudocode}
\algnewcommand{\IfSingleLine}[2]{\State \algorithmicif\ #1\ \algorithmicthen #2} 

\newcommand{\defproblem}[4]{%
\noindent
\begin{center}
\begin{boxedminipage}{1\columnwidth}
#1\\
\begin{tabular}{l p{0.75\columnwidth}}
    \textbf{Input}:    & #2\\
    \textbf{Parameter}: & #3\\
    \textbf{Question}: & #4
\end{tabular}
\end{boxedminipage}
\end{center}
}

\newtheorem{theorem}{Theorem}
\newtheorem{lemma}{Lemma}
\newtheorem{definition}{Definition}
\newtheorem{corollary}{Corollary}

\newcommand{\VMC}{\textsc{Vertex Multicut}}
\newcommand{\abbrVMC}{VMC}
\newcommand{\VMCcomp}{\textsc{Vertex Multicut Compression}}
\newcommand{\abbrVMCcomp}{VMC-Compression}

\newcommand{\lpprimal}[1]{\mathcal{L}_{primal}(#1)}
\newcommand{\lpdual}[1]{\mathcal{L}_{dual}(#1)}

\newcommand{\FCfull}[3]{\mathrm{FC}_{#1,#2}(#3)}
\newcommand{\FC}[1]{\mathrm{FC}_{W}(#1)}
\newcommand{\Partfull}[3]{P_{#1}(#2,#3)}
\newcommand{\Part}[1]{P_{W}({#1})}

\title{Faster Parameterized Vertex Multicut}
\author{
    Huairui Chu\thanks{University of California, Santa Barbara, USA. Email: huairuichu@ucsb.edu, daniello@ucsb.edu.}
    \and
    Yuxi Liu\thanks{University of Electronic Science and Technology of China, China. Email: yuxiliu823@gmail.com, jqpeng0@foxmail.com, kangyitian947@gmail.com, myxiao@uestc.edu.cn.}
    \and
    Daniel Lokshtanov\footnotemark[1]
    \and
    Junqiang Peng\footnotemark[2]
    \and
    Kangyi Tian\footnotemark[2]
    \and
    Mingyu Xiao\footnotemark[2]
}
\date{}

\begin{document}
\maketitle

\begin{abstract}
In the {\sc Vertex Multicut} 
problem the input consists of a graph $G$, integer $k$, and a set  
$\mathbf{T} = \{(s_1, t_1), \ldots, (s_p, t_p)\}$ of pairs of vertices of $G$.
The task is to find a set $X$ of at most $k$ vertices 
such that, for every $(s_i, t_i) \in \mathbf{T}$, there is no path from $s_i$ to $t_i$ in $G - X$.
Marx and Razgon [STOC 2011 and SICOMP 2014] and Bousquet, Daligault, and Thomass\'{e} [STOC 2011 and SICOMP 2018] independently and simultaneously gave the first algorithms for {\sc Vertex Multicut} with running time $f(k)n^{O(1)}$.
The running time of their algorithms is $2^{O(k^3)}n^{O(1)}$ and $2^{O(k^{O(1)})}n^{O(1)}$, respectively. 
As part of their result, Marx and Razgon introduce the {\em shadow removal} technique, which was subsequently applied in algorithms for several parameterized cut and separation problems. 
The shadow removal step is the only step of the algorithm of Marx and Razgon which requires $2^{O(k^3)}n^{O(1)}$ time. 
Chitnis et al. [TALG 2015] gave an improved version of the shadow removal step, which, among other results, led to a $k^{O(k^2)}n^{O(1)}$ time algorithm for {\sc Vertex Multicut}.

We give a faster algorithm for the {\sc Vertex Multicut} problem with running time $k^{O(k)}n^{O(1)}$. 
Our main technical contribution is a refined shadow removal step for vertex separation problems that only introduces an overhead of $k^{O(k)}\log n$ time. The new shadow removal step implies a $k^{O(k^2)}n^{O(1)}$ time algorithm for {\sc Directed Subset Feedback Vertex Set} and a $k^{O(k)}n^{O(1)}$ time algorithm for {\sc Directed Multiway Cut}, improving over the previously best known algorithms of Chitnis et al. [TALG 2015].
\end{abstract}

\newpage
\section{Introduction}

The minimum $s$--$t$ cut problem is a fundamental problem in graph theory and combinatorial optimization. 
This classical polynomial-time solvable problem forms the cornerstone of numerous separation and partitioning problems in graphs. 
A natural generalization arises when one aims to separate not just a single pair of terminals, but several terminals or terminal pairs simultaneously. 
If the goal is to separate all vertices in a given vertex set $T$ of size $p$ from each other by deleting at most $k$ edges or vertices, the problem is known as the \textsc{Multiway Cut} problem (also called the \textsc{Multi-terminal Cut} problem). 
If, instead, the input specifies a collection $\mathbf{T}$ of $p$ pairs of vertices in $G$, called {\em terminal pairs}, that must each be disconnected, the problem becomes the \textsc{Multicut} problem. 
Clearly, \textsc{Multiway Cut} is a special case of \textsc{Multicut}, where every pair of terminals defines a terminal pair. 
Depending on whether the graph is directed or undirected and whether we delete edges or vertices, we obtain four variants: 
\textsc{Edge Multicut}, \textsc{Vertex Multicut}, \textsc{Directed Edge Multicut}, and \textsc{Directed Vertex Multicut}. 
All of these variants have been extensively studied in both approximation and parameterized algorithm frameworks.

In this work we focus on parameterized algorithms for the \textsc{Vertex Multicut} problem in undirected graphs parameterized by the solution size $k$.
\defproblem
{\VMC{} (\abbrVMC{})}
{A graph $G=(V,E)$, a set $\mathbf{T}$ of vertex pairs and an integer $k$.}
{$k$.}
{Determine whether there is a vertex subset $X\subseteq V$ with $|X|\le k$ such that $\forall (s,t)\in \mathbf{T}$, $s,t$ are in different components in $G-X$.
}
Our main result is an algorithm for \textsc{Vertex Multicut} with running time $O^*(k^{O(k)})$\footnote{$O^*(\cdot)$ means omitting polynomial factors in the input size.}.

\begin{theorem}[Main]\label{thm:main}
    There is an algorithm that solves \VMC{} in time $k^{O(k)}n^{O(1)}$.
\end{theorem}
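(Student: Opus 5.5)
We follow the Marx--Razgon pipeline and make each stage cost $k^{O(k)}n^{O(1)}$. The only stage that is not already this fast is shadow removal, which is where the new work goes.

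\emph{Stage 1: iterative compression.} We reduce \VMC{} to \abbrVMCcomp{}. In this variant we are additionally given a solution $W$ of size $k+1$ and must find a solution $X$ of size at most $k$. We build the instance vertex by vertex, so we make $n$ calls to the compression routine.

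\emph{Stage 2: guess $X$'s interaction with $W$.} We guess $X\cap W$ and delete it. We then guess the partition of the remaining vertices of $W$ into the components of $G-X$. This costs $(k+1)^{O(k)}$ branches, and it lets us assume $X\cap W=\emptyset$. Following Marx--Razgon, it now suffices to find a solution $X$ that is a \emph{shadowless} multiway-cut-like separator with respect to $W$. Here the shadow of $X$ is the set of vertices separated from $W$ by $X$.

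\emph{Stage 3: shadow removal (main obstacle).} This is the step the paper refines. We need a randomized (later derandomized) procedure that outputs a family of $k^{O(k)}\log n$ sets $Z$. The guarantee is that for some $Z$ in the family, $Z$ contains the entire shadow of some solution $X$ and $Z\cap X=\emptyset$. Vertices of $Z$ are then removed by the torso operation, which keeps the problem equivalent and makes $X$ shadowless.

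Chitnis et al.\ obtain $2^{O(k^2)}$ by covering the exact shadow via important separators. Our plan is instead to define the shadow relative to a nested sequence of \emph{important} cuts. We bound the number of relevant ``layers'' by $O(k)$. For each layer we pick a random threshold, with probability $1/k^{O(1)}$ that the threshold is correct. This gives success probability $k^{-O(k)}$. Derandomization via splitters would then give a family of size $k^{O(k)}\log n$.

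The crux is showing that only $O(k)$ independent guesses are needed, rather than one per pair of an important separator and a vertex. We would try to prove this with a potential argument: each layer strictly increases the LP value of the $W$-separating cut, which is bounded by $k$.

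\emph{Stage 4: solve the shadowless instance.} Once $X$ is shadowless, every component of $G-X$ contains a vertex of $W$. We therefore solve the problem as a Multiway Cut with the guessed partition of $W$ into groups. We merge the terminals in each group and run the known $O^*(4^k)$ or $O^*(2^k)$ LP-branching algorithm for vertex Multiway Cut. We then check that every terminal pair $(s,t)$ is separated. This check is valid because shadowlessness forces $s$ and $t$ to land in components of $G-X$ identified by their $W$-representatives, and that identification must be guessed consistently. A further $k^{O(k)}$ guessing of which $W$-group each pair endpoint attaches to handles this, via the standard reduction to edges $s$--$w_i$ after torso.

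\emph{Running time.} Multiplying the stages gives $n\cdot k^{O(k)}\cdot k^{O(k)}\log n\cdot 4^k\cdot n^{O(1)}=k^{O(k)}n^{O(1)}$.
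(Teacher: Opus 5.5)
Your Stages~1--2 match the paper's iterative compression. Stage~4, however, does not work, and it hides a second bottleneck that your opening sentence dismisses. Shadowlessness only tells you that every vertex of $G-X$ lies in the component of exactly one $w\in W$; it does not tell you which one, and the number of terminal pairs is not bounded by any function of $k$. Running a vertex Multiway Cut algorithm and then checking the pairs is not a decision procedure: it returns one cut, while another cut of the same size might separate all pairs. Guessing, for every pair endpoint, its $W$-representative or its membership in $X$ costs $(k+2)^{\Theta(p)}$ branches rather than $k^{O(k)}$.

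Handling the pairs is the real core of the problem. Marx--Razgon reduce to \emph{bipedal} instances (every component of $G-W$ has at most two neighbours in $W$), which are then solved in $2^{O(k)}n^{O(1)}$ time via Almost 2-SAT. That reduction costs $k^{O(k^2)}$ in their analysis, so the paper has to redesign it. It branches with measure $2k-OPT_{lp}(G,W)$. If some vertex is positive in every optimal solution of the multiway-cut LP, it either deletes or torsos that vertex; the torso branch raises the LP optimum by $1/2$ by half-integrality. When no such vertex exists, it uses the identity $2\,OPT_{lp}(G,W)=\sum_{w\in W}\mathrm{mincut}(G,w,W\setminus\{w\})$. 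This shows that contracting a vertex of a farthest isolating $w$-min-separator into $w$ forces progress within one more step, while the other branch deletes that vertex. Some output instance then has a ``contractible'' shadowless solution. For it, the regions $P_W(G,R)$ cut out by the farthest isolating min-separators satisfy $P_W(G,\emptyset)=\emptyset$ and $|R|\le 2$, with a restricted adjacency pattern. Contracting the inner boundary of each $P_W(G,\{w\})$ into $w$ therefore yields a bipedal instance. Nothing in your proposal plays this role.

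Your Stage~3 is also missing its key idea. ``Nested layers with random thresholds and an LP potential'' is a hope rather than an argument. It also does not fit the structure of the shadow, since different shadow vertices come with different, non-nested important separators inside $X$. The paper instead proves a hitting-set lemma: if $Y$ contains no important $S$--$T$ separator of size at most $k$ entirely, then some set $H$ of at most $k$ vertices, disjoint from $Y$, meets every such separator. To build $H$, take the farthest minimum separator $N(C_0)$ and mark a vertex $v_1\in N(C_0)\setminus Y$. Then take the farthest minimum $(C_0\cup\{v_1\})$--$T$ separator and repeat. The sizes strictly increase, so there are at most $k$ rounds, and every important separator of size at most $k$ contains a marked vertex.

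This is combined with the pushing lemma: some solution $X$ contains, for each shadow vertex $v$, an important $\{v\}$--$W$ separator of size at most $k$, but contains no such separator for any $v\in X$. Color each vertex red with probability $p$, and let $Z$ be the set of vertices having an all-red important separator of size at most $k$. This succeeds whenever the at most $k$ vertices of $X$ are red and the at most $k^2$ vertices of $\bigcup_{v\in X}H_v$ are blue. That event has probability $p^k(1-p)^{k^2}=k^{-O(k)}$ for $p=\Theta(1/k^2)$, and splitters derandomize it into the $k^{O(k)}\log n$ family.
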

Theorem~\ref{thm:main} improves over a $O^*(2^{O(k^{O(1)})})$ time algorithm due to Bousquet, Daligault, and Thomassé~\cite{DBLP:journals/siamcomp/BousquetDT18-first-FPT-ind}, a  $O^*(2^{O(k^3)})$ time algorithm of Marx and Razgon~\cite{DBLP:journals/siamcomp/MarxR14-first-FPT}, and over the current fastest $O^*(k^{O(k^2)})$ time algorithm of Chitnis et al.~\cite{DBLP:journals/talg/ChitnisCHM15-FPT-DSFVS}.

\smallskip
\noindent
\textbf{Background and Complexity.}
There is a simple polynomial time reduction from {\sc Edge Multicut} to {\sc Vertex Multicut}  that exactly preserves solution size $k$, thus our result implies the same bound for {\sc Edge Multicut}.
 {\sc Vertex Multicut} in turn reduces to {\sc Directed Vertex Multicut} by replacing undirected edges by edges in both directions.
{\sc Directed Vertex Multicut} reduces to {\sc Directed Edge Multicut} by the well-known ``vertex-splitting'' trick that replaces every vertex with two copies, one for the in-neighbors and one for the out-neighbors, and adds an edge from the in-vertex to the out-vertex. Finally  {\sc Directed Edge Multicut} reduces back to {\sc Directed Vertex Multicut} by a directed version of the reduction from {\sc Edge Multicut} to {\sc Vertex Multicut}.
Thus the pecking order of problems (from easiest to hardest) is 
$$\mbox{\sc Edge Multicut}  \leq  \mbox{\sc Vertex Multicut} \leq \mbox{{\sc Directed Edge/Vertex Multicut}}$$
%

In undirected graphs, both \textsc{Edge Multicut} and \textsc{Vertex Multicut} with only two terminal pairs can be solved in polynomial time~\cite{2-multicut-YannakaMKPCSPC1983}. 
However, both problems become NP-hard as soon as there are three or more terminal pairs~\cite{DBLP:journals/siamcomp/DahlhausJPSY94-complexity-of-multicut}. 
From the perspective of approximation algorithms, both \textsc{Edge} and \textsc{Vertex Multicut} in undirected graphs admit an $O(\log p)$-approximation via the \textsc{Multicommodity Max-Flow} approach~\cite{DBLP:journals/siamcomp/GargVY96-logk-approxi}. On the other hand, under the Unique Games Conjecture, no constant-factor approximation is possible~\cite{DBLP:journals/cc/ChawlaKKRS06-approx-harness}.
The directed versions of the problems are substantially harder: \textsc{Directed Edge/Vertex Multicut} is NP-hard even when there are only two terminal pairs~\cite{DBLP:journals/jal/GargVY04}.
When the number of terminal pairs is part of the input, no $2^{\log^{1-\epsilon} n}$-approximation exists for any $\epsilon > 0$ unless $\text{NP} \subseteq \text{ZPP}$~\cite{DBLP:journals/jacm/ChuzhoyK09-directed-inapprox}, even when the input graph is acyclic.

\smallskip
\noindent
\textbf{Parameterized Algorithms for Cut and Separation Problems.}
In the framework of parameterized algorithms~\cite{DBLP:books/sp/CyganFKLMPPS15-book-of-para, downey2013fundamentals}, the {\sc Multicut} and {\sc Multiway Cut} problems have been considered with respect to the parameters cut-size~$k$ and the number of terminal pairs~$p$. 
A systematic study was initiated by Marx~\cite{DBLP:journals/tcs/Marx06-FPT-open-1}, who gave an algorithm for {\sc Vertex Multiway Cut} with running time $O^*(2^{O(k^3)})$ and for {\sc Vertex Multicut} with running time $O^*(2^{O(k^3 + kp)})$. 
The question whether {\sc Vertex Multicut} and {\sc Edge Multicut} are fixed parameter tractable (FPT) with respect to parameter $k$, that is whether there exists an algorithm with running time $O^*(f(k))$ for some function $f$, was a well known open problem re-stated several times~\cite{DBLP:journals/tcs/Marx06-FPT-open-1,DBLP:journals/algorithmica/ChenLL09-FPT-open-3,DBLP:journals/eor/GuoHKNU08-graph-class, DBLP:journals/tcs/BodlaenderFHMPR10-app-fuzzy-1}.


Marx~\cite{DBLP:journals/tcs/Marx06-FPT-open-1} introduced the notion of {\em important separators} (see Section~\ref{sec:seps}), which have become a standard tool in parameterized algorithms~\cite{DBLP:books/sp/CyganFKLMPPS15-book-of-para}. Marx~\cite{DBLP:journals/tcs/Marx06-FPT-open-1} proved that there are at most $4^{O(k^2)}$ important $s$-$t$ separators of size at most $k$. 
This bound was subsequently improved to $4^k$ by Chen et al.~\cite{DBLP:journals/algorithmica/ChenLL09-FPT-open-3}, who used the improved bound to get a $O^*(4^{k})$ time algorithm for {\sc Vertex Multiway Cut}. Important separators were a key tool for the first fixed parameter tractable algorithms for {\sc Directed Feedback Vertex Set}~\cite{DBLP:journals/jacm/ChenLLOR08} and {\sc Almost $2$-SAT}~\cite{DBLP:journals/jcss/RazgonO09}, each of which resolved prominent open problems in parameterized complexity.
The current fastest parameterized algorithms for {\sc Vertex Multiway Cut}~\cite{DBLP:journals/toct/CyganPPW13} 
and
{\sc Edge Multiway Cut}~\cite{DBLP:journals/mst/Xiao10, DBLP:journals/ipl/CaoCF14} run in time $O^*(2^k)$ and $O^*(1.84^k)$, respectively, and use related, but different, techniques.

The question of whether there exists a fixed parameter tractable algorithm for {\sc Vertex Multicut} was resolved simultaneously and independently by Marx and Razgon~\cite{DBLP:journals/siamcomp/MarxR14-first-FPT} and by Bousquet, Daligault, and Thomassé~\cite{DBLP:journals/siamcomp/BousquetDT18-first-FPT-ind}, who gave algorithms for \textsc{Vertex Multicut} with running time $O^*(2^{O(k^3)})$ and $O^*(2^{O(k^{O(1)})})$ respectively. 

Marx and Razgon~\cite{DBLP:journals/siamcomp/MarxR14-first-FPT} introduced the powerful technique of {\em shadow removal}. 
Loosely speaking shadow removal works whenever input is a graph $G$, a set $W$ of terminals, and potentially some additional information, and the solution we seek is a vertex set $X$.
The {\em shadow} of $X$ is the set $R$ of all vertices in $G-X$ that can not reach $W$. 
Shadow removal works for all problems where adding vertices to a solution results in a solution, and furthermore for every solution $X$ and every vertex $v \in X$, if $v$ is in the shadow of $X - \{v\}$ then $X - \{v\}$ is also a solution (in fact it works in a slightly more general setting, see Section~\ref{sec:shadowremoval}).
A {\em shadowless} solution is a solution $X$ whose shadow is empty. The shadow removal technique of Marx and Razgon~\cite{DBLP:journals/siamcomp/MarxR14-first-FPT} allows one to assume, at the cost of a $O^*(2^{O(k^3)})$ factor in the running time, that the solution is shadowless. This often greatly simplifies the search for a solution. 

Shadow removal has found multiple applications in parameterized algorithms, including algorithms for \textsc{Directed Multiway Cut}~\cite{DBLP:journals/siamcomp/ChitnisHM13-FPT-dir-multiway-cut}, 
{\sc Directed Vertex Multicut} parameterized by $k+p$ in directed {\em acyclic} graphs~\cite{DBLP:journals/siamdm/KratschPPW15}, 
\textsc{Directed Subset Feedback Vertex Set}~\cite{DBLP:journals/talg/ChitnisCHM15-FPT-DSFVS},
certain graph clustering problems~\cite{DBLP:journals/iandc/LokshtanovM13} and a parameterized approximation algorithm for {\sc Directed Odd Cycle Transversal}~\cite{DBLP:conf/soda/LokshtanovR0Z20}.

The overhead of $O^*(2^{O(k^3)})$ for the shadow removal step was improved to $O^*(2^{O(k^2)})$ by Chitnis et al~\cite{DBLP:journals/talg/ChitnisCHM15-FPT-DSFVS}, who note that this improvement also directly translates into a faster parameterized algorithm for \textsc{Directed Multiway Cut}.
The improved shadow removal step of Chitnis et al.~\cite{DBLP:journals/talg/ChitnisCHM15-FPT-DSFVS} also applies to the shadow removal step in the algorithm for {\sc Vertex Multicut} of Marx and Razgon~\cite{DBLP:journals/siamcomp/MarxR14-first-FPT}, however Chitnis et al.~\cite{DBLP:journals/talg/ChitnisCHM15-FPT-DSFVS} do not mention this.
The reason for this omission is probably that the {\sc Vertex Multicut} algorithm of Marx and Razgon has another step, called {\em reduction to the bipedal case}, which, according to the analysis of 
Marx and Razgon also introduces a $2^{O(k^3)}$ factor in the running time. 
However, it turns out that this step only introduces a $k^{O(k^2)}$ factor in the running time, and that Marx and Razgon state the less tight upper bound of $2^{O(k^3)}$, probably because that is the running time of {\em their} shadow removal step.
Indeed, at the end of the proof of Lemma 2.6 of~\cite{DBLP:journals/siamcomp/MarxR14-first-FPT}, Marx and Razgon upper bound
$(3p + |W|^{3p})^{2p}$ by $2^{O((p + \log |W|))^3}$, while the tighter bound of $p^{O(p)} \cdot |W|^{O(p^2)}$ clearly applies (to make everything extra confusing, the $p$ in the paper of Marx and Razgon~\cite{DBLP:journals/siamcomp/MarxR14-first-FPT} corresponds to $k$ in our work.)
Hence the improved shadow removal step of Chitnis et al.~\cite{DBLP:journals/talg/ChitnisCHM15-FPT-DSFVS} actually improved the algorithm for {\sc Vertex Multicut} to $O^*(k^{O(k^2)})$.
Theorem~\ref{thm:main} further improves the running time of {\sc Vertex Multicut} to $O^*(k^{O(k)})$.

We remark that {\sc Edge Multicut} retains most of the complexity of the {\sc Vertex Multicut} problem. Prior to the works~\cite{DBLP:journals/siamcomp/MarxR14-first-FPT,DBLP:journals/siamcomp/BousquetDT18-first-FPT-ind}, the existence of a parameterized algorithm for {\sc Edge Multicut} was explicitly asked as an open problem~\cite{DBLP:journals/eor/GuoHKNU08-graph-class, DBLP:journals/tcs/BodlaenderFHMPR10-app-fuzzy-1}.
In particular, the parameterized algorithms for {\sc Edge Multicut} imply parameterized algorithms for an interesting variant of a clustering problem in which, for some pairs of vertices, we neither know nor care whether they belong to the same cluster or not~\cite{DBLP:journals/tcs/BodlaenderFHMPR10-app-fuzzy-1,DBLP:journals/tcs/DemaineEFI06-app-fuzzy-2,DBLP:journals/ml/BansalBC04-app-fuzzy-3}.
Marx and Razgon~\cite{DBLP:journals/siamcomp/MarxR14-first-FPT} explicitly call the parameterized complexity of {\sc Vertex Multicut} a ``{\em very challenging open question in the area of parameterized complexity}''.

Bousquet, Daligault, and Thomassé~\cite{DBLP:journals/siamcomp/BousquetDT18-first-FPT-ind} present their entire algorithm as an algorithm for {\sc Edge Multicut}, and only sketch how to convert their algorithm to an algorithm for {\sc Vertex Multiway Cut}.



\paragraph{Methods and Corollaries.} The algorithm of Theorem~\ref{thm:main} follows the same overall framework as the algorithm of Marx and Razgon~\cite{DBLP:journals/siamcomp/MarxR14-first-FPT}. However, we completely re-design the algorithms for both of the two most technically involved steps, namely the {\em shadow removal} step and the {\em reduction to the bipedal case} step.

Our main technical contribution is a new shadow removal procedure (for the more general {\em vertex} deletion problems) that only introduces a factor of $k^{O(k)}(\log n)^{O(1)}$ in the running time, instead of the $k^{O(k^2)}(\log n)^{O(1)}$ factor of Chitnis et al.~\cite{DBLP:journals/talg/ChitnisCHM15-FPT-DSFVS}. The improved shadow removal step immediately yields \footnote{To obtain these results, it suffices to apply the directed version of our shadow removal algorithm and replace the corresponding subprocedures in their frameworks accordingly.} a $O^*(k^{O(k^2)})$ time algorithm for \textsc{Directed Subset Feedback Vertex Set},
improving over $O^*(2^{O(k^3)})$ time algorithm of Chitnis et al.~\cite{DBLP:journals/talg/ChitnisCHM15-FPT-DSFVS},
and a $O^*(k^{O(k)})$ time algorithm for \textsc{Directed Multiway Cut}, improving over a $O^*(2^{O(k^2)})$ time algorithm of Chitnis et al.~\cite{DBLP:journals/talg/ChitnisCHM15-FPT-DSFVS}.


\begin{table}[t]
    \centering
    \begin{tabular}{llll}
        \toprule
        Problem Name & New Bound & Previous Bound & Remark \\
        \midrule
        {\sc Edge Multicut} & $O^*(k^{O(k)})$ & $O^*(k^{O(k^2)})$~\cite{DBLP:journals/talg/ChitnisCHM15-FPT-DSFVS} & Main\\
        {\sc Directed Subset Feedback Vertex Set} & $O^*(k^{O(k^2)})$ & $O^*(2^{O(k^3)})$~\cite{DBLP:journals/talg/ChitnisCHM15-FPT-DSFVS} & Corollary\\
        {\sc Directed Multiway Cut} & $O^*(k^{O(k)})$ & $O^*(2^{O(k^2)})$~\cite{DBLP:journals/talg/ChitnisCHM15-FPT-DSFVS} & Corollary\\
        \bottomrule
    \end{tabular}
    \caption{New results in this paper.}
    \label{tab:new_results}
\end{table}

\paragraph{Related Work.} 

We briefly mention a few additional results regarding parameterized algorithms and approximation algorithms for variants of the {\sc Multicut} and {\sc Multiway Cut} problems. 
%
%
In undirected graphs, \textsc{Edge Multiway Cut} admits a $1.3438$-approximation algorithm~\cite{DBLP:journals/mor/KargerKSTY04}, while \textsc{Vertex Multiway Cut} allows a $2$-approximation~\cite{DBLP:journals/jal/GargVY04}. 
In directed graphs, the edge and vertex versions coincide, and a $2$-approximation is also known~\cite{DBLP:journals/siamcomp/NaorZ01}. 
%
%

Unlike the undirected case, \textsc{Directed Multicut} is known not to admit an FPT algorithm parameterized by solution size $k$ unless $\rm FPT = W[1]$~\cite{DBLP:journals/siamcomp/MarxR14-first-FPT}. 
Therefore, most work focuses on the setting where the number of terminal pairs~$p$ is fixed. 
When $p \leq 3$, the problem is FPT~\cite{DBLP:journals/siamcomp/ChitnisHM13-FPT-dir-multiway-cut,DBLP:conf/soda/HatzelJLMPSS23}, although only a randomized FPT algorithm is known for the case $p=3$. For $p = 4$ the problem becomes W[1]-hard~\cite{DBLP:journals/toct/PilipczukW18a-dir-mul-four}. 

Variants of \textsc{Multicut} have also been studied on restricted graph classes. 
For example, \textsc{Edge Multicut} on trees is NP-hard~\cite{DBLP:journals/algorithmica/GargVY97-tree-np-hard}, 
but admits an FPT algorithm~\cite{DBLP:journals/networks/GuoN05-emc-tree-fpt} and a polynomial kernel~\cite{DBLP:conf/stacs/BousquetDTY09-emc-tree-kernel} when parameterized by the cut size~$k$. 
When parameterized by the treewidth~$tw$, both \textsc{Edge Multicut} and \textsc{Vertex Multicut} become FPT if the treewidth remains bounded after connecting each terminal pair~\cite{DBLP:journals/ipl/GottlobL07-multi-logical,DBLP:conf/ciac/PichlerRW10-multi-tree-decom} by an auxiliary edge. 
For bounded-degree graphs with bounded treewidth, \textsc{Edge Multicut} admits a PTAS~\cite{DBLP:journals/jal/CalinescuFR03-bounded-deg-tw}. 

Prior to designing their FPT algorithm for {\sc Vertex Multicut}~\cite{DBLP:journals/siamcomp/MarxR14-first-FPT}, motivated by a lack of progress on achieving an FPT algorithm for this problem, Marx and Razgon~\cite{DBLP:journals/ipl/MarxR09-approx-fpt} gave a factor $2$ approximation algorithm for {\sc Vertex Multicut} running in time $O^*(2^{O(k \log k)})$. Much more recently, motivated by a lack of progress for achieving a {\em faster} FPT algorithm for this problem, Lokshtanov et al.~\cite{DBLP:conf/soda/LokshtanovMRSZ21} give a $2$-approximation algorithm for {\sc Vertex Multicut} running in time $O^*(2^{O(k)})$.

Recently, the weighted versions of {\sc Vertex Multicut} and {\sc Multiway Cut} are proved FPT by Kim et al. \cite{DBLP:journals/siamdm/KimMPSW24}, using flow augmentation \cite{DBLP:journals/jacm/KimKPW25}.




\subsection{Proof Outline}
We give a brief outline of the proof of Theorem~\ref{thm:main}. We start with a summary of the algorithm of Marx and Razgon~\cite{DBLP:journals/siamcomp/MarxR14-first-FPT}. This algorithm proceeds in four steps. The first step is an application of the {\em iterative compression} technique, and allows us, at the cost of a $O^*(k^{O(k)})$ overhead in the running time, to solve a {\sc Vertex Multicut Compression} instance instead. 
Here, in addition to the instance $(G, \mathbf{T}, k)$ we are given a vertex set $W$ of size at most $k+1$, such that $W$ is a solution to $(G, \mathbf{T}, k)$, and we are looking for a solution $X$ to $(G, \mathbf{T}, k)$ which is disjoint from $W$, and simultaneously is a multiway cut for $W$. In particular no connected component of $G-X$ should contain more than one vertex of $W$.

The second step of the algorithm is shadow removal. Recall that the shadow of a solution $X$ is the set of vertices of $G$ that can't reach $W$ in $G-X$. A solution $X$ of a {\sc Vertex Multicut Compression} instance is {\em shadowless} if its shadow is empty.

Marx and Razgon design an algorithm that given an instance $(G, \mathbf{T}, W, k)$ of {\sc Vertex Multicut Compression} and outputs $2^{O(k^3)}$ instances so that $(G, \mathbf{T}, W, k)$ is a \textbf{yes}-instance if and only if one of the output instances has a shadowless solution of size at most $k$.

This is the step that was improved by Chitnis et al.~\cite {DBLP:journals/talg/ChitnisCHM15-FPT-DSFVS}, they are able to produce only $2^{O(k^2)}$ instances while maintaining the same property. 

The third step is to reduce {\sc Vertex Multicut Compression} to it self but with {\em bipedal} instances. Here an instance $(G, \mathbf{T}, W, k)$ is bipedal if every connected component $C$ of $G - W$ satisfies $|N_G(C)| \leq 2$. The algorithm  of Marx and Razgon takes as input an instance $(G, \mathbf{T}, k)$ of {\sc Vertex Multicut Compression} and produces at most $2^{O(k^3)}$ (in fact at most $k^{O(k^2)}$, as discussed above) bipedal instances such that $(G, \mathbf{T}, W, k)$ admits a shadowless solution of size at most $k$ if and only if one of the output instances is.  

The final step of the algorithm of Marx and Razgon is to find shadowless solutions to bipedal {\sc Vertex Multicut Compression} instances. Marx and Razgon show that this step can be done in time $O^*(2^{O(k)})$ by a beautiful reduction to the {\sc Almost $2$-SAT} problem. This concludes the summary of the algorithm of Marx and Razgon. 


\paragraph{Framework of Our Algorithm.}
Our algorithm for {\sc Vertex Multicut} has the same framework as the algorithm of Marx and Razgon, but implements the two bottleneck stages (namely shadow removal and reduction to the bipedal case) in  completely new ways. 

The first step of iterative compression is essentially the same. At the cost of a $O^*(k^{O(k)})$ overhead in the running time the problem reduces to solving an {\sc Vertex Multicut Compression}. 
Here, in addition to the instance $(G, \mathbf{T}, k)$ we are given a vertex set $W$ of size at most $k$, such that $W$ is a solution to $(G, \mathbf{T}, k)$, and we are looking for an vertex set solution $X \subseteq V(G)$ to $(G, \mathbf{T}, k)$ such that, additionally, $X$ is a multiway cut for $W$.

The second step of our algorithm is an improved shadow removal step that takes an instance $(G, \mathbf{T}, W, k)$ of {\sc Vertex Multicut Compression} and outputs $k^{O(k)}$ instances so that $(G, \mathbf{T}, W, k)$ is a \textbf{yes}-instance if and only if one of the output instances has a shadowless solution of size at most $k$ \footnote{Actually what we get is slightly stronger. If the original instance is a \textbf{no}-instance, then the output instances are all \textbf{no}-instances.}.

We completely re-design the reduction to bipedal instances, and obtain an algorithm with following specifications. Given an instance $(G, \mathbf{T}, W, k)$ of {\sc Vertex Multicut Compression}, we produce $k^{O(k)}$ bipedal instances, such that $(G, \mathbf{T}, W, k)$ is a \textbf{yes}-instance with a shadowless solution if and only if one of the output instances is. Thus, at the cost of a factor of $k^{O(k)}$ in the running time we may assume that the input instance is bipedal. 

In the fourth and final step we just invoke the $O^*(2^{O(k)})$ time algorithm of Marx and Razgon for finding shadowless solutions for bipedal instances of {\sc Vertex Multicut Compression}.

We now sketch how we implement the two steps where our algorithm crucially differs from the one of Marx and Razgon, namely the shadow removal step and the reduction to bipedal instances. Out of the two, the shadow removal step is the one that appears to have the widest applicability, so we present it first. 


\paragraph{Improved Shadow Removal.}
We consider the following general setting, which encompasses {\sc Vertex Multicut Compression}. Suppose that we are working with a graph $G$, which has a special set $W$ of vertices. Every subset $X \subseteq V(G) - W$ is either a solution or not. This can for example be formalized as a function $\Phi$ that takes as input a pair $(G, W)$ and outputs a subset of $2^{V(G) - W}$, where $X \in \Phi(G,W)$ means that $X$ is a solution. 
The {\em shadow} of a vertex set $X$ is the set of vertices that can't reach $W$ in $G-X$.

We will concern ourselves only with problems (i.e. functions $\Phi$) that satisfy the following two axioms: {\em (i)} if $X$ is a solution and $Y \subseteq V(G) - W$, then $X \cup Y$ is a solution. {\em (ii)} If $X \cup \{v\}$ is a solution and $v$ is in the shadow of $X$, then $X$ is a solution. 
It is not too difficult to verify that {\sc Vertex Multicut Compression} satisfies the two axioms above, indeed {\em (i)} trivially holds. For {\em (ii)} suppose that $X \cup \{v\}$ is a solution and $v$ is in the shadow of $X$, and let $P$ be a path from  $s_i, t_i$ for $(s_i, t_i) \in \mathbf{T}$ or between two distinct vertices of $W$.
Then $X \cup \{v\}$ intersects $V(P)$, since $X \cup \{v\}$ is a solution. If  $X$ intersects $V(P)$ we are done, so suppose $v \in V(P)$. Since $W$ is a solution, $W$ intersects $P$. But then $P$ connects $v$ to $W$, and thus, since $v$ is in the shadow of $X$, $V(P)$ intersects $X$ as well. So $X$ is a solution. Hence  {\sc Vertex Multicut Compression} satisfies {\em (ii)}.

The main step in the shadow removal algorithm (ours, of Marx and Razgon~\cite{DBLP:journals/siamcomp/MarxR14-first-FPT}, and of Chitnis et al.~\cite{DBLP:journals/talg/ChitnisCHM15-FPT-DSFVS}) is a randomized algorithm, which we call {\em shadow covering}, with the following property. 
Given $(G,W,k)$ the procedure outputs a vertex set $Z$. If there exists a solution $X'$ of size at most $k$ then there exists a solution $X$ of size at most $k$, such that on every call of the procedure on $(G,W,k)$, with probability at least $p$ (the {\em success probability}) the output set $Z$ is disjoint from $X$ and $Z$ contains the shadow of $X$.

Given a shadow covering procedure with success probability $p$ one can get a randomized shadow removal algorithm that produces $O(1/p)$ instances and succeeds with probability $1-c$ for any $c > 0$. 
All we need is a graph operation that allows us to remove vertices from the graph that are not allowed to be selected into the solution (since we safely can declare the vertices in $Z$ as such unselectable vertices). 
Such an operation for {\sc Vertex Multicut Compression} was given by Marx and Razgon. Essentially, all we need to do is to remove vertices $z$ in $Z$ from the graph one by one, when $z$ is deleted its neighborhood is turned into a clique. Some care must be taken to also preserve the terminal pairs in $\mathbf{T}$ whose endpoints are deleted. 
Furthermore, all known shadow removal procedures (including ours) can be de-randomized in a standard way using splitters~\cite{DBLP:conf/focs/NaorSS95} at a negligible cost to the running time and the number of output instances. Hence in this overview we will just focus on giving a shadow covering procedure. 

Marx and Razgon~\cite{DBLP:journals/siamcomp/MarxR14-first-FPT} give a shadow covering procedure with success probability $2^{-O(k^3)}$, Chitnis et al.~\cite{DBLP:journals/talg/ChitnisCHM15-FPT-DSFVS} design one with success probability  $2^{-O(k^2)}$, while we now give one with success probability $k^{-O(k)}$.


\paragraph{Improved Shadow Covering.}
Just as the previous shadow covering procedures, ours relies on {\em important separators}. We give a very brief introduction to important separators tailored to our particular setting. For actual definitions, see Section~\ref{sec:prelim}.
We consider the setting where we are given an instance $(G, W, k)$ of a problem which satisfies the axioms {\em (i)} and {\em (ii)} of shadow removal. 

Let $v$ be a vertex of $G$. A set $X$ disjoint from $v$ and $W$ is a $v-W$ {\em separator} if the component of $G-X$ containing $v$ does not contain any vertex of $W$. 
A $v-W$ separator $X$ is a {\em minimal} $v-W$ {\em separator} if no proper subset of $X$ is also a $v-W$ separator.
A minimal $v-W$ separator $X$ is an {\em important} $v-W$ {\em separator} if there is no $v-W$ separator $X'$ distinct from $X$ such that $|X'| \leq |X|$ and the (vertex set of the) component $C'$ of $G - X'$ containing $v$ contains the component $C$ of $G - X$ containing $v$.
The important thing to know about important separators is that, for every $G, W, v, k$ there are at most $4^k$ important $v-W$ separators of size at most $k$ in $G$~\cite{DBLP:journals/algorithmica/ChenLL09-FPT-open-3, DBLP:books/sp/CyganFKLMPPS15-book-of-para}.

The two axioms {\em (i)} and {\em (ii)} imply that, if there exists a solution $X'$ to $(G, W)$ of size at most $k$ then there exists a solution $X$ of size at most $k$ such that {\em (a)} for every vertex $v$ in the shadow of $X$, the neighborhood of the component $C$ of $G-X$ that contains $v$ is an important $v-W$ separator, {\em (b)} for every $v$ in $X$ there is a path from $v$ to $W$ disjoint from $X'-\{v\}$.
The existence of $X$ given $X'$ is a step which is done in every previous paper that does shadow removal, so we do not repeat it here (it is the now standard ``pushing'' argument for important separators). 
We give a shadow covering procedure that outputs a vertex set $Z$, such that for every solution $X$ that satisfies {\em (a)} and {\em (b)}, $Z$ is disjoint from $X$ and covers the shadow of $X$ with probability at least $k^{-2k}/e$. 

The procedure relies on the following interesting property of important separators. For every $(G, W, v, k)$ there exists a vertex set $Q$ of size at most $k$, such that for every important $v-W$ separator $S$ of size at most $k$, $Q$ has non-empty intersection with $S$.
To the best of our knowledge, this lemma was first proved by Korhonen and Lokshtanov~\cite{DBLP:conf/stoc/KorhonenL23}. 
We need a slight strengthening of this lemma, namely that for any vertex set $X$ that does not separate $v$ from $W$ there exists a vertex set $Q$ of size at most $k$, {\em disjoint from} $X$, such that for every important $v-W$ separator $S$ of size at most $k$, $Q$ has non-empty intersection with $S$.
The proof of this strengthened version is essentially the same as the proof of Korhonen and Lokshtanov~\cite{DBLP:conf/stoc/KorhonenL23}, which in turn is very similar to the proof of the upper bound~\cite{DBLP:journals/algorithmica/ChenLL09-FPT-open-3, DBLP:books/sp/CyganFKLMPPS15-book-of-para} of $4^k$ on the number of important $v$-$W$ separators of size at most $k$. 

We can now describe the shadow covering procedure: Color every vertex red with probability $1/k^2$ and blue with probability $1-1/k^2$. Let $Z$ be the set of vertices $v$ such that at least one important $v-W$ separator of size at most $k$ is completely red. 

Let $X$ be a solution of size at most $k$ that satisfies {\em (a)} and {\em (b)}. 
For every $v \in X$, we may apply the important separator hitting set lemma to $v$ and $X-\{v\}$ because $X$ satisfies {\em (b)}.
Thus, for every $v$ in $X$ let $Q_v$ be the hitting set for important $v-W$ separators of size at most $k$ disjoint from $X-\{v\}$.
We need to lower bound the probability that $Z$ is disjoint from $X$ and $Z$ contains the shadow of $X$. 
It suffices to lower bound the probability that $X$ is red (since this implies that $Z$ contains the shadow of $X$, by {\em (a)}) and that for every $v \in X$, all of $Q_v$ is blue (since this prevents every important $v-W$ separator from being completely red, implying that $Z$ is disjoint from $X$).
The probability that $X$ is red is at least $(1/k^2)^k = k^{-2k}$.
Further, the probability that $\bigcup_{v \in X} Q_v$ is blue is at least $(1-1/k^2)^{k^2} \geq 1/e$.
Hence $Z$ contains the shadow of $X$ and is disjoint from $X$ with probability at least $k^{-2k}/e$.


\paragraph{Reduction to Bipedal Instances.}

We now describe our reduction from {\sc Vertex Multicut Compression} to its bipedal instances. The reduction relies on the LP relaxation of {\sc Vertex Multiway Cut}. For a {\sc Vertex Multiway Cut} instance $(G,W,k)$, we can build a linear program where the variable set is $\{d_v:v\in V(G)\setminus W\}$ and the constraints are $\sum_{v\in V(P)\setminus W} d_v\geq 1$ for every path $P$ connecting two different vertices in $W$. {\sc Vertex Multiway Cut} is equivalent to the integer version of this LP. In the relaxation version, we allow variables to take any non-negative real value. Let's call this LP $LP_{VMC}$ for now. It is known that $LP_{VMC}$ is half-integral~\cite{DBLP:journals/jal/GargVY04,DBLP:conf/iwpec/Guillemot08a}, i.e. there is always an optimal solution in which every variable takes value $0,1$ or $1/2$. This property is used to develop an algorithm for \textbf{Vertex Multiway Cut} running in time $O^*(2^{k})$\cite{DBLP:journals/toct/CyganPPW13}.

The optimal value of relaxation version serves as a lower bound for the integer version. Recall that given an instance $I=(G,\mathbf{T},W,k)$ of {\sc Vertex Multicut Compression}, any solution of the instance is a multiway cut for $W$. Thus, if $k<OPT(LP_{VMC})$ we can output \textbf{no} for the instance. Starting from this observation, our first branching rule is to try to pick a vertex $v\in V(G)\setminus W$ such that if we force $d_v=0$, $OPT(LP_{VMC})$ increases by at least $1/2$. Such a vertex is called a \emph{non-zero} vertex.

If there is a non-zero vertex $v$, then we can branch into two cases: the optimal solution for $I$ contains $v$ or not. If it contains $v$, we branch into the instance $(G-v, \mathbf{T}-v,W-v,k-1)$, where $\mathbf{T}-v$ means deleting every pair containing $v$ from $\mathbf{T}$; if it does not contain $v$, we branch into the instance where $v$ is forced not to be selected (can be encoded using graph torso). In the first branch, $k$ decreases by $1$ and in the second branch, $OPT(LP_{VMC})$ increases by $1/2$. Thus, $2k-OPT(LP_{VMC})$ always decreases by at least $1/2$. Indeed, we use $2k-OPT(LP_{VMC})$ as our measure.

In the case where there is no non-zero vertex, the above branching rule does not apply. Here we introduce the notion of isolating min separators. Note that this notion and the related branching ideas are also used in previous works~\cite{DBLP:journals/mst/Xiao10}.

An {\em isolating $w$-separator} is a vertex set $N(C)$ for a vertex set $C$ that contains $w$ and $C\cup N(C)$ is disjoint from $W-\{w\}$. Its cost is $|N(C)|$.
For every $w \in W$, let $mc(G, w, W \setminus \{w\})$ be the minimum cost among all isolating $w$-separators. Any isolating $w$-separator with this cost is called an isolating $w$-min-separator.

A critical observation is that, if there is no non-zero vertex, then it holds that $2OPT(LP_{VMC})=\sum_{w\in W} mc(G,w,W\setminus \{w\})$. We state this observation in Lemma~\ref{lem:2LP=mincut}. We sketch the proof of this lemma in the following.

It's easier to show that $2OPT(LP_{VMC})\leq \sum_{w\in W} mc(G,w,W\setminus \{w\})$. Fix a set of isolating $w$-min-separators $C_w$ for each $w\in W$. Let $d_v=1/2$ if $v$ is in one $C_w$, $d_v=1$ if $v$ is in two (or more) separators $C_{w_1},C_{w_2}$ and $d_v=0$ for all vertices not in any isolating min-separators. It's easy to check this is a feasible solution for $LP_{VMC}$, and the total cost is at most $0.5\sum_{w\in W} mc(G,w,W\setminus \{w\})$.

To show that $2OPT(LP_{VMC})\geq \sum_{w\in W}mc(G,w,W\setminus\{w\}$, we aim at constructing a half-integral optimal solution $\{d_v^*\}$ for $LP_{VMC}$ with a certain structure as following. For each $w\in W$, let $U_w$ be the set of vertices that is connected to $w$ by a path $P$ such that $d_v^*=0$ for all vertices in $P\setminus \{W\}$. $U_w$ is called the \emph{zero region} of $w$. We want our LP solution to satisfy that (i) $\forall w_1\neq w_2\in W,N(U_{w_1})\cap N(U_{w_2})=\emptyset$ and (ii) $\forall w\in W, \forall v\in N(U_w),d_v^* = 1/2$, while all other vertices take $0$ as their values. Suppose such a solution exists as an optimal solution. Then $OPT(LP_{VMC}) = 0.5\sum_{w\in W} |N(U_w)|$. Since $N(U_w)$ is an isolating $w$-separator, we have $\sum_{w\in W} |N(U_w)|\geq \sum_{w\in W}mc(G,w,W\setminus \{w\})$. It remains to show such a solution indeed exists.

We consider the \emph{dual} LP of $LP_{VMC}$, in which we assign each path $P$ connecting two different vertices in $W$ a fractional value $f_P\geq 0$ under the constraints that for each vertex $v\in V\setminus W, \sum_{v\in P} f_P\leq 1$. Let's pick an optimal solution $\{f_P^*\}$ such that the number of \emph{tight} vertices is minimized among all optimal solutions. A vertex $v$ is tight if and only if $\sum_{v\in P} f_P = 1$. We build $\{d_v^*\}$ based on $\{f_P^*\}$. For each $w\in W$, we set $U_w$ to be the set of vertices reachable from $w$ using an untight path $P$, i.e. all vertices are untight in $P\setminus \{w\}$. Then we assign $d_v^* = 1/2$ for every $v\in N(U_w)$ for some $w\in W$, and all other vertices $0$.

The constructed solution $\{d_v^*\}$ is feasible, and satisfies (i) and (ii), because if there is a vertex $v\in N(U_{w_1})\cap N(U_{w_2})$ for some $w_1\neq w_2\in W$, then by the well known complementary slackness property of LP, there is a path passing through $v$ connecting $w_1$ and $w_2$ with all vertices in it $0$ except for $v$. That means $v$ has to be a non-zero vertex, contradicting our assumption that there is no non-zero vertex. To show that $\{d_v^*\}$ is optimal, we argue that under the assumption that there is no non-zero vertex and the solution we pick has minimum number of tight vertices, every positive path (i.e. a path $P$ with $f_P^* >0$) passes through exactly two vertices with value $1/2$ (for details, see Lemma~\ref{lem:path-pass-two-N(U)}). By a counting argument, this implies that $\{d_v^*\}$ is optimal.  

Knowing that $2OPT(LP_{VMC})=\sum_{w\in W} mc(G,w,W\setminus \{w\})$, we can apply our second branching rule. We need to introduce the notion of farthest isolating min separators. A farthest isolating $w$-min-separator is the isolating $w$-min-separator $N(C)$ such that $|C|$ is maximized. Such a separator is unique and can be found using residual flow in polynomial time (see e.g.~\cite{DBLP:books/sp/CyganFKLMPPS15-book-of-para}). In our branching rule, we guess if there is \emph{contractible} solution for $I$. A solution $X$ is contractible
if and only if
\begin{itemize}
    \item $X$ is disjoint with the farthest isolating $w$-min-separator for each $w\in W$;
    \item and for each farthest isolating $w$-min-separator $N(C)$, every vertex $v\in N(C)$ is connected to some $W\setminus\{w\}$ in $G-X$. 
\end{itemize}
Note that, we name such a solution ``contractible'', because the corresponding instance can be made bipedal after applying some certain contractions.

If there is such a solution which is also shadowless, we keep the instance and proceed it into a bipedal instance later; if not, there has to be a farthest isolating $w$-min-separator $N(C)$ and there is a vertex $v\in N(C)$ such that $v$ is either in a shadowless solution $X$ or connected to $w$ in $G-X$. Notice that since $X$ is shadowless, if $v\notin X$, $v$ has to be connected to some $w\in W$ in $G-X$. Thus, for each farthest isolating $w$-min-separator $N(C)$ and every $v\in N(C)$, we create two branches. In the first branch, we delete $v$ and our measure decreases by at least one. In the second branch, we contract $v$ with $w$. In this branch, $OPT(LP_{VMC})$ does not increase since $v$ is a non-zero vertex. However, then $mc(G,w,W\setminus\{w\})$ increases, and $2OPT(LP_{VMC})=\sum_{w\in W} mc(G,w,W\setminus \{w\})$ holds no more. Hence we can immediately make progress using our first branching rule.

When the branching algorithm terminates, we get a set of instances that are assumed to admit a contractible shadowless solution, which we didn't apply any further branching rule on. These instances are, actually, very close to being bipedal. Let $N(\FC{w})$ be the farthest isolating $w$-min-separator for each $w\in W$ where $\FC{w}$ denotes the reachable set. We partition $V(G)$ by these separators. To be specific, for a set $R\subseteq W$, let $\Partfull{W}{G}{R}$ be the set of vertices that are in $\cap_{w\in R}\FC{w}$ but not in $\cup_{w\notin R}\FC{w}$. 

By analyzing the structural property of $\Partfull{W}{G}{R}$ under the assumption that there is a shadowless contractible solution, we can obtain $P_W(G,\emptyset)=\emptyset$. Intuitively, since there is a contractible solution, the inner boundary of $\Partfull{W}{G}{\emptyset}$, which is a subset of $\cup_{w\in W} N(\FC{w})$ is already separated from $W$ by the solution since the solution is contractible, and shadowless-ness implies they are empty (see Lemma~\ref{lem:compactible-p-empty}). 

Moreover, if there is an edge connecting $\Partfull{W}{G}{R_1}$ and $\Partfull{W}{G}{R_2}$ for some $R_1\neq R_2$, then either $|R_1|=|R_2|=1$ or (w.l.o.g. assume $|R_1|\leq |R_2|$) $|R_1|=1,|R_2|=2$ and $R_1\subseteq R_2$. This is proven by a simple observation: since $\Partfull{W}{G}{\emptyset}=\emptyset$, we have that
$\bigcup_{w\in W}N(V\setminus{\Part{\{w\}}})\subseteq \bigcup_{w\in W} N(\FC{w})$, implying that $|R_1|$ has to be of size $1$. Also because every $N(V\setminus \Part{\{w\}})$ is an isolating-$w$ separator, we have $\sum_{w\in W}|N(V\setminus{\Part{\{w\}}})|\geq \sum_{w\in W}|N(\FC{w})|$. Hence, the farthest isolating min separators are mutually disjoint, implying that $|R_2\setminus R_1|\leq 1$.
If we think of each $P_W(G,\{w\})$ as the ``terminals'', then the above tells us that every part $P_W(G,X)$ where $|X|\neq 1$ ``touches'' at most two terminals. Thus, we can contract $N(V\setminus P_W(G,w))$ to $w$. The produced instance is promised to be bipedal.

\paragraph{Paper Organization.}
The remainder of this paper is organized as follows. 
Section~\ref{sec:prelim} introduces necessary definitions, notation, and basic tools used throughout the paper. 
Section~\ref{sec:framework} outlines the overall framework of our improved FPT algorithm. 
Section~\ref{sec:shadowremoval} presents shadow removal algorithm and prove the improved bounds.
In Section~\ref{sec:branching}, we describe our branching algorithm.
We conclude in Section~\ref{sec:conclusion} with a discussion of possible extensions, open problems, and directions for future research.

\section{Preliminary}\label{sec:prelim}



\subsection{Basics}

Graphs considered in this paper are undirected by default, without self-loops and parallel edges. A graph is denoted by $G=(V,E)$, where $V=V(G)$ is its set of vertices and $E=E(G)$ is its multiset of edges. We use $n$ to denote the number of vertices in $G$ when $G$ is clear in the context.  An edge between $u$ and $v$ is written as an unordered pair ``$(u,v)$'' and is also considered as a set of two vertices, so $u\in (u,v)$ denotes $u$ is an endpoint of edge $(u,v)$. 
Let $S\subseteq V(G)$ be a vertex set. We denote by $G - S$ the resulting graph after removing all vertices in $S$ and all edges incident to vertices in $S$. For a set of vertex pairs $\mathbf{T}$, we denote by $\mathbf{T}-S$ the subset of $\mathbf{T}$ obtained by removing all vertex pairs containing vertices in $S$. Similarly, for an edge set $X\subseteq E(G)$ we denote by $G - X$ the resulting graph after removing all edges in $X$.

For a vertex set $S\subseteq V$, we define its open neighborhood as $N_G(S) = \{v\mid \exists u\in S,v\notin S,(u,v)\in E\}$. The closed neighborhood is then defined as $N_G[S] = N_G(S)\cup S$. We denote the set of edges going out of $S$ as $\partial_G (S) = \{(u,v)\mid u\in S,v\notin S,(u,v)\in E\}$. If the set $S$ is a singleton, i.e. $S=\{v\}$, we may simplify the notation to $N_G(v)$, $N_G[v]$, and $\partial_G(v)$. We often drop the subscript $G$ when the graph is clear from the context. 
We define the \emph{inner boundary} of a vertex set $S$ as $B(S)=N(V\setminus S)$.

In this paper, we use $[l,r]$ to denote the set of integers $\{l,l+1,l+2,...,r-1,r\}$, and let $[r]=[1,r]$. 
A path $P$ is defined as a sequence of vertices $(v_1,v_2,v_3,\dots,v_l)$ such that for all $i\in [l-1]$, either $v_i=v_{i+1}$ or $(v_i,v_{i+1})\in E$. We would say $P$ connects $v_1$ and $v_l$. The vertices $v_2,v_3,\dots,v_{l-1}$ are the internal vertices of $P$. For a vertex set $S\subseteq V$, we say a path is an $S$-path if all its internal vertices are in $S$. A subpath of $P$ is a continuous subsequence of $P$ (which is also a path). 

We say two vertices are connected if there is a path that connects them. A connected component is a maximal vertex set in which all pairs of vertices are connected. For two vertex sets $S, T\subseteq V$, a path connects them if it connects a vertex in $S$ and a vertex in $T$. Note that a vertex is always considered as connected to itself.

\subsection{Separators}\label{sec:seps}

Let $G$ be an undirected graph and $S,T\subseteq V(G)$ be disjoint vertex sets.
An \emph{$S-T$ separator} is vertex set $X \subseteq V(G)\setminus (S \cup T)$ such that there are no paths from $S$ to $T$ in graph $G - X$. 

Let $X$ be any minimal $S-T$ separator. Let $C$ be the set of all vertices which are connected to $S$ in $G-X$. Since $N(C)$ are not connected to $S$ in $G-X$, we have $N(C)\subseteq X$. By minimality, we have $X=N(C)$. Thus, we also use $N(C)$ to denote a minimal separator. When this notation is used, $C$ implicitly refers to the set of vertices connected to $S$ in $G-X$. 

Let $X=N(C)$ be a minimum $S-T$ separator, which is thus also a minimal one. We say $X$ is farthest with respect to $S$ if for all $C'\supset C$, it holds that $|N(C)|<|N(C')|$. The following lemma states the uniqueness of the farthest minimum separator, which can be found in Chapter 8 in the textbook of parameterized algorithms~\cite{DBLP:books/sp/CyganFKLMPPS15-book-of-para}.

\begin{lemma}\label{lem:unique-fc}
    If there exists an $S-T$ separator, then there exists a farthest $S-T$ minimum separator which is unique and can be found in $n^{O(1)}$ time.
\end{lemma}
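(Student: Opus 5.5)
We prove Lemma~\ref{lem:unique-fc} via submodularity of the vertex-cut function together with max-flow.

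Assume an $S-T$ separator exists, so the minimum size $\lambda$ of an $S-T$ separator is finite. Consider the family $\mathcal{C}$ of all vertex sets $C \supseteq S$ with $C \cap T = \emptyset$, $N(C)\cap T=\emptyset$, and $|N(C)| = \lambda$. Every such $N(C)$ is an $S-T$ separator, and every minimum separator $X$ arises this way with $C$ the set of vertices reachable from $S$ in $G-X$. So $\mathcal{C}$ is nonempty.

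The key step is the posimodularity/submodularity inequality
\[
|N(A\cup B)| + |N(A\cap B)| \le |N(A)| + |N(B)|
\]
for all $A,B\subseteq V$. To verify it, count each vertex $v\notin A\cup B$ on both sides and check case by case whether $v \in A$, $v\in B$, or $v$ has neighbors in them. For example:
\begin{itemize}
\item A vertex in $N(A\cup B)$ lies in $N(A)$ or in $N(B)$.
\item A vertex in $N(A\cap B)$ that is outside $A\cup B$ lies in both $N(A)$ and $N(B)$.
\item A vertex in $N(A\cap B)$ that lies in $A\setminus B$ is in $N(B)$ and is not in $N(A\cup B)$.
\end{itemize}

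Applying the inequality to $A,B\in\mathcal{C}$: both $A\cup B$ and $A\cap B$ contain $S$ and avoid $T$ together with their neighborhoods, so both have neighborhood size at least $\lambda$. Since the right side equals $2\lambda$, both have neighborhood size exactly $\lambda$, and so $\mathcal{C}$ is closed under union and intersection.

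Uniqueness follows. Let $C^*$ be the union of all members of $\mathcal{C}$; by closure under union, $C^*\in\mathcal{C}$. Then $N(C^*)$ is a minimum separator, and no proper superset $C'\supset C^*$ has $|N(C')|=\lambda$ with $C'$ avoiding $T$. We should also handle $C'$ that intersect $T$ or whose neighborhood meets $T$; in the intended reading of the definition, $C'$ ranges over sets yielding $S-T$ separators, so these are excluded. Hence $N(C^*)$ is farthest. Conversely, if $N(C)$ is farthest, then $C\cup C^*\in\mathcal{C}$ contains $C$, so farthestness forces $C = C\cup C^*$, i.e.\ $C^* \subseteq C$; since $C^*$ is maximal, $C=C^*$.

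For the algorithm, split each vertex $v\notin S\cup T$ into $v_{in}\to v_{out}$ with capacity $1$ and give all other arcs infinite capacity, then compute a maximum flow from $S$ to $T$ in polynomial time. In the residual graph, let $R$ be the set of vertices that can reach $T$. Taking $C^*$ to be the vertices $v$ whose $v_{in}$ is not in $R$ (together with the corresponding restriction to original vertices) yields the farthest minimum cut. This is the standard argument that the set of vertices that cannot reach the sink in the residual network gives the unique maximal min-cut side.

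The main obstacle is checking the vertex-version of submodularity carefully and translating the residual construction back from the split graph. The case analysis is routine but fiddly, and I would cite Chapter 8 of \cite{DBLP:books/sp/CyganFKLMPPS15-book-of-para} for the flow-based computation.
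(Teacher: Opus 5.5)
Your argument is correct and is essentially the standard one from Chapter~8 of Cygan et al., which is all the paper gives for this lemma: submodularity of $|N(\cdot)|$ makes the minimum-separator sides closed under union and intersection, and residual-graph reachability computes the maximal side. There are two small things to fix. First, read the side off as $\{v : v_{out}\notin R\}$, since your $v_{in}$ version also swallows the separator vertices. Second, your $C^*$ can strictly contain the set reachable from $S$ in $G-N(C^*)$, because it absorbs components containing no vertex of $S\cup T$; so under the paper's convention you should state farthestness and uniqueness for the separator $N(C^*)$, not for the set $C^*$. Likewise, your claim that supersets $C'$ meeting $T$ ``are excluded'' does not follow from their yielding separators. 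For instance, $C'=V\setminus N(C^*)$ still has $N(C')=N(C^*)$, a separator of size $\lambda$. That restriction has to be built into the definition, exactly as your family $\mathcal{C}$ does.
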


Based on this lemma, a theory about ``important separators'' is established~\cite{DBLP:books/sp/CyganFKLMPPS15-book-of-para}.

\begin{definition}[Important Separator]
    Let $G=(V,E)$ be a graph. Fix source $S\subseteq V$ and sink $T\subseteq V$ where $S\cap T =\emptyset$. A minimal $S-T$ separator $N(C)$ is an important separator if, for any minimal $S-T$ separator $N(C')$ s.t. $C'\supset C$, it holds that $|N(C')|> |N(C)|$.
\end{definition}

It's easy to see the farthest minimum separator is an important separator. Moreover, we have the following lemma:

\begin{lemma}\label{lem:include_minsep}
    Let $N(C^*)$ be the farthest minimum $S-T$ separator. Then for any $S-T$ important separator $N(C)$, it holds that $C^*\subseteq C$.
\end{lemma}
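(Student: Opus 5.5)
We argue by contradiction using submodularity of the neighborhood function $C\mapsto |N(C)|$. Let $N(C)$ be an important $S-T$ separator and suppose $C^*\not\subseteq C$. Here $C$ and $C^*$ are the sets of vertices reachable from $S$ after deleting the respective separators. Both contain $S$, and both are disjoint from $T$ and from their own neighborhoods. We will compare $N(C\cup C^*)$ with $N(C)$ and derive a violation of importance.

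The key tool is the standard submodular inequality for vertex neighborhoods in undirected graphs:
\[
|N(C\cup C^*)| + |N(C\cap C^*)| \le |N(C)| + |N(C^*)|.
\]
We verify it by counting. Take a vertex $v$ in $N(C\cup C^*)$. It lies outside both sets and is adjacent to at least one of them, so it lies in $N(C)$ or in $N(C^*)$. Now take $v\in N(C\cap C^*)$ that does not lie in $C\cup C^*$. Such a $v$ is adjacent to $C\cap C^*$, so it lies in both $N(C)$ and $N(C^*)$, and hence is counted twice on the right. Finally take $v\in N(C\cap C^*)$ with $v\in C\setminus C^*$. Then $v$ is adjacent to $C^*$ and lies outside it, so $v\in N(C^*)$. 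This $v$ is not in $N(C\cup C^*)$, so it is counted once on the left and at least once on the right. The case $v\in C^*\setminus C$ is symmetric. Summing these cases gives the inequality.

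Next we check that both $C\cap C^*$ and $C\cup C^*$ yield $S-T$ separators. Both sets contain $S$ and avoid $T$. Their neighborhoods avoid $T$ as well, since $N(C\cup C^*)\subseteq N(C)\cup N(C^*)$ and $N(C\cap C^*)\subseteq N(C)\cup N(C^*)$, and $N(C), N(C^*)$ are disjoint from $S\cup T$. So $N(C\cap C^*)$ is an $S-T$ separator. Since $N(C^*)$ is a minimum separator, $|N(C\cap C^*)|\ge |N(C^*)|$. Plugging this into submodularity gives $|N(C\cup C^*)|\le |N(C)|$.

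The remaining step is to pass from $C\cup C^*$ to a minimal separator, because the definition of important separator only compares against minimal separators $N(C')$. Let $C'$ be the set of vertices reachable from $S$ in $G-N(C\cup C^*)$. Then $C'\supseteq C\cup C^*$: every vertex of $C\cup C^*$ is reachable from $S$ inside $C$ or inside $C^*$, and all such vertices avoid $N(C\cup C^*)$. We also have $N(C')\subseteq N(C\cup C^*)$, and $N(C')$ is minimal by the discussion in Section~\ref{sec:seps}. Now $C'\supseteq C\cup C^*\supsetneq C$, because $C^*\not\subseteq C$, and $|N(C')|\le |N(C)|$. This contradicts the importance of $N(C)$. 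Hence $C^*\subseteq C$. Note that farthestness of $C^*$ is not needed here; minimality of $|N(C^*)|$ suffices.

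The main obstacle is the careful case analysis behind the submodular inequality for vertex neighborhoods, in particular the vertices of $N(C\cap C^*)$ that lie inside $C\setminus C^*$ or $C^*\setminus C$. The rest is bookkeeping.
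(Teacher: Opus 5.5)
The paper does not prove this lemma itself; it imports it from Chapter 8 of the Cygan et al.\ textbook, where it is proved by the same submodularity argument you give. Your submodularity check, your verification that $N(C\cap C^*)$ is an $S-T$ separator, and your remark that only minimality of $|N(C^*)|$ is used are all correct.

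There is one flawed step. You claim that $N(C')$ is a minimal separator ``by the discussion in Section~\ref{sec:seps}''. That discussion proves the converse: a minimal separator $X$ equals $N(C)$, where $C$ is the set reachable from $S$ in $G-X$. The implication you need is false in general, even in exactly your situation.

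Take $S=\{s\}$, $T=\{t\}$ and edges $sa,sb,ax,ax',x't,by,xy,yt$.
\begin{itemize}
\item $N(\{s,a\})=\{b,x,x'\}$ is a minimal separator.
\item $N(\{s,b\})=\{a,y\}$ is a minimum separator.
\item $\{s,b\}\not\subseteq\{s,a\}$.
\item Your $C'$ is $\{s,a,b\}$, with $N(C')=\{x,x',y\}$. This is not minimal, because $\{x',y\}$ already separates $s$ from $t$.
\end{itemize}

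The repair is the shrinking step the paper uses in the proof of Lemma~\ref{lem:impsep_sourceexpand}. Pick a minimal $S-T$ separator $X''\subseteq N(C')$, and let $C''$ be the set of vertices reachable from $S$ in $G-X''$. Then $X''=N(C'')$. Also $C''\supseteq C'\supsetneq C$, since deleting fewer vertices only enlarges the reachable set. Finally $|N(C'')|\le|N(C')|\le|N(C)|$, which contradicts the importance of $N(C)$. With that change your proof is complete.
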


Important separators remain important, even if we expand the source set.

\begin{lemma}\label{lem:impsep_sourceexpand}
    Let $N(C)$ be an important $S-T$ separator. Then for any $S'$ such that $S\subseteq S'\subseteq C$, $N(C)$ is also an important $S'-T$ separator.
\end{lemma}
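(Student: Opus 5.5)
We need to prove Lemma \ref{lem:impsep_sourceexpand}: if $N(C)$ is an important $S-T$ separator and $S\subseteq S'\subseteq C$, then $N(C)$ is an important $S'-T$ separator.

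The plan has two parts. First we show that $N(C)$ is a minimal $S'-T$ separator whose reachable set is still $C$. Then we transfer importance by comparing candidate separators for $S'$ with separators for $S$.

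\textbf{Step 1: $N(C)$ is an $S'-T$ separator with reachable set $C$.} Recall that $C$ is the set of vertices connected to $S$ in $G-N(C)$, so $C$ is a union of components of $G-N(C)$, each meeting $S$. Since $S'\subseteq C$, $N(C)$ is disjoint from $S'$. It is also disjoint from $T$ because it is an $S-T$ separator.

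Take any $s'\in S'$. Then $s'$ lies in a component of $G-N(C)$ contained in $C$. That component contains no vertex of $T$, since otherwise $S$ would reach $T$. Hence $N(C)$ is an $S'-T$ separator.

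The set of vertices connected to $S'$ in $G-N(C)$ is exactly $C$. The inclusion $\subseteq$ holds because $S'\subseteq C$ and $C$ is closed under connectivity in $G-N(C)$. The inclusion $\supseteq$ holds because $S\subseteq S'$.

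\textbf{Step 2: minimality for $S'$.} Every vertex $x\in N(C)$ has a neighbor in $C$. That neighbor is connected to $S\subseteq S'$ inside $C$, so $x$ is connected to $S'$ in $G-(N(C)\setminus\{x\})$.

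Also, $x$ has a path to $T$ avoiding $N(C)\setminus\{x\}$. This follows from minimality of $N(C)$ as an $S-T$ separator: removing $x$ from the separator must create an $S$--$T$ path, and that path can only pass through $x$.

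Concatenating the two paths shows that $N(C)\setminus\{x\}$ is not an $S'-T$ separator. So $N(C)$ is a minimal $S'-T$ separator.

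\textbf{Step 3: importance.} Suppose $N(C')$ is a minimal $S'-T$ separator with $C'\supset C$, where $C'$ is the set reachable from $S'$ in $G-N(C')$. We must show $|N(C')|>|N(C)|$.

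Since $S\subseteq S'$, $N(C')$ is also an $S-T$ separator. Let $D$ be the set reachable from $S$ in $G-N(C')$. Then $D\supseteq C$: each vertex of $C$ is joined to $S$ by a path inside $C\subseteq C'$, and $C'$ is disjoint from $N(C')$.

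Now pass to a minimal $S-T$ separator contained in $N(C')$. Let $C''$ be the set reachable from $S$ in $G-N(D)$. We have $N(D)\subseteq N(C')$, and $N(D)$ is an $S-T$ separator because $D$ contains no vertex of $T$. Any path from $S$ that leaves $D$ must pass through $N(D)$, so $C''=D$. Taking a minimal subset does not shrink the reachable set, so after shrinking $N(D)$ to a minimal separator we obtain a minimal $S-T$ separator $N(C^\circ)$ with $C^\circ\supseteq D\supseteq C$ and $|N(C^\circ)|\le|N(C')|$.

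\textbf{Case split.} If $C^\circ\supsetneq C$, importance of $N(C)$ for $S$ gives $|N(C)|<|N(C^\circ)|\le|N(C')|$, as required.

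If $C^\circ=C$, then $D=C$, so the only vertices of $C'$ reachable from $S$ are those of $C$. The sets $C'$ and $N(C')$ are disjoint, and $C\subsetneq C'$, so $C$ itself is not all of $C'$. Every vertex of $N(C)$ is adjacent to $C$ and lies outside $D=C$. A neighbor of $D$ that is not separated would belong to $D$, so $N(C)=N(D)\subseteq N(C')$.

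By Step 2, $N(C')$ is a minimal $S'-T$ separator. If $N(C')=N(C)$, then the reachable set of $S'$ would be $C$ by Step 1, contradicting $C'\supsetneq C$. Hence $N(C)\subsetneq N(C')$, which gives $|N(C')|>|N(C)|$.

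The main obstacle is this last case. It requires care with the convention that $C$ is the reachable set, and with combining the equalities of reachable sets established in Step 1.
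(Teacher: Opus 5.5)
Your proof is correct and takes essentially the paper's route: you view a competing minimal $S'-T$ separator $N(C')$ as an $S-T$ separator, shrink it to a minimal $S-T$ separator whose reachable set contains $C$, and invoke the importance of $N(C)$ for $S$. Your second case ($C^\circ=C$) is in fact vacuous: every vertex of $S'$ is joined to $S$ by a path inside $C\subseteq C'$, and that path avoids $N(C')$, so the set $D$ reachable from $S$ in $G-N(C')$ equals $C'$ and hence $C^\circ\supseteq C'\supsetneq C$ directly; this is exactly the inclusion $C\subset C'\subseteq C''$ the paper uses.
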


\begin{proof}
    Observe that $N(C)$ is a minimal $S'-T$ separator, because any $S'-T$ separator is also an $S-T$ separator. Suppose that $N(C)$ is not an important $S'-T$ separator. Then there is a minimal $S'-T$ separator $N(C')$ such that $C\subset C'$ and $|N(C')|\leq |N(C)|$. Note that $N(C')$ is also an $S-T$ separator. We let $N(C'')$ be a minimal $S-T$ separator such that $N(C'') \subseteq N(C')$. Then we have that $C\subset C'\subseteq C''$ and $|N(C'')| \le |N(C')|\leq |N(C)|$. This contradicts that $N(C)$ is an important $S-T$ separator.
\end{proof}

Important separators are enumerable in FPT time.

\begin{lemma}\label{lem:imp_sep}
    For any graph $G=(V,E)$ and disjoint vertex sets $S,T\subseteq V$, there are at most $4^k$ important $S-T$ separators of size at most $k$. Moreover, they can be enumerated in time $4^k n^{O(1)}$.
\end{lemma}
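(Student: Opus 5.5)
We follow the standard measure-based branching proof from Chen, Liu and Lu and the textbook. We prove a stronger statement by induction on the potential $\mu = 2k - \lambda$, where $\lambda = \lambda(S,T)$ is the size of a minimum $S$-$T$ separator. The claim is that the number of important $S$-$T$ separators of size at most $k$ is at most $2^{2k-\lambda} \le 4^k$. If $\lambda > k$ there are no such separators. If $\lambda = 0$, meaning $S$ and $T$ are already disconnected, the empty set is the unique minimal separator. Both cases fit the bound. In every other case we have $0 \le \lambda \le k$, so $\mu \ge k \ge 0$ and the induction is well founded, because $\mu$ is a nonnegative integer that will strictly decrease.

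For the inductive step, let $N(C^*)$ be the farthest minimum $S$-$T$ separator given by Lemma~\ref{lem:unique-fc}. By Lemma~\ref{lem:include_minsep}, every important separator $N(C)$ satisfies $C^* \subseteq C$. Pick any vertex $v \in N(C^*)$; it exists since $\lambda \ge 1$. There are two cases for an important separator $N(C)$ of size at most $k$.

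If $v \in N(C)$, then $N(C)\setminus\{v\}$ is an important $S$-$T$ separator in $G - v$ of size at most $k-1$. Importance transfers because any dominating separator in $G-v$, with $v$ added back, would dominate in $G$. In $G-v$ the minimum separator size is at least $\lambda - 1$, so the measure becomes at most $2(k-1) - (\lambda-1) = \mu - 1$.

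If $v \notin N(C)$, then $v \in C$, since $v$ is adjacent to $C^* \subseteq C$. By Lemma~\ref{lem:impsep_sourceexpand}, $N(C)$ is then an important $(C^*\cup\{v\})$-$T$ separator, and it has size at most $k$. This is the key point of the argument: because $N(C^*)$ is farthest, every separator whose source side strictly contains $C^*$ is larger. So $\lambda(C^*\cup\{v\}, T) \ge \lambda + 1$, and the measure drops to at most $\mu - 1$.

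The map from important separators to the two branches is injective, since each separator lies in exactly one branch. By induction the total is at most $2\cdot 2^{\mu-1} = 2^{\mu} \le 2^{2k} = 4^k$.

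For enumeration, the same recursion is an algorithm with a search tree of depth at most $2k$ and polynomial work per node, since each node needs only one flow computation via Lemma~\ref{lem:unique-fc}. The leaves produce at most $4^k$ candidate sets. We filter out the non-important ones by checking each candidate directly in polynomial time: compute the farthest minimum separator from $C\cup N(C)$-adjusted sources and compare sizes. Equivalently, a minimal separator $N(C)$ is important exactly when, for every $u \in N(C)$, the minimum $(C\cup\{u\})$-$T$ separator has size greater than $|N(C)|$. This gives total time $4^k n^{O(1)}$.

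The main technical care lies in two places. One is justifying that importance is preserved in both branches, especially under the source expansion. The other is handling the detail that in the branch $G - v$ the source side must still contain $C^*\setminus\{v\}$; this is harmless because we only need $S$ as the source.
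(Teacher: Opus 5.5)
Your proof is correct. It is the standard Chen--Liu--Lu / textbook branching argument: use the measure $2k-\lambda$ and branch on a vertex $v$ of the farthest minimum separator, either deleting $v$ or adding it to the source. The paper does not reprove this lemma but cites those sources for it, and the same structure reappears in its proof of Lemma~\ref{lem:hitting_impsep}. Your filtering criterion is also valid: $N(C)$ is important exactly when every $u\in N(C)$ has minimum $(C\cup\{u\})$--$T$ separator larger than $|N(C)|$, so the enumeration bound follows.
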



\subsection{Multicut and Multiway Cut}

The concepts of vertex multicut and multiway cut are central in this paper. These are defined as follows.

\begin{definition}[Vertex Multicut]
    Given a graph $G=(V,E)$ and a set of vertex pairs $\mathbf{T}\subseteq V\times V$.  
    A set $X\subseteq V$ of vertices is a vertex multicut of $(G,\mathbf{T})$ if every pair $(s,t)\in \mathbf{T}$ are not connected in $G-X$.
\end{definition}

Based on the above definition, the goal of \VMC can also be stated as finding a vertex multicut of size at most $k$ for $(G,\mathbf{T})$ from the input.

\begin{definition}[Vertex Multiway Cut]
    Given a graph $G=(V,E)$ and a set of vertices $W\subseteq V$.  
    A set $X\subseteq V\setminus W$ of vertices disjoint from $W$ is a vertex multiway cut of $(G,W)$ if every component of $G-X$ contains at most one vertex in $W$.
\end{definition}

Throughout this paper, we do not distinguish the terms ``separator'' and ``cut''. We are always considering vertex separators (cuts).

\subsection{Linear Program Relaxation of Vertex Multiway Cut}

Let $G = (V, E)$ be an undirected graph and $W \subseteq V$ be a set of terminals. Let $\mathcal{P}(G,W)$ denote the set of all simple paths connecting two different terminals in $G$. 
Let $\lpprimal{G,W}$ denote the following linear program, in which the variable set is $\{d_v:v\in V\setminus W\}$.

\begin{align*}
\lpprimal{G,W}: \quad \text{minimize} \quad & \sum_{v \in V \setminus W} d_v \\
\text{subject to} \quad & \sum_{v \in P \setminus W} d_v \ge 1, \quad \forall P \in \mathcal{P}(G,W) \\
& d_v \ge 0, \quad \forall v \in V \setminus W
\end{align*}

Notice that if we restrict the solution to have integer values, then $\lpprimal{G,W}$ is equivalent to finding the minimum size of a vertex multiway cut of $(G,W)$. Thus, the optimal value of $\lpprimal{G,W}$ serves as a lower bound for that size.

The dual of the linear program $\lpprimal{G,W}$ is denoted by $\lpdual{G,W}$. It uses $\{f_{P}:P\in \mathcal{P}(G,W)\}$ as the variable set.

\begin{align*}
    \lpdual{G,W}: \quad\text{maximize} \quad & \sum_{P \in \mathcal{P}(G,W)} f_{P}\\
    \text{subject to} \quad & \sum_{P: v \in P} f_{P} \leq 1, \quad \forall v \in V \setminus W \\
    & f_{P} \geq 0, \quad \forall P \in \mathcal{P}(G,W)
\end{align*}

The following lemma follows from the well-known complementary slackness property of linear programs (see e.g. ~\cite{vazirani2001approximation}).

\begin{lemma}
    Let $\{d_v^*\}$ be an optimal solution to $\lpprimal{G,W}$, and let $\{f_P^*\}$ be an optimal solution to $\lpdual{G,W}$. Then 
    \[\sum_{v\in V\setminus W} d_v^* = \sum_{P\in \mathcal{P}(G,W)} f_P^*.\]
    Moreover, for all $v\in V\setminus W$, either $d_v^*=0$ or $\sum_{P:v\in P} f_P^* =1$; and for all $P\in \mathcal{P}(G,W)$, either $f_P^* = 0$ or $\sum_{v\in P\setminus W} d_v^* = 1$.
\end{lemma}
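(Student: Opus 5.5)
The statement is the complementary slackness lemma for the pair $\lpprimal{G,W}$, $\lpdual{G,W}$. I would derive it from weak and strong LP duality.

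\emph{Weak duality.} Let $\{d_v\}$ be any feasible primal solution and $\{f_P\}$ any feasible dual solution. I would write the chain
\begin{align*}
\sum_{P\in\mathcal{P}(G,W)} f_P \le \sum_{P} f_P \sum_{v\in P\setminus W} d_v = \sum_{v\in V\setminus W} d_v \sum_{P:v\in P} f_P \le \sum_{v\in V\setminus W} d_v .
\end{align*}
The first inequality uses $f_P\ge 0$ together with the primal path constraints. The middle equality exchanges the order of summation. The last inequality uses $d_v\ge 0$ together with the dual vertex constraints.

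\emph{Strong duality and finiteness.} Both programs are feasible: $f\equiv 0$ is dual feasible, and $d\equiv 1$ is primal feasible, since every $P\in\mathcal{P}(G,W)$ joins two different terminals. Here I must check that such a path has an internal non-terminal vertex; otherwise the primal is infeasible and the lemma is vacuous because no optimal $d^*$ exists. Both programs have finitely many variables and constraints, since there are finitely many simple paths. The primal objective is bounded below by $0$. Hence the strong duality theorem for finite LPs applies, and the optimal values coincide. This gives the first claim, $\sum_v d^*_v=\sum_P f^*_P$.

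\emph{Slackness conditions.} Plugging the optimal pair $d^*,f^*$ into the chain, its two ends are equal by strong duality, so both inequalities hold with equality. Each inequality is a sum of termwise nonnegative gaps:
\begin{align*}
\sum_P f^*_P\Bigl(\sum_{v\in P\setminus W} d^*_v-1\Bigr)=0, \qquad \sum_v d^*_v\Bigl(1-\sum_{P:v\in P} f^*_P\Bigr)=0.
\end{align*}
Every term in these sums is a product of two nonnegative factors, so every term vanishes. That is exactly the pair of dichotomies in the statement: $d^*_v=0$ or $v$ is tight, and $f^*_P=0$ or $P$ is tight.

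The only delicate point is the feasibility and finiteness justification in the second step; the rest is routine bookkeeping. Since this is a textbook fact, citing \cite{vazirani2001approximation} for strong duality suffices.
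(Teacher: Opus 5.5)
Your proposal is correct and takes the same approach as the paper: the paper simply invokes the textbook complementary slackness theorem (citing Vazirani) without further argument, and your weak/strong-duality derivation is the standard proof of that theorem. Your remark that the statement is vacuous when some path in $\mathcal{P}(G,W)$ has no non-terminal vertex (the primal is then infeasible and the dual unbounded) is a valid caveat that the paper leaves implicit.
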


By the above lemma, the optimal solution values are the same for primal and dual linear programs. We denote this value to be $OPT_{lp}(G,W)$.

It is known that the primal linear program is half-integral~\cite{DBLP:conf/iwpec/Guillemot08a,DBLP:journals/jal/GargVY04}, i.e. there is an optimal solution where each vertex only takes value $0,1$ or $1/2$. 

\section{Framework}\label{sec:framework}

In this section, we illustrate the framework of our algorithm and prove Theorem~\ref{thm:main}. 

\subsection{The Main Algorithm}

Our main algorithm consists of the following parts:
\begin{enumerate}
    \item Iterative compression.
    \item Shadow removal.
    \item Making instances bipedal.
    \item Solving bipedal instances.
\end{enumerate}

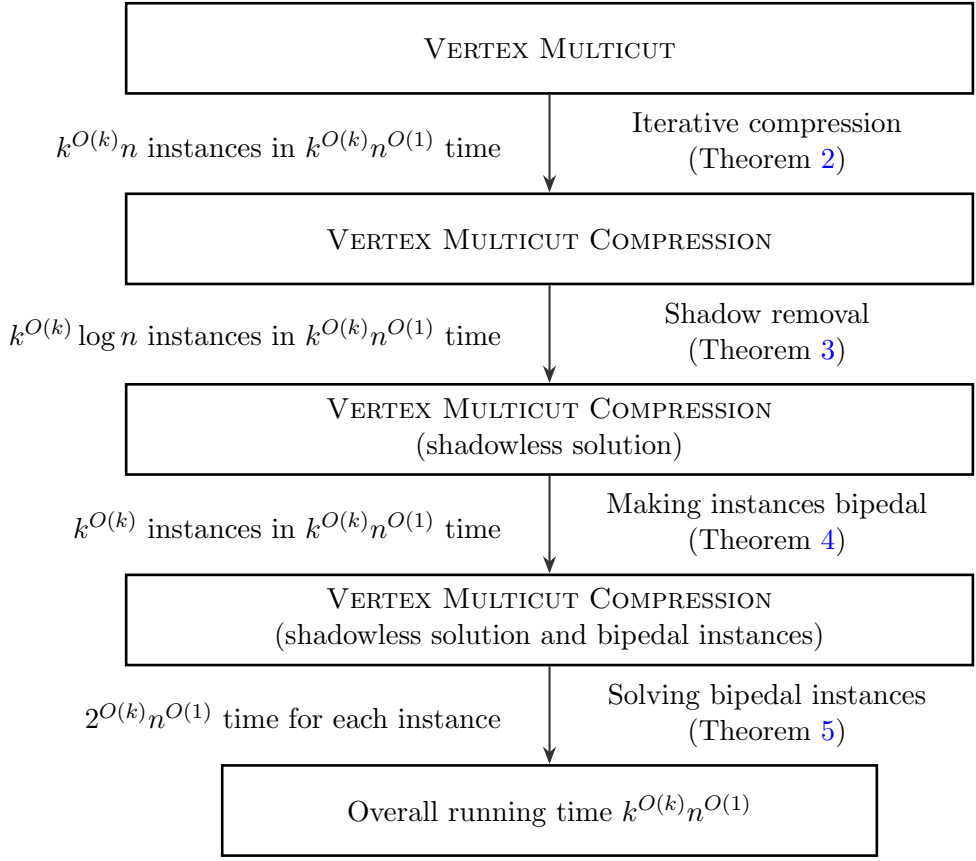
\begin{figure}[h]
    \centering
    \begin{tikzpicture}[
    node distance = 1.3cm and 1cm, 
    block/.style = {
        rectangle, 
        draw=black,  
        line width=1pt,
        text width=11cm,      
        minimum height=1.2cm, 
        text centered
    },
    final/.style = {
        rectangle, 
        draw=black, 
        line width=1pt,
        text width=8cm,
        minimum height=1.2cm, 
        text centered,
        inner xsep=10pt 
    },
    arrow/.style = {
        -Stealth, 
        thick, 
        draw=black!80
    },
    label_right/.style = {
        midway,     
        right,      
        text width=5.5cm,
        align=center  
    },
    label_left/.style = {
        midway, 
        left, 
        xshift=-5mm, 
        align=right
    }
]

    \node [block] (vmc) {\VMC{}};
    \node [block, below=of vmc] (vmc_comp) {\VMCcomp{}};
    \node [block, below=of vmc_comp] (svmc_comp) {\VMCcomp{}\\(shadowless solution)};
    \node [block, below=of svmc_comp] (bsvmc_comp) {\VMCcomp{}\\(shadowless solution and bipedal instances)};
    \node [final, below=of bsvmc_comp] (result) {Overall running time $k^{O(k)}n^{O(1)}$};

    \draw [arrow] (vmc) -- 
        node [label_right] {Iterative compression \\ (Theorem~\ref{thm:itercomp})}
        node [label_left] {$k^{O(k)}n$ instances in $k^{O(k)} n^{O(1)}$ time}
        (vmc_comp);

    \draw [arrow] (vmc_comp) -- 
        node [label_right] {Shadow removal \\ (Theorem~\ref{thm:shadowless})}
        node [label_left] {$k^{O(k)}\log n$ instances in $k^{O(k)} n^{O(1)}$ time}
        (svmc_comp);
    \draw [arrow] (svmc_comp) -- 
        node [label_right] {Making instances bipedal \\ (Theorem~\ref{thm:makebipedal})}
        node [label_left] {$k^{O(k)}$ instances in $k^{O(k)} n^{O(1)}$ time}
        (bsvmc_comp);
        
    \draw [arrow] (bsvmc_comp) -- 
    node [label_right] {Solving bipedal instances\\ (Theorem~\ref{thm:solvebipedal})}
        node [label_left] {$2^{O(k)} n^{O(1)}$ time for each instance}
    (result);

\end{tikzpicture}
    \caption{Overall framework of the algorithm for \VMC{}.}
    \label{fig:Framework}
\end{figure}

\paragraph{Part I: Iterative compression.} To start with, we use the well-know iterative compression technique, and transform the problem into a compression version.

\defproblem
{\VMCcomp{} (\abbrVMCcomp{})}
{A graph $G=(V,E)$, a set $\mathbf{T}$ of vertex pairs, a vertex subset $W\subseteq V$, and an integer $k$ such that $W$ is a vertex multiway cut of $(G,\mathbf{T})$ of size at most $k+1$.}
{$k$.}
{Determine whether there is a vertex subset $X\subseteq V$ with $|X|\le k$ such that 
\begin{itemize}
    \item $X$ is an vertex multicut of $(G,\mathbf{T})$, 
    \item $X\cap W=\emptyset$, and
    \item $X$ is a vertex multiway cut of $(G,W)$.
\end{itemize}
}

\begin{theorem}\label{thm:itercomp}
    If there is an algorithm solving \abbrVMCcomp{} in time $T(n,k)$, then there is an algorithm solving \abbrVMC{} in time $T(n,k)k^{O(k)}n^{O(1)}$.
\end{theorem}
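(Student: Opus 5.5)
We follow the standard iterative compression scheme. Order the vertices $v_1,\dots,v_n$ and let $G_i=G[\{v_1,\dots,v_i\}]$ and $\mathbf{T}_i$ the pairs of $\mathbf{T}$ with both endpoints in $V(G_i)$. We maintain a vertex multicut $X_i$ of $(G_i,\mathbf{T}_i)$ of size at most $k$. For $i\le k$ we take $X_i=V(G_i)$. Given $X_{i}$, the set $X_i\cup\{v_{i+1}\}$ is a multicut of $(G_{i+1},\mathbf{T}_{i+1})$ of size at most $k+1$. If we ever find that $(G_i,\mathbf{T}_i)$ has no multicut of size at most $k$, we answer \textbf{no}, since any multicut of $(G,\mathbf{T})$ restricts to one of $(G_i,\mathbf{T}_i)$. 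After $n$ rounds we obtain the answer for $G_n=G$. It therefore suffices to show the following: given $(G,\mathbf{T},k)$ and a multicut $Y$ of size at most $k+1$, we can decide (and find) a multicut of size at most $k$ by making $k^{O(k)}$ calls to the \abbrVMCcomp{} algorithm plus polynomial overhead.

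For the compression step, fix a hypothetical solution $X$ with $|X|\le k$. We guess two things.
\begin{itemize}
\item The intersection $Y_0=X\cap Y$, at most $2^{k+1}$ choices.
\item How $X$ partitions the remaining vertices $W:=Y\setminus Y_0$ into connected components of $G-X$: a partition $\pi$ of $W$ such that two vertices of $W$ lie in the same block iff they lie in the same component of $G-X$. There are at most $(k+1)^{k+1}=k^{O(k)}$ choices.
\end{itemize}
Reject a guess if some pair of $\mathbf{T}$ has both endpoints in the same block of $\pi$.

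For each guess, delete $Y_0$, set $k'=k-|Y_0|$, and drop from $\mathbf{T}$ the pairs meeting $Y_0$. Then identify each block of $\pi$ into a single vertex, and let $W'$ be the resulting set of block vertices. Translate every pair of $\mathbf{T}$ accordingly, replacing an endpoint in $W$ by its block vertex.

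Correctness in the forward direction: $X\setminus Y_0$ avoids $W$. Merging vertices that $X$ already leaves in a common component does not change which pairs $X\setminus Y_0$ separates. By the definition of $\pi$, distinct block vertices lie in distinct components. Hence $X\setminus Y_0$ is a multicut of the new instance that is also a multiway cut of $W'$.

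We must also check that $W'$ is a multicut of the new instance, as the definition of \abbrVMCcomp{} requires. Every $s$--$t$ path of a pair in the contracted graph either passes through a block vertex, or lifts to a path of $G-Y_0$ avoiding $W$. The latter is impossible because $Y$ is a multicut.

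Correctness in the backward direction: take a solution $X'$ of the new instance with $|X'|\le k'$; it avoids $W'$. Then $X'\cup Y_0$ is a multicut of $G$. Suppose a path in $G-(X'\cup Y_0)$ joined some pair $(s,t)$. Its image in the contracted graph is a walk avoiding $X'$ from $s$ to $t$, where endpoints in $W$ are mapped to their blocks. If $s$ and $t$ map to the same vertex, that vertex must be a block containing both, but such guesses were rejected. Otherwise the walk connects a translated pair, contradicting that $X'$ is a solution.

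The main obstacle is this contraction bookkeeping. We must ensure merged vertices create no spurious connections, which holds because $X'$ separates distinct blocks and contraction only adds connectivity consistent with the guess. We must also ensure pairs with an endpoint in $W$ are translated properly, including the degenerate case where a block contains both endpoints of a pair.

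For the running time, each of the $n$ iterations uses $2^{k+1}(k+1)^{k+1}=k^{O(k)}$ guesses. Each guess costs polynomial work plus one call taking time $T(n,k)$, which gives a total of $T(n,k)\,k^{O(k)}n^{O(1)}$.
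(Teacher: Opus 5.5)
Your proposal is correct and is essentially the paper's argument: guessing $Y_0=X\cap Y$ together with a partition of $Y\setminus Y_0$ into blocks to be contracted is exactly the paper's guess of a partition $(W_1,\dots,W_t)$ of $X_{i-1}\cup\{v_i\}$, with $W_1$ deleted and the other parts contracted into terminals. This gives $k^{O(k)}$ compression calls per round. Your write-up is, if anything, more careful than the paper's sketch: you check that the new terminal set is again a multicut, and you handle pairs with both endpoints in one block. The only thing worth adding is the one-line check $|W'|\le |Y|-|Y_0|\le k'+1$, which the compression problem's definition requires.
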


\begin{proof}
    Let $(G, \mathbf{T}, k)$ be an instance of \VMC{}. 
    Suppose that $V(G) = \{v_1, v_2,\dots, v_n\}$, and let $V_i = \{v_1,v_2,...,v_i\}$. Let $G_i=G[V_i]$ and $\mathbf{T}_i =\{(s,t)\mid (s,t)\in \mathbf{T},s,t\in V_i\}$
    We iteratively consider the instances $I_i =(G_i, \mathbf{T}_i, k)$ in ascending order of $i$.
    Let $X_0 = \emptyset$, which is a solution of size $0\leq k$ for the empty instance $I_0$. Suppose we have a solution $X_{i-1}$ of size at most $k$ for $I_{i-1}$, we show how to find a solution for $I_i$ of size at most $k$.
    Observe that $X_{i - 1}\cup \{v_i\}$ is a solution for $I_i$ with size at most $k+1$. 
    
    We branch on all possible partitions $(W_1,\dots, W_t)$ of $X_{i-1}\cup \{v_i\}$. For a fixed such partition, we build an instance $I'=(G',\mathbf{T}',W,k-|W_1|)$. $G'$ is obtained by first removing $W_1$, then contracting each $W_j,2\leq j\leq t$ into a new vertex $w_j$ (i.e. remove $W_j$ from $V(G_i)$ and add $w_j$, and set $N(w_j)$ to be the neighbors of $W_j$ in $G_i$). $W=\{w_2,w_3,...,w_t\}$. $\mathbf{T}'$ is obtained by replacing every appearance of $v\in W_j$ by $w_j$ and remove every pair containing a vertex in $W_1$. This is equivalent to guessing that there is a solution $X$ containing $W_1$ and disjoint from $\cup_{2\leq j\leq t}W_j$, and each $W_j,2\leq j\leq t$ is connected in $G_i-X$.

    We solve each such instance using the algorithm for \abbrVMCcomp{}. If the algorithm returns a solution $X'$, we output $X_i = X'\cup W_1$. If all the instances return \textbf{no}, we return \textbf{no}. We sketch the proof of correctness. If $I_i$ is a \textbf{no}-instance, then all the constructed instances are \textbf{no}-instances, because removing a vertex and decrease the parameter by one and contracting two vertices preserves a \textbf{no}-instance. If $I_i$ has a solution $X_i$ of size $k$, then consider the partition where $W_1 = X_i\cap (X_{i-1}\cup \{v_i\})$ and each $W_j,2\leq j\leq t$ is exactly a non-empty intersection of a component in $G-X$ with $X_i\cap (X_{i-1}\cup \{v_i\})$. It's not hard to check that $X_i\setminus W_1$ is a solution for $I'$.
    
    The number of \VMCcomp{} instances we produce is bounded by $k^{O(k)}$. And for each \VMCcomp{} instance, the algorithm solving it takes $T(|V(G')|,k-|W_1|)\leq T(n,k)$ time. Thus, finding the solution (or conclude no solution) for $I_i$ takes $k^{O(k)}(T(n,k)+n^{O(1)})$ time. We need to solve $n$ instances in total, thus the overall running time is $k^{O(k)}n^{O(1)}T(n,k)$.
\end{proof}

Based on Theorem~\ref{thm:itercomp}, it suffices to give an algorithm for \abbrVMCcomp{} with running time $T(n,k)=k^{O(k)}n^{O(1)}$. We find the following definition makes describing our algorithm easier.

\begin{definition}[Partial Branching]
     A partial branching algorithm is an algorithm that takes an instance $I=(G,\mathbf{T},W,k)$ of \abbrVMCcomp{} and outputs a set of instances $\mathcal{I}$ of \abbrVMCcomp{} such that:
     \begin{itemize}
         \item For each $I'=(G',\mathbf{T}',W',k')\in \mathcal{I}$, $|I'|\leq |I|^{O(1)}$ and $k'\leq k$;
         \item If $I$ is a \textbf{no}-instance, then every instance in $\mathcal{I}$ is a \textbf{no}-instance.
     \end{itemize}
     We say that such an algorithm reduces $I$ to the set of instances $\mathcal{I}$.
 \end{definition}

\paragraph{Part II: Shadow removal.} A shadowless solution for an instance $I=(G,\mathbf{T},W,k)$ of \abbrVMCcomp{} is a solution $X$ of $I$ such that in $G-X$, every component intersects $W$. To solve the compression version, we first apply a partial branching algorithm named \verb|shadow_removal| that ensures a shadowless solution in the \textbf{yes} case.

\begin{theorem}\label{thm:shadowless}
    There is a partial branching algorithm for \abbrVMCcomp{} with running time $k^{O(k)}n^{O(1)}$, called \verb|shadow_removal|, that reduces an arbitrary instance $I$ to a set of at most $k^{O(k)}\log n$ instances $\mathcal{I}$, such that if $I$ is a \textbf{yes}-instance, then there is an instance in $\mathcal{I}$ that admits a shadowless solution.
\end{theorem}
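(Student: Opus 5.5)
We follow the outline given in the introduction: first a randomized \emph{shadow covering} procedure with success probability $k^{-O(k)}$, then a \emph{torso} operation that makes the covered vertices undeletable, and finally derandomization by splitters.

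\textbf{Step 1: a well-behaved solution.} Fix a yes-instance $I=(G,\mathbf{T},W,k)$ and a solution $X'$ with $|X'|\le k$. Among all solutions of size at most $k$ we choose $X$ by the standard pushing argument, so that $X$ has two properties. For (a): for every vertex $v$ in the shadow of $X$, with $C$ the component of $G-X$ containing $v$, the set $N(C)$ is an important $v-W$ separator of size at most $k$. For (b): every $v\in X$ has a path to $W$ avoiding $X\setminus\{v\}$.

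We get these by applying the two axioms, both of which hold for \abbrVMCcomp{} as verified in the overview. Axiom (ii) removes any vertex of $X$ that lies in the shadow of the others, which gives (b). Axiom (i), combined with Lemma~\ref{lem:impsep_sourceexpand} and a replacement of $N(C)$ by an important separator that dominates it, gives (a). For the replacement, we take a dominating important separator $N(C')$ with $C\subseteq C'$ and $|N(C')|\le|N(C)|$, and form $(X\setminus N(C))\cup N(C')$. This set is again a solution: by axiom (ii), vertices of $N(C)$ that now fall in the shadow can be dropped, and pushing outward only enlarges shadows. We iterate on a potential such as the total shadow size, or lexicographically maximize it. This is routine and follows earlier papers.

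\textbf{Step 2: the hitting-set lemma, which is the main obstacle.} We need the strengthened Korhonen--Lokshtanov lemma. For every $v$ and every set $Y$ that does not separate $v$ from $W$, there is $Q$ with $|Q|\le k$ and $Q\cap Y=\emptyset$ such that $Q$ meets every important $v-W$ separator of size at most $k$.

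We would prove it along the lines of the $4^k$ bound (Lemma~\ref{lem:imp_sep}), by induction on the measure $2k-\lambda$, where $\lambda$ is the size of a minimum $v-W$ separator. Take the farthest minimum separator $N(C^*)$ from Lemma~\ref{lem:unique-fc}. By Lemma~\ref{lem:include_minsep}, every important separator $S$ has source side containing $C^*$. Either $S=N(C^*)$, or some $u\in N(C^*)$ lies on the source side of $S$, and then $S$ is important for source $C^*\cup\{u\}$, where $\lambda$ strictly increases. Since $Y$ does not separate $v$ from $W$, some vertex $u\in N(C^*)\setminus Y$ is reachable from $v$ in $G-Y$; we put such a vertex into $Q$, which handles the case $S=N(C^*)$.

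The hard part is organizing the recursion so that only $k$ vertices are charged in total, rather than a branching tree. We would try to recurse on one chain, always adding the vertex $u$, enlarging the source, and keeping the invariant that $Y$ does not separate. Each step increases $\lambda$ or keeps it fixed while the farthest cut moves outward, and we must bound the number of steps by $k$ via the charging argument. Checking the details, in particular that separators avoiding $u$ are all caught along the same chain, is the crux.

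\textbf{Step 3: random coloring, reduction, and derandomization.} Color each vertex red independently with probability $1/k^2$. Let $Z$ be the set of vertices $v$ that have an entirely red important $v-W$ separator of size at most $k$; $Z$ is computable in $4^kn^{O(1)}$ time by Lemma~\ref{lem:imp_sep}. For each $v\in X$, apply Step 2 with $Y=X\setminus\{v\}$, which is legal by (b), to obtain $Q_v$.

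With probability at least $k^{-2k}(1-1/k^2)^{k^2}\ge k^{-2k}/4$, all of $X$ is red and all of $\bigcup_v Q_v$ is blue; here $X$ is disjoint from $\bigcup_v Q_v$ because $Q_v\cap(X\setminus\{v\})=\emptyset$, and $v\notin Q_v$ since $Q_v$ hits separators avoiding $v$. In that event, (a) gives that $Z$ contains the shadow of $X$, and every $v\in X$ lies outside $Z$.

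We then make the vertices of $Z$ undeletable using the Marx--Razgon torso operation. We delete each $z\in Z$, turn $N(z)$ into a clique, and preserve pairs of $\mathbf{T}$ with an endpoint $z$ by redirecting them through neighbors, or equivalently by keeping $z$ but marking it undeletable. This never turns a no-instance into a yes-instance, so it is a partial branching. In the good event, $X$ remains a solution and is now shadowless.

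Finally, we replace the random coloring by an $(n,k+k^2,\cdot)$-splitter or universal family. This gives $k^{O(k)}\log n$ colorings, one for each output instance, and yields the claimed bound and running time.
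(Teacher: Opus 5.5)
Your architecture is the paper's: a pushing argument giving a $k$-shadow-removable solution (Lemma~\ref{lem:pushing}), a size-$k$ hitting set for important separators that avoids $X\setminus\{v\}$, red/blue colouring with $p=1/k^2$, \texttt{torso}, and splitters. Steps~1 and~3 are fine at the level of a sketch.

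The gap is Step~2. It is the one new ingredient, and the whole $k^{-O(k)}$ bound rests on it, but you leave it open. You propose a single chain without verifying that it catches every important separator. You also allow steps where $\lambda$ ``stays fixed while the farthest cut moves outward''. If that could happen, nothing would bound the chain length by $k$, so $|Q_v|\le k$, $|B|\le k^2$ and the success probability would not follow. Your dichotomy ``either $S=N(C^*)$ or some $u\in N(C^*)$ lies on the source side of $S$'' is about \emph{some} $u$, not the particular $u$ you put into $Q$. That mismatch is why the chain looked doubtful to you.

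What closes it is the paper's proof of Lemma~\ref{lem:hitting_impsep}. Set up the chain as follows:
\begin{itemize}
\item Let $N(C_0)$ be the farthest minimum $v$--$W$ separator.
\item Pick $v_{i+1}\in N(C_i)\setminus Y$. This is always possible: $v\in C_i$, so $N(C_i)$ is a $v$--$W$ separator and is not contained in $Y$. No reachability invariant is needed.
\item Let $N(C_{i+1})$ be the farthest minimum $(C_i\cup\{v_{i+1}\})$--$W$ separator, and stop once $|N(C_l)|>k$.
\end{itemize}
Two observations then finish the argument.

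(1) Sizes strictly increase. $N(C_{i+1})$ also separates the source that defined $C_i$, and its source side strictly contains $C_i$ because it contains $v_{i+1}$. Farthestness of $N(C_i)$ then forces $|N(C_{i+1})|>|N(C_i)|$. Since $|N(C_0)|\ge 1$, at most $k$ vertices are marked.

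(2) The chosen vertex is never on the sink side. Let $N(C)$ be an important separator of size at most $k$ with $C_i\subseteq C$. Then $v_{i+1}$ has a neighbour in $C_i\subseteq C$, so $v_{i+1}\in C\cup N(C)$. Either $v_{i+1}\in N(C)$ and $N(C)$ is hit, or $v_{i+1}\in C$, in which case Lemmas~\ref{lem:impsep_sourceexpand} and~\ref{lem:include_minsep} give $C_{i+1}\subseteq C$.

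Starting from $C_0\subseteq C$, this cannot run past index $l$. Otherwise $N(C)$ would be important for source $C_{l-1}\cup\{v_l\}$, hence of size at least $|N(C_l)|>k$. So some $v_i\in N(C)$.

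Separately, your justification for $v\notin Q_v$ is not valid as stated. It actually holds because every marked vertex lies in some $N(C_i)$ while $v\in C_i$; alternatively, discard $v$, since no $v$--$W$ separator contains it.
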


\paragraph{Part III: Making instances bipedal.}

On an instance that can be assumed to admit a shadowless solution in the \textbf{yes} case, we apply another partial branching algorithm, named \verb|make_bipedal|. A bipedal instance $I=(G,\mathbf{T},W,k)$ of \abbrVMCcomp{} is an instance in which $G$ is bipedal with respect to $W$, i.e. $|N_G(C)\cap W|\leq 2$ holds for each component $C$ in $G-W$. The formal definition of bipedalness appears in Definition~\ref{def:bipedal}.

\begin{theorem}\label{thm:makebipedal}
    There is a partial branching algorithm for \abbrVMCcomp{} with running time $k^{O(k)}n^{O(1)}$, called \verb|make_bipedal|, that reduces an arbitrary instance $I$ to a set of at most $k^{O(k)}$ \textbf{bipedal} instances $\mathcal{I}$, such that if $I$ is a \textbf{yes}-instance with a shadowless solution, then there is an instance in $\mathcal{I}$ that also admits a shadowless solution. 
\end{theorem}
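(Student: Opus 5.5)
We prove Theorem~\ref{thm:makebipedal} with a bounded-depth branching algorithm whose measure is $\mu(I)=2k-OPT_{lp}(G,W)$. This quantity is at most $2k$ for any instance we keep, since if $OPT_{lp}(G,W)>k$ we output a trivial \textbf{no}-instance. Each branching step will either lower $\mu$ by at least $1/2$ or, within a constant number of further steps, be followed by a step that does. The recursion tree therefore has depth $O(k)$.

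The branching factor needs care. A rule that branches over all vertices of a separator has fan-out $O(k)$, since farthest isolating $w$-min-separators have size at most $k$ whenever $\sum_w mc\le 2k$. Fan-out $O(k)$ at depth $O(k)$ gives $k^{O(k)}$ leaves. The first operation we need is forcing a vertex $v$ to be undeletable. We implement this by the torso operation: delete $v$, make $N(v)$ a clique, and reattach terminal pairs through $v$. This is the same operation that shadow removal uses for $Z$. It preserves \textbf{no}-instances and keeps shadowless solutions avoiding $v$.

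\textbf{Rule 1 (non-zero vertex).} If some $v\in V\setminus W$ satisfies $OPT_{lp}(G',W)\ge OPT_{lp}(G,W)+1/2$ after forcing $v$ undeletable, we branch two ways. One branch deletes $v$ and sets $k\leftarrow k-1$; the LP value drops by at most $1$, so $\mu$ drops by at least $1$. The other branch forces $v$ undeletable, so $\mu$ drops by $1/2$. We test for such a vertex in polynomial time using half-integrality. Shadowlessness of solutions is preserved in both branches, because deleting a solution vertex and torsoing a non-solution vertex do not change reachability to $W$ for the remaining vertices.

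\textbf{Rule 2 (no non-zero vertex).} Here we first prove the identity $2OPT_{lp}=\sum_w mc(G,w,W\setminus\{w\})$, as sketched in the outline. The $\le$ direction takes the union of separators with weights $1/2$. For $\ge$, we fix a dual optimum minimizing the number of tight vertices and set $U_w$ to be the region reachable from $w$ through untight vertices. Complementary slackness together with the absence of non-zero vertices shows that the sets $N(U_w)$ are pairwise disjoint. We then show that every positive path crosses exactly two vertices valued $1/2$. I expect this exchange argument, which rerouts flow to reduce the number of tight vertices, to be the main obstacle.

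Given the identity, we compute each farthest separator $N(\FC{w})$ by Lemma~\ref{lem:unique-fc}. We check whether a contractible shadowless solution is guessed, and if so we stop. Otherwise, for each $w$ and each $v\in N(\FC{w})$, giving $O(k^2)$ branches in total, we branch on two options:
\begin{itemize}
\item $v\in X$: delete $v$, which lowers $\mu$ by at least $1$;
\item $v$ is connected to $w$ in $G-X$: contract $v$ into $w$.
\end{itemize}
The contraction does not raise the LP value, but it strictly increases $mc(G,w,\cdot)$ because $\FC{w}$ was farthest. This breaks the identity, so a non-zero vertex now exists and Rule 1 applies immediately. Correctness comes from the dichotomy: if no solution is contractible, some separator vertex is either deleted or reaches $w$. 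In the latter case shadowlessness is used to rule out reaching nothing.

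\textbf{Leaves.} At a leaf, which is assumed to have a contractible shadowless solution, we partition $V$ into the parts $\Partfull{W}{G}{R}$. Contractibility and shadowlessness give $\Partfull{W}{G}{\emptyset}=\emptyset$. Minimality of the separator sums forces the farthest separators to be disjoint, and hence edges join only parts with $|R_1|=1$ and $R_1\subseteq R_2$, $|R_2|\le2$. Contracting $N(V\setminus\Part{\{w\}})$ into $w$ then yields a bipedal instance that still has a shadowless solution, and every step above preserves \textbf{no}-instances.
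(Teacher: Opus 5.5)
Your route is the paper's: the same measure $2k-OPT_{lp}(G,W)$, the same two branching rules, the same LP identity via a dual optimum with the most untight vertices, and the same contraction of the inner boundaries of the regions $\Partfull{W}{G}{\{w\}}$ at the leaves. However, your analysis of the contraction branch of Rule~2 uses an inequality in the wrong direction, and this is exactly where the $O(k)$ depth bound is decided.

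\textbf{The false inequality.} You claim that contracting $v\in N(\FC{w})$ into $w$ ``does not raise the LP value''. You then combine this with the strict increase of $\mathrm{mc}(G,w)$ to conclude that the identity $2OPT_{lp}=\sum_{w}\mathrm{mc}$ fails. The claim is false, even when Rule~2 applies. Take $W=\{w_1,w_2\}$ and edges $w_1a$, $ab$, $bx_1$, $bx_2$, $x_1w_2$, $x_2w_2$.
\begin{itemize}
\item Here $OPT_{lp}=1$.
\item No vertex is non-zero, since both $d_a=1$ and $d_b=1$ are optimal.
\item The farthest minimum separator is $\{b\}$, so $N(\FC{w_1})=\{b\}$.
\item After contracting $b$ into $w_1$, the LP value becomes $2$.
\end{itemize}

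\textbf{The inequality you actually need.} The measure argument requires the reverse inequality $OPT_{lp}(G')\ge OPT_{lp}(G)$, which you never establish. Without it, the contraction branch could increase $2k-OPT_{lp}$. The subsequent Rule~1 step would then give no net progress, so your depth bound is unsupported.

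\textbf{How the paper repairs this.} First show the LP value never drops. Extend an optimal LP solution of $G'$ by $d_v=0$. Every terminal-to-terminal path of $G$ through $v$ has a prefix or suffix that becomes a terminal-to-$w$ path in $G'$, so the extension is feasible. Then split into two cases:
\begin{itemize}
\item If $OPT_{lp}$ strictly increases, half-integrality gives a drop of the measure by at least $1/2$.
\item If $OPT_{lp}$ stays equal, the identity fails in $G'$, because $\sum_{w_0\in W}\mathrm{mc}$ strictly increases. This needs $\mathrm{mc}(G',w_0)\ge\mathrm{mc}(G,w_0)$ for every $w_0\in W$, not only the strict increase at $w$ that you mention. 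So $G'$ has a non-zero vertex, and Rule~1 fires at the next level with the measure unchanged.
\end{itemize}

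\textbf{A smaller point about the leaves.} The algorithm cannot know which leaves have a contractible shadowless solution, yet every output instance must be bipedal. So at each leaf you must explicitly test $\Partfull{W}{G}{\emptyset}=\emptyset$. If the test fails, output a trivial bipedal \textbf{no}-instance, as the paper's Step~1 does, rather than assuming the test passes.
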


\paragraph{Part IV: Solving bipedal instances.}

The last step is to solve bipedal instances with shadowless solutions (in the \textbf{yes} case). For these instances we apply a previous result~\cite{}.

\begin{theorem}\label{thm:solvebipedal}
    There is an algorithm running in $2^{O(k)}n^{O(1)}$ time that takes a bipedal instance $I$ of \abbrVMCcomp{} as input, and
    \begin{itemize}
        \item outputs \textbf{yes} if there is a shadowless solution;
        \item outputs \textbf{no} if there is no solution.
    \end{itemize}
\end{theorem}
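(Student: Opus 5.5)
This is the Marx--Razgon result on bipedal instances, and the plan is to reduce it to {\sc Almost 2-SAT}, which is solvable in $2^{O(k)}n^{O(1)}$ time by Razgon--O'Sullivan, or in $O^*(4^k)$ via LP-guided branching. Let $I=(G,\mathbf{T},W,k)$ be bipedal with $W=\{w_1,\dots,w_m\}$, $m\le k+1$. A shadowless solution $X$ is a multiway cut for $W$, so every vertex of $G-X$ lies in the component of exactly one $w_i$. A shadowless solution is therefore described by assigning to each $v\in V\setminus W$ a label in $W\cup\{\bot\}$, where $\bot$ means $v\in X$. Bipedality is what makes this assignment expressible with $2$-clauses: a component $C$ of $G-W$ has $N(C)\cap W\subseteq\{a_C,b_C\}$, so every vertex of $C$ not in $X$ must be labelled $a_C$ or $b_C$.

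For each $v\in C$ I will create two Boolean variables, $x_v$ (``$v$ belongs to $a_C$'s region'') and $y_v$ (``$v$ belongs to $b_C$'s region''). The intended meaning is that $v\in X$ exactly when both are false, and $x_v\wedge y_v$ is forbidden. I will add the following clauses.
\begin{itemize}
  \item $(\neg x_v\vee\neg y_v)$ for every $v$.
  \item Along every edge $uv$ inside $C$, $(\neg x_u\vee x_v)$ and its symmetric versions, and likewise for $y$. These encode that the region of $a_C$ is closed under adjacency in $G-X$; an endpoint in $X$ has both variables false, so the implication is vacuous there.
  \item $x_v$ forced true for $v$ adjacent to $a_C$ unless $v\in X$, and similarly $y_v$ for $b_C$.
\end{itemize}
The constraint ``$v\in X$ or $v$ is labelled'' is $(x_v\vee y_v)$, and it is exactly the clause the cost pays for. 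The encoding is made precise by declaring the clauses $(x_v\vee y_v)$ soft and all other clauses hard; deleting a soft clause corresponds to placing $v$ in $X$. The adjacency-to-$a_C$ condition becomes the hard clause $(x_v\vee\neg y_v)$, meaning that if $v$ is adjacent to $a_C$ and not in $X$ then $v$ is in $a_C$'s region. The remaining requirements are handled as follows. Terminal pairs $(s,t)$ with both endpoints in $C$ and $a_C\ne b_C$ give hard clauses forbidding equal labels, namely $(\neg x_s\vee\neg x_t)$ and $(\neg y_s\vee\neg y_t)$. If $a_C=b_C$, or if one endpoint is a terminal $w$, the pair is handled by unit clauses; the case of distinct components is handled next.

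The main obstacle is terminal pairs spanning different components $C,C'$, since whether $s\in C$ and $t\in C'$ end up together depends only on whether their labels coincide as elements of $W$. I will first guess, in $k^{O(1)}$ or $2^{O(k)}$ ways, nothing more than orientation conventions: for each component I fix which of its two attachment terminals is called $a$. Equality of labels is then a comparison of two binary choices whose meaning is fixed, so ``label$(s)\ne$ label$(t)$'' is a conjunction of $2$-clauses over $\{x_s,y_s,x_t,y_t\}$; for instance, if $a_C=a_{C'}$ we forbid $x_s\wedge x_t$. No guessing is actually needed, because the meaning of each variable is absolute. The general soft-clause-deletion version of Almost 2-SAT reduces to the variable-deletion version in the standard way, so the whole reduction is polynomial with parameter $k$.

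Correctness has two directions. Given a shadowless solution $X$ of size at most $k$, I will set $x_v,y_v$ from the component of $v$ in $G-X$; shadowlessness guarantees that every $v\notin X$ receives one of its two terminals. Conversely, from an assignment violating at most $k$ soft clauses, the set $X$ of vertices with both variables false has size at most $k$. The hard closure clauses then imply that each component of $G-X$ inside $C$ carries a consistent label compatible with the terminals it touches, so $X$ is a multiway cut and a multicut. The resulting $X$ may have a shadow, which is permitted because the theorem asks to output \textbf{no} only when no solution exists. The claimed running time then follows from the $2^{O(k)}n^{O(1)}$ algorithm for Almost 2-SAT.
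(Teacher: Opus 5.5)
\emph{Genuine gap: the edge clauses.} Reducing to Almost 2-SAT is the right route; it is the Marx--Razgon reduction that the paper simply cites instead of proving. But your encoding breaks at the edge clauses, and this destroys the completeness direction.

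You model $v\in X$ by $x_v$ and $y_v$ both false. On every edge $uv$ inside $C$ you then impose $(\neg x_u\vee x_v)$ and $(\neg x_v\vee x_u)$ as \emph{hard} clauses, and likewise for $y$. You justify this by saying an endpoint in $X$ makes the implication vacuous. That holds only when the $X$-endpoint is the antecedent. If $u\notin X$ is labelled $a_C$ and $v\in X$, then $(\neg x_u\vee x_v)$ is false.

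In fact these clauses force $x$ and $y$ to be constant on each component $C$ of $G-W$. When $C$ is attached to two distinct terminals, your attachment clauses force $\neg y$ at a neighbour of $a_C$ and $\neg x$ at a neighbour of $b_C$. Hence $x\equiv y\equiv$ false on all of $C$, so every vertex of $C$ is charged.

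Concretely, take $W=\{a,b\}$, $G$ the path $a,u,v,b$, $\mathbf{T}=\emptyset$ and $k=1$. Then $X=\{v\}$ is a shadowless solution. Yet every assignment satisfying your hard clauses has $x_u=x_v=y_u=y_v=$ false, violating both soft clauses, so your algorithm answers \textbf{no}.

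\emph{How to fix it.} Replace the edge clauses by $(\neg x_u\vee\neg y_v)\wedge(\neg y_u\vee\neg x_v)$. These say that two adjacent non-deleted vertices cannot carry different labels, and they really are vacuous when either endpoint has both variables false. Together with $(\neg x_v\vee\neg y_v)$ and the attachment clauses ($\neg y_v$ for $v$ adjacent to $a_C$, $\neg x_v$ for $v$ adjacent to $b_C$), each component of $G-X-W$ inside $C$ then carries a single label and touches only the terminal of that label. Both of your correctness directions then go through as you wrote them.

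Alternatively, use one variable $z_v$ per vertex and the \emph{variable}-deletion version of Almost 2-SAT, as Marx and Razgon do. There the equivalence clauses $z_u\leftrightarrow z_v$ are correct precisely because deleting $z_v$ (putting $v$ into $X$) removes every clause that contains it.

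\emph{A smaller slip.} For a component with a single attachment terminal, a pair $s,t$ lying inside it needs the binary clause $(\neg x_s\vee\neg x_t)$, meaning one of them must go into $X$; unit clauses cannot express this.
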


\paragraph{Put things together.}

\begin{proof}[Proof of Theorem~\ref{thm:main}]
    By Theorem~\ref{thm:itercomp}, it suffices to provide an algorithm solving \VMCcomp{} in $T(n,k)=k^{O(k)}n^{O(1)}$ time. Let $I$ be the input \VMCcomp{} instance. We first apply \verb|shadow_removal| on $I$. Let the output be $\mathcal{I}_1$. Then for each instance $I'\in \mathcal{I}_1$, we apply \verb|make_bipedal| on $\mathcal{I}'$. Let the union of all the output families of this step be $\mathcal{I}_2$. Finally, we apply Theorem~\ref{thm:solvebipedal} to solve every instance in $\mathcal{I}_2$, and output \textbf{yes} if and only if there is an instance in $\mathcal{I}_2$ such that the algorithm of Theorem~\ref{thm:solvebipedal} outputs \textbf{yes} on it.

    If $I$ is a \textbf{no}-instance, since \verb|shadow_removal| is a partial branching algorithm, every instance in $\mathcal{I}_1$ is a \textbf{no}-instance. And since \verb|make_bipedal| is a partial branching algorithm, every instance in $\mathcal{I}_2$ is a \textbf{no}-instance. By Theorem~\ref{thm:solvebipedal}, our algorithm outputs \textbf{no}. If $I$ is a \textbf{yes}-instance, then by Theorem~\ref{thm:shadowless}, there is an instance $I^*$ in $\mathcal{I}_1$ that admits a shadowless solution. The output of \verb|make_bipedal| on $I^*$ is a subset of $\mathcal{I}_2$. Thus by Theorem~\ref{thm:makebipedal}, there is an instance in $\mathcal{I}_2$ that admits a shadowless solution. Finally by Theorem~\ref{thm:solvebipedal}, our algorithm will output \textbf{yes}.

    The running time of our algorithm is $T(n,k)=k^{O(k)}n^{O(1)}$. Indeed, \verb|shadow_removal| takes $k^{O(k)}n^{O(1)}$ time, \verb|make_bipedal| takes $|\mathcal{I}_1|k^{O(k)}n^{O(1)}=k^{O(k)}n^{O(1)}$ time and solving the instances in $\mathcal{I_2}$ takes $|\mathcal{I_2}|2^{O(k)}n^{O(1)}=k^{O(k)}n^{O(1)}$. Hence $T(n,k)$ is the sum of cost of these three steps, which is dominated by $k^{O(k)}n^{O(1)}$.
\end{proof}

\subsection{Toolkit for Partial Branching Algorithms}

In this subsection, we introduce two operations on \abbrVMCcomp{} instances for designing partial branching algorithms, namely $\verb|torso|$ and $\verb|contract|$. Both operations never turn a \textbf{no}-instance into a \textbf{yes}-instance.

\paragraph{Graph torso.} Let $I=(G,\mathbf{T},W,k)$ be an instance of \abbrVMCcomp{}, let $Z\subseteq V(G)\setminus W$ be a set of vertices disjoint from $W$. We define $\verb|torso|(I,Z)$ to produce a new instance $I'=(G',\mathbf{T'},W',k')$, in which
\begin{itemize}
    \item $G' = (V\setminus Z, \{(u,v)\mid \text{$u,v$ are connected by a $Z$-path in $G$}\})$;
    \item $\mathbf{T}'$ is obtained by the following.
        For each vertex $v \in V$, let $\phi(v)$ be a set of vertices in $V \setminus Z$ that are connected to $v$ by a $Z$-path if $v \in Z$, and let $\phi(v) = \{v\}$ if $v \notin Z$.
        Then we set $\mathbf{T}' = \{(s',t')\mid (s',t')\in \phi(s)\times \phi(t) \text{~for some~} (s,t)\in \mathbf{T}\}$.
    \item $W'=W$.
    \item $k'=k$.
\end{itemize}

\begin{lemma}\label{lem:torso}
    Let $I=(G,\mathbf{T},W,k)$ be an instance of \abbrVMCcomp{}. $Z\subseteq V$ is a vertex set disjoint from $W$. 
    Let $I'=(G',\mathbf{T}',W',k')$ be the result of $\verb|torso|(I,Z)$,
    we have that,
    \begin{itemize}
        \item if $I$ is a \textbf{no}-instance then $I'$ is a \textbf{no}-instance;
        \item if there is a solution $X$ of size at most $k$ to $I$ such that $X\cap Z = \emptyset$, then $X$ is also a solution of $I'$.
    \end{itemize}
\end{lemma}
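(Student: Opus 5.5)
The plan rests on one structural fact about the torso, proved first. Fix any set $X\subseteq V\setminus Z$ and two vertices $a,b\in V\setminus (Z\cup X)$. Then $a$ and $b$ are connected in $G-X$ if and only if they are connected in $G'-X$. For the forward direction, take a path from $a$ to $b$ in $G-X$ and split it at its vertices outside $Z$. Each maximal segment between consecutive such vertices is a $Z$-path, so it becomes a single edge of $G'$. For the backward direction, expand each edge of $G'$ into a $Z$-path in $G$; these expanded paths avoid $X$ because $X\cap Z=\emptyset$.

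We also need a version for endpoints inside $Z$. For $v\in Z$ and $u\in V\setminus(Z\cup X)$, $v$ is connected to $u$ in $G-X$ if and only if some $v'\in\phi(v)\setminus X$ is connected to $u$ in $G'-X$. To see this, cut the path at its first vertex outside $Z$; that vertex lies in $\phi(v)$.

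\textbf{Second bullet.} Let $X$ be a solution of $I$ with $X\cap Z=\emptyset$. We check each condition for $I'$.
\begin{itemize}
\item $X\subseteq V(G')$ and $X\cap W'=\emptyset$: immediate, since $W'=W$ and $X$ avoids both $Z$ and $W$.
\item $X$ is a multiway cut for $(G',W)$: this follows from the backward direction of the connectivity fact.
\item $X$ is a multicut for $(G',\mathbf{T}')$: take $(s',t')\in\phi(s)\times\phi(t)$ with $(s,t)\in\mathbf{T}$. If $s'$ and $t'$ were connected in $G'-X$, they would be connected in $G-X$. Since $s'\in\phi(s)$, either $s'=s$ or $s'$ is joined to $s$ by a $Z$-path, which avoids $X$; the same holds for $t'$ and $t$. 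Hence $s$ and $t$ would be connected in $G-X$, a contradiction.
\end{itemize}
Finally, $W$ must still be a valid multiway cut of size at most $k+1$ for $(G',\mathbf{T}')$, so that $I'$ is a well-formed instance. This follows from the same argument applied to $W$, which is disjoint from $Z$.

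\textbf{First bullet.} We prove the contrapositive: if $X'$ is a solution of $I'$ with $|X'|\le k$, then $X'$ is a solution of $I$. The multiway cut property transfers by the forward direction of the connectivity fact.

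For the multicut property, suppose some $(s,t)\in\mathbf{T}$ is connected in $G-X'$ by a path $P$. By the case analysis below, $P$ must contain a vertex outside $Z$.
\begin{itemize}
\item If $s\notin Z$, take $s'=s$. If $s\in Z$, let $s'$ be the first vertex of $P$ outside $Z$; then $s'\in\phi(s)$.
\item Choose $t'$ symmetrically, using the last vertex of $P$ outside $Z$ when $t\in Z$.
\item The subpath of $P$ from $s'$ to $t'$ lies in $G-X'$, so $s'$ and $t'$ are connected in $G'-X'$. But $(s',t')\in\mathbf{T}'$, a contradiction.
\end{itemize}

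\textbf{Main obstacle.} The delicate case is when $P$ lies entirely inside $Z$, with $s,t\in Z$. Here we use that $W$ is a solution of $I$. Since $W$ is a multicut of $(G,\mathbf{T})$, $P$ must meet $W$. But $W\cap Z=\emptyset$, so $P$ contains a vertex outside $Z$ after all. This ensures $s'$ and $t'$ exist, and the endpoint choices handle the degenerate case $s'=t'$, which also gives a contradiction since $(s',s')\in\mathbf{T}'$ would force $s'$ to be deleted. The remaining steps are routine path surgery.
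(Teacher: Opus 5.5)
Your proof is correct and follows essentially the same route as the paper: you collapse maximal $Z$-subpaths into torso edges and expand torso edges back into $Z$-paths. You also use that $W$ is a multicut disjoint from $Z$, so every $s$--$t$ path contains a vertex outside $Z$, and the first and last such vertices give a pair in $\mathbf{T}'$. Phrasing this via connectivity rather than path-hitting, and adding the checks that $I'$ is well formed and that the degenerate case $s'=t'$ is covered, are harmless additions.
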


\begin{proof}
    Suppose $I$ is a \textbf{no}-instance. We assume $I'$ is a \textbf{yes}-instance and admits a solution $X$. We show that $X$ is also a solution to $I$. Let $P$ be a path in $G$ connecting two vertices in $W$. By replacing $Z$-paths that are subpaths of $P$ by edges in $E^+$, we get a path $P'$ in $G'$ connecting two vertices in $W$. Since $P'$ is hit by $X$ and vertices in $P'$ are all in $P$, $P$ is also hit by $X$. Now let $P$ in $G$ connects a pair $(s,t)\in \mathbf{T}$. We recall that $W$ is a vertex multicut of $(G,\textbf{T})$. Then $P$ must contain some vertex $w \notin Z$. Thus we can always find a vertex $v_s \notin Z$ closest (in the subgraph induced by $P$) to $s$ in $P$. Similarly, let $v_t$ be the vertex not in $Z$ that is closest  to $t$ in $P$. Then $(v_s,v_t)\in \mathbf{T}'$.
    By replacing $Z$-paths that are subpaths of $P$ by edges in $E^+$, we get a path $P'$ in $G'$ connecting $s'$ and $t'$. Then $P'$ is hit by $X$. Since vertices in $P'$ are all in $P$, $P$ is hit by $X$. Therefore, $I$ is a \textbf{yes}-instance with a solution $X$, which is a contradiction. Based on the above proof, $I'$ has to be a \textbf{no}-instance.
    
    Consider the other direction. Let $X$ be a solution to $I$ which satisfies that $X\cap Z=\emptyset$.
    Consider a path $P'$ in $G'$ connecting two vertices in $W$. We replace all edges in $P'$ by $Z$-paths, the path we obtained is a path $P$ connecting the two vertices in $W$ in $G$, thus hit by $X$. Since $X\cap Z=\emptyset$, $X$ also hits $P'$. Consider a path $P'$ in $G'$ connecting some $(s',t')\in \mathbf{T}'$. By definition of \verb|torso|, there is a pair $(s,t)\in \mathbf{T}$ such that $s$ and $s'$ are connected by a $Z$-path or $s=s'\notin Z$, while $t$ and $t'$ are also connected by a $Z$-path or $t=t'\notin Z$. If $s\in Z$ or $t\in Z$, we attach the corresponding $Z$-path to the end of $P'$ and replace all edges in $P'$ by $Z$-paths to obtain a path $P$ in $G$ connecting $s$ and $t$. So $P$ is hit by $X$. Since $X\cap Z = \emptyset$, $P'$ is also hit by $X$.   
\end{proof}

\paragraph{Contraction.} Let $I=(G,\mathbf{T},W,k)$ be an instance of \abbrVMCcomp{}, let $Z\subseteq V(G)\setminus W$ be a set of vertices disjoint from $W$ and let $f:Z\rightarrow W$ be a mapping. To describe the contraction operation easily, we let $f^*$ be the extension of $f$ such that $f^*(v)=f(v)$ for all $v\in Z$ and $f^*(v)=v$ for all $v\in V(G)\setminus Z$.
We define $\verb|contract|(I,f)$ to produce a new instance $I'=(G',\mathbf{T}',W',k')$, in which
\begin{itemize}
    \item $V(G')=V(G)\setminus Z$ and $E(G')=\{(f^*(u),f^*(v))\mid (u,v)\in E(G)\}$;
    \item $\mathbf{T}' = \{(f^*(s),f^*(t))\mid (s,t)\in \mathbf{T}'\}$;
    \item $W'=W$;
    \item $k'=k$.
\end{itemize}

\begin{lemma}\label{lem:contract}
    Let $I=(G,\mathbf{T},W,k)$ be an instance of \abbrVMCcomp{}, $Z\subseteq V$ is a vertex set disjoint from $W$ and $f:Z\rightarrow W$ be a mapping.
    Let $I'=(G',\mathbf{T}',W',k')$ be the result of $\verb|contract|(I,f)$,
    we have that,
    \begin{itemize}
        \item if $I$ is a \textbf{no}-instance then $I'$ is a \textbf{no}-instance;
        \item if there is a solution $X$ of size at most $k$ to $I$ such that $X\cap Z = \emptyset$, and $\forall v\in Z$, $v$ and $f(v)$ are in the same component in $G-X$, then $X$ is also a solution of $I'$.
    \end{itemize}
\end{lemma}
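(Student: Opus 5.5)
We argue the two items separately, in the same style as the proof of Lemma~\ref{lem:torso}. Throughout we use a lifting/projection correspondence between paths of $G$ and of $G'$. Every edge $(u,v)$ of $G$ maps to the edge $(f^*(u),f^*(v))$ of $G'$, or to a single vertex if $f^*(u)=f^*(v)$; the latter is allowed since our paths may repeat consecutive vertices. Hence every path $P$ in $G$ projects to a walk $f^*(P)$ in $G'$. Conversely, every edge $(a,b)$ of $G'$ comes from some edge $(u,v)$ of $G$ with $f^*(u)=a$ and $f^*(v)=b$. Since vertices of $Z$ are only ever merged into vertices of $W$, each vertex of $G'$ outside $W$ has the same identity as in $G$.

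\textbf{First item (no-instances are preserved).} We prove the contrapositive. Let $X$ be a solution of $I'$, so $X\cap W=\emptyset$ and $|X|\le k$. Since $V(G')=V(G)\setminus Z$, we have $X\subseteq V(G)\setminus(Z\cup W)$. Take any path $P$ in $G$ that connects two distinct vertices of $W$, or connects a pair $(s,t)\in\mathbf{T}$. Its projection $f^*(P)$ is a walk in $G'$. This walk connects two distinct vertices of $W'=W$ (elements of $W$ are fixed by $f^*$), or it connects $(f^*(s),f^*(t))\in\mathbf{T}'$. Therefore $X$ hits $f^*(P)$. Every vertex of $f^*(P)$ outside $W$ is the image of itself, so it lies on $P$. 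As $X\cap W=\emptyset$, it follows that $X$ hits $P$. Hence $X$ is a solution of $I$.

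\textbf{Second item.} Let $X$ be a solution of $I$ with $X\cap Z=\emptyset$ such that each $v\in Z$ lies in the same component of $G-X$ as $f(v)$. Then $X\subseteq V(G')\setminus W$. Suppose, for contradiction, that $G'-X$ contains a path $P'$ between two distinct vertices of $W$, or between some $(f^*(s),f^*(t))\in\mathbf{T}'$. We lift $P'$ edge by edge. Each edge $(a,b)$ of $P'$ comes from an edge $(u,v)$ of $G$ with $u,v\notin X$ (because $u$ and $v$ are either $a,b$ themselves or vertices of $Z$). Whenever $u\in Z$, the vertex $u$ is connected to $f(u)=a$ in $G-X$ by hypothesis, and the same holds for $v$ and $b$. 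Concatenating these pieces shows that consecutive vertices of $P'$ are connected in $G-X$. At the endpoints we use the same fact: $s$ is connected to $f^*(s)$ in $G-X$ (trivially if $s\notin Z$), and likewise $t$ to $f^*(t)$. This yields a connection in $G-X$ between two distinct vertices of $W$, or between $s$ and $t$, contradicting that $X$ is a solution of $I$. We need only connectivity here, not a simple path, because a walk in $G-X$ contains a simple path.

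\textbf{Main obstacle.} The step needing the most care is the bookkeeping in the second item. We must ensure that all edges of $G'$ touching $W$ have valid lifts, and that the degenerate case $f^*(s)=f^*(t)$ is handled. That case means $s$ and $t$ are both connected to the same vertex $w\in W$ in $G-X$, so $s$ and $t$ are connected in $G-X$, which again contradicts that $X$ is a solution of $I$. The parameter $k$ is unchanged throughout, so the size bound is preserved automatically.
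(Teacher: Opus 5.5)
Your proof is correct and follows the same route as the paper. For the first item you project paths of $G$ through $f^*$, and for the second you lift each edge of $G'$ back to a connection in $G-X$, using the hypothesis that every $v\in Z$ is connected to $f(v)$; your version is slightly tidier, since it avoids the paper's case split on whether edge endpoints lie in $W$ and it explicitly handles endpoints $s\in Z$ and the degenerate case $f^*(s)=f^*(t)$, which the paper glosses over.
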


\begin{proof}
    If $I'$ has a solution $X$, we claim that $X$ is also a solution for $I$. Consider any path $P$ to be hit in $I$. We replace every vertex $v\in P\cap Z$ by $f(v)$, then the path becomes a path to be hit in $I'$, which has to intersect $X$. Thus, $P$ is also hit by $X$.

    If $I$ has a solution $X$ with the property as stated, consider a path $P$ to be hit in $I'$. For every edge $(u,v)$ in $P\setminus E(G)$, it has to be the case that $u\in W$ or $v\in W$. If both $u$ and $v$ are in $W$, by construction of $I'$, there exist $u',v'\in V\setminus Z$ such that $f^*(u')=u$, $f^*(v')=v$ and $(u',v')\in E(G)$. Moreover, $u',f^*(u)$ are connected and $v',f^*(v)$ are connected in $G-X$. Thus two vertices in $W$ are connected in $G-X$, contradicting that $X$ is a solution for $I$. Thus, we can assume $u\in W$ and $v\notin W$. Then there exists $u'$ such that $f(u')=u$ and $(u',v)\in E(G)$, and there is a path $P'$ connecting $u'$ and $u$ in $G-X$. Let $P'' = (u',v)$. Then $P'P''$ is a valid path in $G$. We can replace every edge $(u,v)\in P\setminus E(G)$ by such a subpath in $G$ and what we get will be a path to be hit in $I$. The new path is thus hit by $X$ (at some vertex not in ``$P'$'', since $P'$ is a path in $G-X$), which implies $P$ is hit by $X$. 
\end{proof}

\section{Shadow Removal}\label{sec:shadowremoval}

Given two vertex subsets $Y$ and $W$, the shadow of $Y$ with respect to $W$ 
is defined as vertices not reachable from $W$ after removing $Y$ from the graph. In many connectivity related problems~\cite{DBLP:journals/talg/ChitnisCHM15-FPT-DSFVS,DBLP:journals/siamcomp/ChitnisHM13-FPT-dir-multiway-cut,DBLP:journals/siamcomp/MarxR14-first-FPT}, it has been shown that finding a \emph{shadowless} solution, i.e., a solution that results in an empty shadow, is often much easier.
The shadow removal technique aims to eliminate the shadow of the original solution with respect to some $W$. In this section, we
explore how to improve the running time of the general shadow removal technique, and we then show how to apply shadow removal on \VMCcomp{} to prove Theorem~\ref{thm:shadowless}.

\subsection{Shadow Removal}
\begin{definition}[Shadow]
    Let $G = (V, E)$ be a graph. 
    Let $W \subseteq V$ and $Y \subseteq V$ be vertex sets such that $W \cap Y = \emptyset$. 
    The \emph{shadow} of $Y$ with respect to $W$ is defined as the set of vertices that are not connected to $W$ in the graph $G - Y$, that is,
    \[
        R_{G,W}(Y) = \{ v \in V \setminus W \mid \text{$v$ and $W$ are not connected in } G - Y \}.
    \]
    When the graph $G$ is clear from the context, we simply write $R_W(Y)$.
\end{definition}

The Shadow removal technique is applicable for the ``closest'' solutions. This means the solution $Y$ is pushed as close as possible to $W$. The concept of ``closest'' is formalized as $k$-shadow-removable in Definition~\ref{def:closestset}.

\begin{definition}[$k$-shadow-removable Set]\label{def:closestset}
    Let $G=(V,E)$ be a graph. Let $W\subseteq V$ and $Y\subseteq V$ be two disjoint vertex sets. We say that $Y$ is \emph{$k$-shadow-removable} with respect to $W$ if
    \begin{itemize}
        \item $|Y|\leq k$;
        \item for each vertex $v \in R_W(Y)$, there is an important $\{v\}-W$ separator $N(C)$ of size at most $k$ such that $N(C)\subseteq Y$;
        \item for each vertex $v\in Y$, any important $\{v\}-W$ separator $N(C)$ of size at most $k$ is not completely contained in $Y$, i.e. $|N(C) \setminus Y|\geq 1$.
    \end{itemize}
\end{definition}

Generally speaking, the shadow-removal technique generates a family \(\mathcal{Z}\) of vertex sets independently of the unknown solution \(Y\), with the guarantee that there exists at least one set \(Z \in \mathcal{Z}\) that covers the shadow of \(Y\) while remaining disjoint from \(Y\). 
For each \(Z \in \mathcal{Z}\), we construct a new graph in which the vertices in \(Z\) are not present, while the connectivity among the remaining vertices is preserved. 
Since there always exists some \(Z \in \mathcal{Z}\) that covers the shadow of the solution, the problem reduces to finding a shadowless solution in the resulting graphs. 
In the vertex-based setting, the family \(\mathcal{Z}\) can be handled analogously by applying the so-called \verb|torso| operation.

The main theorems, also the proof goal in this section, are Theorem~\ref{thm:shadow_removal} and Theorem~\ref{thm:deterministic_shadow_removal}. We first prove the randomized version in Theorem~\ref{thm:shadow_removal}, and the derandomized version in Theorem~\ref{thm:deterministic_shadow_removal} is obtained by standard tools.

\begin{theorem}\label{thm:shadow_removal}
    There is a randomized algorithm running in $k^{O(k)}n^{O(1)}$ time such that, on input a graph $G=(V,E)$ and a vertex set $W\subseteq V$ and a parameter $k$, outputs a family $\mathcal{Z}$ of size at most $k^{O(k)}$ of vertex sets disjoint from $W$ and ensures the following: For any $k$-shadow-removable vertex set $Y$ with respect to $W$,
    with probability at least $0.99$, there is a vertex set $Z\in \mathcal{Z}$ such that
    \begin{itemize}
        \item $Y\cap Z = \emptyset$;
        \item $R_W(Y)\subseteq Z$.
    \end{itemize}
\end{theorem}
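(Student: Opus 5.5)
We follow the random-coloring scheme from the proof outline. One run of the procedure colors each vertex of $V\setminus W$ red independently with probability $1/k^2$ and blue otherwise. It then outputs
\[
Z=\{v\in V\setminus W \mid \text{some important } \{v\}-W \text{ separator of size at most } k \text{ is entirely red}\}.
\]
By Lemma~\ref{lem:imp_sep}, $Z$ can be computed in $4^k n^{O(1)}$ time, since for each $v$ we enumerate all important $\{v\}-W$ separators of size at most $k$ and check their colors. We repeat this independently $N=c\cdot e\cdot k^{2k}$ times and let $\mathcal{Z}$ be the family of all outputs. Then $|\mathcal{Z}|=k^{O(k)}$, and the total running time is $k^{O(k)}n^{O(1)}$. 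It remains to show that a single run succeeds with probability at least $k^{-2k}/e$. Once that holds, the failure probability after $N$ runs is at most $(1-k^{-2k}/e)^N\le e^{-c}\le 0.01$ for a suitable constant $c$.

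\textbf{The hitting-set lemma.} The key ingredient is a strengthened version of the Korhonen--Lokshtanov lemma. For a vertex $v$ and a set $X$ with $v\notin X$ such that $X$ does not separate $v$ from $W$, there is a set $Q$ of at most $k$ vertices with $Q\cap X=\emptyset$ that intersects every important $\{v\}-W$ separator of size at most $k$.

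We apply this with $X=Y\setminus\{v\}$ for each $v\in Y$. The lemma only requires that every important separator of size at most $k$ has a vertex outside $Y\setminus\{v\}$, and this is supplied by the third condition of Definition~\ref{def:closestset}. So the statement we actually need is: if every important $\{v\}-W$ separator of size at most $k$ has a vertex outside $X$, then some $Q$ of size at most $k$, disjoint from $X$, hits all of them.

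We would prove it by induction on $k$, mimicking the $4^k$ counting argument. Let $N(C^*)$ be the farthest minimum $\{v\}-W$ separator. By Lemma~\ref{lem:include_minsep}, every important separator $S$ satisfies $C^*\subseteq C_S$. Hence each vertex $u\in N(C^*)$ either lies in $S$, or lies in $C_S$, in which case $S$ is an important $(C^*\cup\{u\})-W$ separator by Lemma~\ref{lem:impsep_sourceexpand}. If some $u\in N(C^*)\setminus X$ exists, we put $u$ into $Q$. The separators avoiding $u$ are then important separators for the enlarged source $C^*\cup\{u\}$, whose minimum cut value is strictly larger. We recurse on them using the measure $2k-\lambda$, exactly as in the $4^k$ bound, so that the budget spent on $Q$ is charged against the increase in $\lambda$.

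The delicate case is $N(C^*)\subseteq X$. Here we argue that no important separator of size at most $k$ can equal $N(C^*)$, since it would lie inside $X$. We then pick a vertex $u$ of $N(C^*)$ and move it into the source, which increases $\lambda$ without adding anything to $Q$, so progress comes for free. I expect this bookkeeping, which guarantees $|Q|\le k$ while keeping $Q$ disjoint from $X$, to be the main obstacle.

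\textbf{Probability bound.} Fix a $k$-shadow-removable $Y$, and let $Q_v$ be the set given by the lemma for each $v\in Y$. Then $Q_v\cap Y=\emptyset$, since $v$ itself is the source and is not in $Q_v$. Let $\mathcal{E}_1$ be the event that all of $Y$ is red, and $\mathcal{E}_2$ the event that all of $\bigcup_{v\in Y}Q_v$ is blue. These concern disjoint vertex sets, so they are independent. We have $\Pr[\mathcal{E}_1]\ge k^{-2k}$, and since the union has size at most $k^2$, $\Pr[\mathcal{E}_2]\ge(1-1/k^2)^{k^2}\ge 1/(2e)$, which is $\Omega(1)$ for $k\ge2$.

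Now assume both events hold. For every $u\in R_W(Y)$, the second condition of Definition~\ref{def:closestset} gives an important separator of size at most $k$ contained in $Y$. That separator is all red, so $u\in Z$, and therefore $R_W(Y)\subseteq Z$. For every $v\in Y$, each important $\{v\}-W$ separator of size at most $k$ contains a blue vertex of $Q_v$, so $v\notin Z$, and therefore $Y\cap Z=\emptyset$. This gives the required per-run success probability of $k^{-O(k)}$, and the theorem follows.
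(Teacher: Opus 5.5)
Your overall scheme is the paper's: color red with probability $1/k^2$, let $Z$ be the vertices with an all-red important separator of size at most $k$, and get $Y\cap Z=\emptyset$ from hitting sets $Q_v$ of size at most $k$ disjoint from $Y$. The gap is in the hitting-set lemma. It is the one genuinely new ingredient, and you leave it as ``the main obstacle''.

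\textbf{The gap.} Your treatment of the case $N(C^*)\subseteq X$ is not sound as written. Moving a vertex $u\in N(C^*)$ into the source without adding anything to $Q$ discards every remaining separator that contains $u$. Those separators are then neither hit nor carried into the recursion.

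\textbf{The missing idea.} This case never occurs while anything is left to hit, because at every stage the farthest minimum separator for the current source is itself an important $\{v\}-W$ separator.
\begin{itemize}
\item The current source $A$ is either $\{v\}$ or $C_{i-1}\cup\{v_i\}$ with $v_i\in N(C_{i-1})$. In both cases $A$ is connected and contains $v$.
\item Let $N(C_i)$ be the farthest minimum $A-W$ separator. Since $A$ is connected, $C_i$ is the component of $v$ in $G-N(C_i)$. Minimality of $N(C_i)$ as an $A-W$ separator then transfers to $\{v\}-W$.
\item Suppose a minimal $\{v\}-W$ separator $N(C')$ had $C'\supsetneq C_i$ and $|N(C')|\le|N(C_i)|$. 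Then it would be a minimum $A-W$ separator farther than $N(C_i)$, a contradiction. So $N(C_i)$ is important for $\{v\}-W$.
\item Hence the hypothesis on $X$ applies to $N(C_i)$ directly. If $|N(C_i)|\le k$, it has a vertex outside $X$. If $|N(C_i)|>k$, every separator still under consideration is an $A-W$ separator and so has size greater than $k$; you simply stop.
\end{itemize}

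\textbf{The resulting construction.} There is no branching, so no $2k-\lambda$ measure is needed. The construction is a single chain, exactly the paper's Lemma~\ref{lem:hitting_impsep}.
\begin{itemize}
\item While $|N(C_i)|\le k$, pick $v_{i+1}\in N(C_i)\setminus X$.
\item Since $1\le|N(C_0)|<|N(C_1)|<\cdots$ and the chain stops once the size exceeds $k$, at most $k$ vertices are chosen.
\item For the hitting property, take an important $\{v\}-W$ separator $S=N(C_S)$ with $|S|\le k$, and the first chosen $v_i\notin C_S$.
\item Lemma~\ref{lem:include_minsep} and Lemma~\ref{lem:impsep_sourceexpand}, applied along the chain, give $C_{i-1}\subseteq C_S$. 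Hence $v_i\in N(C_{i-1})\setminus C_S\subseteq S$.
\item Such an $i$ exists: otherwise $S$ would separate the final source from $W$, forcing $|S|>k$.
\end{itemize}

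\textbf{The rest.} Independence of $\mathcal{E}_1,\mathcal{E}_2$ via $Q_v\cap Y=\emptyset$, the per-run bound, and amplification are fine up to constants. You actually derive $k^{-2k}/(2e)$ per run rather than $k^{-2k}/e$, so adjust $N$. For $k=1$, $p=1/k^2=1$ makes $\Pr[\mathcal{E}_2]=0$ whenever some $Q_v$ is nonempty; use e.g.\ $p=1/(k^2+1)$. The paper's proof has the same slip.
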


\begin{theorem}\label{thm:deterministic_shadow_removal}
    There is a deterministic algorithm running in $k^{O(k)}n^{O(1)}$ time such that, on input a graph $G=(V,E)$ and a vertex set $W\subseteq V$ and a parameter $k$, outputs a family $\mathcal{Z}$ of size at most $k^{O(k)}\log n$ of vertex sets disjoint from $W$ and ensures the following: For any $k$-shadow-removable vertex set $Y$ with respect to $W$,
    there is a vertex set $Z\in \mathcal{Z}$ such that
    \begin{itemize}
        \item $Y\cap Z = \emptyset$;
        \item $R_W(Y)\subseteq Z$.
    \end{itemize}
\end{theorem}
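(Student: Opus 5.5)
We derive Theorem~\ref{thm:deterministic_shadow_removal} from the randomized shadow covering argument behind Theorem~\ref{thm:shadow_removal}, replacing the random red/blue coloring by a family of colorings obtained from splitters. Recall the covering procedure: for a coloring $\chi:V\setminus W\to\{\text{red},\text{blue}\}$, let $Z_\chi$ be the set of vertices $v\notin W$ such that some important $\{v\}-W$ separator of size at most $k$ is entirely red. By Lemma~\ref{lem:imp_sep}, $Z_\chi$ can be computed in $4^k n^{O(1)}$ time.

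Fix a $k$-shadow-removable $Y$. By the third condition of Definition~\ref{def:closestset}, for each $v\in Y$ the set $Y\setminus\{v\}$ does not separate $v$ from $W$. Otherwise some minimal separator contained in $Y\setminus\{v\}$ would be dominated by an important separator contained in it; more carefully, we would take the important separator obtained by pushing, and it stays inside $Y$, which is a point to check. We then apply the strengthened Korhonen–Lokshtanov hitting lemma to get $Q_v$ of size at most $k$, disjoint from $Y\setminus\{v\}$ (and from $v$, since separators avoid $v$), that hits every important $\{v\}-W$ separator of size at most $k$. Set $Q=\bigcup_{v\in Y}Q_v$, so $|Q|\le k^2$ and $Q\cap Y=\emptyset$.

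The deterministic claim is as follows. If $\chi$ colors all of $Y$ red and all of $Q$ blue, then $Y\cap Z_\chi=\emptyset$, since every important separator of each $v\in Y$ contains a blue vertex of $Q_v$. Moreover $R_W(Y)\subseteq Z_\chi$ by the second condition of Definition~\ref{def:closestset}, because each shadow vertex has an important separator contained in $Y$, and that separator is red. (Vertices of $Y$ are never in the shadow, since $R_W(Y)$ excludes $Y$ through connectivity in $G-Y$ only outside $Y$; we restrict to $V\setminus (W\cup Y)$.)

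It therefore suffices to build, deterministically, a family $\mathcal{F}$ of colorings of $V\setminus W$ such that for every disjoint pair $A,B$ with $|A|\le k$ and $|B|\le k^2$, some $\chi\in\mathcal{F}$ makes $A$ red and $B$ blue. This is exactly an $(n,a,b)$-universal-set style object, and we would use the standard construction: take an $(n,k+k^2,(k+k^2)^2)$-splitter (perfect hash family) of size $k^{O(1)}\log n$, built via Naor–Schulman–Srinivasan~\cite{DBLP:conf/focs/NaorSS95}. For each hash function $h:V\to[(k+k^2)^2]$ and each subset $S\subseteq[(k+k^2)^2]$ of size at most $k$, color $h^{-1}(S)$ red and the rest blue. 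Some $h$ is injective on $A\cup B$, and choosing $S=h(A)$ works. The number of choices of $S$ is $\binom{O(k^4)}{\le k}=k^{O(k)}$, so $|\mathcal{F}|=k^{O(k)}\log n$ and the construction takes $k^{O(k)}n^{O(1)}$ time. We output $\mathcal{Z}=\{Z_\chi:\chi\in\mathcal{F}\}$.

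The main obstacle is the hitting-set step: justifying that $Y\setminus\{v\}$ does not separate $v$ from $W$, and proving the strengthened Korhonen–Lokshtanov lemma with $Q_v$ disjoint from a given non-separating set. For the lemma we would mimic the $4^k$ enumeration proof, taking $Q_v$ as the farthest minimum $\{v\}-W$ separator in $G$ with $Y\setminus\{v\}$ made undeletable, that is, with those vertices given infinite weight. We would then use Lemma~\ref{lem:include_minsep} to argue that every important separator of size at most $k$ meets it, and we expect this to need care. The splitter part is routine.
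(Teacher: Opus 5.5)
Your overall architecture matches the paper's:
\begin{itemize}
\item $Z_\chi$ is defined through entirely red important separators;
\item $R=Y$ and $B=\bigcup_{v\in Y}Q_v$, with $|B|\le k^2$;
\item the derandomization uses an $(n,k+k^2,(k+k^2)^2)$-splitter together with small red subsets of its range. The paper uses exactly $k$ red values and you use at most $k$; this difference is immaterial.
\end{itemize}

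The gap is in the hitting-set step, which is what yields the $k^{O(k)}$ bound.

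First, your claim that the third condition of Definition~\ref{def:closestset} forces $Y\setminus\{v\}$ not to separate $v$ from $W$ is false. It fails exactly because pushing can carry the separator out of $Y$. Take the path with vertices $v,a,x,w$ in this order, with $W=\{w\}$, $Y=\{v,a\}$ and $k=2$. The separator $\{a\}$ is dominated by $\{x\}$, so $\{x\}\not\subseteq Y$ is the only important $\{v\}-W$ separator and the only important $\{a\}-W$ separator. Also $R_W(Y)=\emptyset$, so $Y$ is $2$-shadow-removable. Yet $Y\setminus\{v\}=\{a\}$ separates $v$ from $W$. Hence the non-separating version of the hitting lemma sketched in the overview cannot be invoked for a general $k$-shadow-removable $Y$. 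What you need is Lemma~\ref{lem:hitting_impsep}. Its only hypothesis is that every important $\{v\}-W$ separator of size at most $k$ has a vertex outside $Y$, which is the third condition verbatim. Apply it with $S=\{v\}$, $T=W$ and the set $Y$ itself to get $H_v$ of size at most $k$ disjoint from all of $Y$.

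Second, your sketched proof of that lemma does not work. A single farthest minimum separator, with or without $Y\setminus\{v\}$ made undeletable, need not meet every important separator. Lemma~\ref{lem:include_minsep} only gives $C^*\subseteq C$, not $N(C)\cap N(C^*)\neq\emptyset$. For example, let $v$ be adjacent to $a$, $a$ adjacent to $b_1$ and $b_2$, and both $b_i$ adjacent to $w$. The farthest minimum $\{v\}-W$ separator is $\{a\}$, while $\{b_1,b_2\}$ is an important separator of size $2$ disjoint from it.

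The paper's proof is iterative and marks one vertex per round:
\begin{itemize}
\item Take $v_1\in N(C_0)\setminus Y$.
\item Let $N(C_1)$ be the farthest minimum $(C_0\cup\{v_1\})-W$ separator, take $v_2\in N(C_1)\setminus Y$, and continue in the same way.
\item Each $N(C_i)$ is itself an important $\{v\}-W$ separator, so the hypothesis supplies $v_{i+1}$ as long as $|N(C_i)|\le k$.
\item The sizes $|N(C_i)|$ strictly increase, so at most $k$ vertices are marked.
\end{itemize}
Now take an important $N(C)$ of size at most $k$, and let $v_i$ be the first marked vertex not in $C$. Such a vertex exists, since otherwise $|N(C)|\ge|N(C_l)|>k$. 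Repeated use of Lemmas~\ref{lem:impsep_sourceexpand} and~\ref{lem:include_minsep} gives $C_{i-1}\subseteq C$, hence $v_i\in N(C_{i-1})\setminus C\subseteq N(C)$.

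With this replacement, the rest of your argument goes through as written.
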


\paragraph{Proof of the Shadow Removal Theorem}

Our improvement of the bound is based on the following lemma. This lemma states that there is a hitting set of important separators, whose size is small. Moreover, the hitting set can avoid a given fixed set.

\begin{lemma}\label{lem:hitting_impsep}
    Let $G=(V,E)$ be a graph, $S,T\subseteq V$ be disjoint vertex sets. Let $\mathcal{F}$ be the family of $S-T$ important separators of size at most $k$.  Let $Y\subseteq V$ be a vertex set so that for all $X\in \mathcal{F}$, $|X\setminus Y|\geq 1$. Then there exists a vertex set $H$ of size at most $k$ such that $H\cap Y=\emptyset$ and $H$ is a hitting set of $\mathcal{F}$, i.e. $\forall X\in \mathcal{F},|H\cap X|\geq 1$.
\end{lemma}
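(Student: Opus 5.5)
We argue by induction on $k$, mirroring the branching proof of the $4^k$ bound (Lemma~\ref{lem:imp_sep}) in the style of Korhonen and Lokshtanov. If $\mathcal{F}=\emptyset$ take $H=\emptyset$, so assume $\mathcal{F}\neq\emptyset$. Then an $S-T$ separator exists. Let $N(C^*)$ be the farthest minimum $S-T$ separator from Lemma~\ref{lem:unique-fc}, and put $\lambda=|N(C^*)|\le k$. By Lemma~\ref{lem:include_minsep}, every $N(C)\in\mathcal{F}$ satisfies $C^*\subseteq C$. By Lemma~\ref{lem:impsep_sourceexpand} with $S'=C^*$, every such $N(C)$ is also an important $C^*-T$ separator. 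So we may replace $S$ by $C^*$ and assume the minimum $S-T$ separator is unique and farthest, namely $N(S)$, of size $\lambda$.

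The key step is to choose a vertex $v\in N(S)\setminus Y$; it exists because $N(S)\in\mathcal{F}$ and $|N(S)\setminus Y|\ge 1$. Every $X\in\mathcal{F}$ either contains $v$ or not. If $v\notin X=N(C)$, then since $C\supseteq S$ and $v\in N(S)$, we get $v\in C$. So $X$ is an $(S\cup\{v\})-T$ separator. I claim it is an important $(S\cup\{v\})-T$ separator: this is Lemma~\ref{lem:impsep_sourceexpand} applied with $S'=S\cup\{v\}\subseteq C$. Moreover, since $N(S)$ is the unique farthest minimum separator, the minimum $(S\cup\{v\})-T$ separator has size at least $\lambda+1$. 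Thus $\mathcal{F}'=\{X\in\mathcal{F}: v\notin X\}$ is contained in the family of important $(S\cup\{v\})-T$ separators of size at most $k$. The hypothesis $|X\setminus Y|\ge1$ is inherited by this subfamily.

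For the recursion I will use the measure $\mu=2k-\lambda$ rather than $k$ alone, following the $4^k$ proof. The goal is a hitting set of size at most $\mu$, or better, at most $k-\lambda+1$. Adding $v$ to $H$ handles every separator containing $v$, at cost one, while the remaining family $\mathcal{F}'$ has a larger $\lambda$. Iterating with source $S\cup\{v\}$ (recomputing the farthest minimum separator each time) yields a sequence of chosen vertices $v_1,v_2,\dots$, each outside $Y$. The min-cut value strictly increases at every step, and the process stops once it exceeds $k$, because then the remaining family is empty. So at most $k-\lambda+1\le k$ vertices are chosen, and $H=\{v_1,v_2,\dots\}$ hits all of $\mathcal{F}$ and avoids $Y$.

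The main obstacle is that after restricting to $\mathcal{F}'$, its members are important for the new source $S\cup\{v\}$, but I must check that each remaining family still contains a member that is fully inside the new farthest minimum separator. Otherwise the vertex $v_{i+1}$ might not exist. Here I need care: $v_{i+1}$ should be chosen from the new farthest minimum separator $N(C_i^*)$ only if $N(C_i^*)$ is itself in the relevant family. If not, I instead replace $\mathcal{F}'$ by all important separators of size at most $k$ for the new source. Each member of $\mathcal{F}'$ remains such an important separator, so hitting this superfamily suffices. However, the avoidance condition $|X\setminus Y|\ge1$ must then hold for the farthest minimum separator of the new source. If it fails, that separator lies inside $Y$, and I would argue that this contradicts the hypothesis via a pushing argument: such a separator $N(C_i^*)$, with $C_i^*$ containing $S$, would dominate some member of $\mathcal{F}$. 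This verification is what I expect to require the most work.
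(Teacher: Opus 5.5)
Your overall scheme matches the paper's proof. You take the farthest minimum separator $N(C_0)$ and mark $v_1\in N(C_0)\setminus Y$. You then pass to the farthest minimum $(C_0\cup\{v_1\})$--$T$ separator $N(C_1)$, mark $v_2\in N(C_1)\setminus Y$, and so on. The strict growth $|N(C_0)|<|N(C_1)|<\cdots$ bounds the number of marks by $k$, and Lemmas~\ref{lem:include_minsep} and~\ref{lem:impsep_sourceexpand} show that every member of $\mathcal{F}$ is hit.

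However, the step you single out as the main obstacle is left unproven, and it is essential. For $i\ge 1$ you must show $N(C_i)\setminus Y\neq\emptyset$ whenever $|N(C_i)|\le k$; otherwise $v_{i+1}$ need not exist. The hypothesis only concerns members of $\mathcal{F}$, and you have not shown that $N(C_i)$ is one. Your fallback does not close this. The family of all important separators for the new source is only shown to \emph{contain} $\mathcal{F}'$, so the hypothesis does not transfer to it, and the ``pushing/domination'' argument is never specified. The same one-sided inclusion is all you have when you ``replace $S$ by $C^*$'': Lemma~\ref{lem:impsep_sourceexpand} gives only one direction.

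The missing observation is short. Let $A=C_{i-1}\cup\{v_i\}$. Inductively, $C_{i-1}$ is exactly the set reachable from $S$ in $G-N(C_{i-1})$. Since $v_i\in N(C_{i-1})$ has a neighbour in $C_{i-1}$, every vertex of $A$ is reachable from $S$ by a path inside $A$. This has three consequences for any $A$--$T$ separator $N(C)$ with $A\subseteq C$:
\begin{enumerate}
\item $C$ is exactly the set $S$ reaches in $G-N(C)$.
\item Minimality of $N(C)$ with respect to $A$ implies minimality with respect to $S$.
\item Any minimal $S$--$T$ separator $N(C')$ with $C'\supsetneq C$ is also a minimal $A$--$T$ separator whose $A$-side is $C'$.
\end{enumerate}
Together with Lemma~\ref{lem:impsep_sourceexpand}, this shows that the important $A$--$T$ separators are precisely the important $S$--$T$ separators $N(C)$ with $C\supseteq A$.

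In particular, $N(C_i)$ is important for $A$, so it lies in $\mathcal{F}$ whenever $|N(C_i)|\le k$. It also avoids $v_1,\dots,v_i$, which all lie in $C_i$. The hypothesis then gives $N(C_i)\setminus Y\neq\emptyset$ directly, and no pushing argument is needed. This is exactly what the paper's ``for the same reason as above'' silently uses. With this inserted your proof is complete; the induction on $k$ and the measure $2k-\lambda$ play no role.
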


\begin{proof}
    Consider the following procedure. Let $N(C_0)$ be the farthest $S-T$ minimum separator, it is an important $S-T$ separator, and every $S-T$ important separator is its superset, by Lemma~\ref{lem:include_minsep}. Thus if $|N(C_0)|\geq k+1$, we set $H=\emptyset$. If $|N(C_0)|\leq k$, by the definition of $Y$, $|N(C_0) \setminus Y|\geq 1$. We mark an arbitrary vertex $v_1$ in $N(C_0) \setminus Y$. Now we let $N(C_1)$ be the farthest $C_0\cup\{v_1\}-T$ min separator. If $|N(C_1)|\leq k$, we mark an arbitrary vertex $v_2$ in $N(C_1) \setminus Y$, for the same reason as above. And then we turn to considering the farthest $C_1\cup\{v_2\}-T$ separator. We repeat this procedure to find $C_2,C_3,...,C_l$, until $|N(C_l)|\geq k+1$. By the uniqueness of farthest min separators, $1\leq |N(C_0)\setminus Y|\leq|N(C_0)|<|N(C_1)|<|N(C_2)|<...<|N(C_l)|$, thus $l\leq k$. Let $H$ be the set of all the marked vertices, thus $|H|= l\leq k$.
    
    It remains to prove for every important $S-T$ separator $N(C)$ s.t. $|N(C)|\leq k$, $|N(C)\cap H|\geq 1$. To show this, let $i$ be minimized such that $v_i\notin C$.
    If $i$ does not exist, we set $i=l+1$. Then $v_1,v_2,...,v_{i-1}\in C$.
    By Lemma~\ref{lem:include_minsep}, $C_0\subseteq C$. Since $N(C)$ is an important $S-T$ separator, and $S\subseteq C_0\cup \{v_1\}\subseteq C$, $N(C)$ is an important $C_0\cup \{v_1\}-T$ important separator by Lemma~\ref{lem:impsep_sourceexpand}. Then by Lemma~\ref{lem:include_minsep}, $C_1\subseteq C$. Then since $S\subseteq C_1\cup \{v_2\}\subseteq C$, $N(C)$ is an important $C_1\cup \{v_2\}-T$ separator, so $C_2\subseteq C$\dots Finally we get $C_{i-1} \subseteq C$ if $i \le l$.
    If $i=l+1$, we further have that $N(C)$ is an important $C_{l-1}\cup\{v_l\}-T$ cut. Since $N(C_l)$ is a farthest $C_{l-1}\cup\{v_l\}-T$ min cut, $|N(C)| \ge |N(C_l)| \ge k+1$, which is a contradiction. Thus we always have that $i \le l$. Then there always exists $v_i \in H$ such that $v_i \in N(C_{i-1})$, $v_i \notin C$ and $C_{i-1} \subseteq C$. This implies that $v_i\in N(C_{i-1})-C\subseteq N[C]-C=N(C)$.
\end{proof}

The idea of the shadow removal algorithm is to randomly color the graph with red and blue. Hopefully, $Y$, the set not known by the algorithm, is full of red vertices; and for every vertex $v\in Y$,  $\{v\}-W$ important separators (of size at most $k$) always contain at least one blue vertex.  To describe the good event, we introduce $(R,B)$-restricted colorings.

\begin{definition}
    Let $c:V\rightarrow \{\text{red},\text{blue}\}$ be a $2$-coloring. Let $R,B\subseteq V$ be disjoint sets. $c$ is said to be $(R,B)$-restricted if $\forall v\in R,c(v)=\text{red}$ and $\forall v\in B,c(v)=\text{blue}$.
\end{definition}

In the following lemma, we describe what the algorithm does when the coloring is given. We show that for every $k$-shadow-removable set $Y$, there are sets $R,B\subseteq V$ with small size such that the algorithm finds the required set $Z$ as long as the coloring is $(R,B)$-restricted, i.e. it colors the vertices in $R,B$ correctly. Notice that the algorithm does not know $Y$ but the produced vertex set $Z$ covers the shadow of $Y$.

\begin{lemma}\label{lem:shadow_coloring}
    There is an algorithm, on input a graph $G=(V,E)$, a vertex set $W\subseteq V$, a parameter $k$, and a $2$-coloring $c:\rightarrow \{\text{red},\text{blue}\}$, outputs in $4^k n^{O(1)}$ time a vertex set $Z\subseteq V$ disjoint from $W$ and ensures the following property:
    Let $Y\subseteq V$ be any $k$-shadow-removable set with respect to $W$.
    Then there exist disjoint sets $R,B\subseteq V$ with $|R|\leq k$ and $|B|\leq k^2$, so that for any $(R,B)$-restricted coloring $c$, on input $G,W,k,c$, the output $Z$ of the algorithm satisfies
    \begin{itemize}
        \item $Y\cap Z = \emptyset$;
        \item $R_W(Y)\subseteq Z$.
    \end{itemize}
\end{lemma}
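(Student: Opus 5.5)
The algorithm will be the one sketched in the overview. Given the coloring $c$, for every vertex $v\in V\setminus W$ we enumerate all important $\{v\}-W$ separators of size at most $k$ using Lemma~\ref{lem:imp_sep}. We put $v$ into $Z$ if and only if at least one of them is entirely red. Each vertex costs $4^kn^{O(1)}$ time, so the total running time is $4^kn^{O(1)}$. By construction $Z$ is disjoint from $W$.

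Next we choose $R$ and $B$ for a fixed $k$-shadow-removable set $Y$. Set $R=Y$, so $|R|\le k$. For each $v\in Y$, the third condition of Definition~\ref{def:closestset} says that every important $\{v\}-W$ separator $X$ of size at most $k$ satisfies $|X\setminus Y|\ge 1$. So Lemma~\ref{lem:hitting_impsep} applies with $S=\{v\}$, $T=W$ and the avoided set $Y$. It gives a hitting set $H_v$ with $|H_v|\le k$ and $H_v\cap Y=\emptyset$. We need $S\cap T=\emptyset$, which holds because $Y\cap W=\emptyset$. Set $B=\bigcup_{v\in Y}H_v$. Then $|B|\le k^2$, and $B\cap R=\emptyset$ because each $H_v$ avoids $Y$.

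Now fix any $(R,B)$-restricted coloring; there are two things to check.

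First, $R_W(Y)\subseteq Z$. Take $u\in R_W(Y)$. Since $Y\cap W=\emptyset$ and $u\notin W$ by definition of the shadow, $u$ is one of the vertices we process. By the second condition of Definition~\ref{def:closestset}, there is an important $\{u\}-W$ separator $N(C)\subseteq Y$ of size at most $k$. Every vertex of $Y=R$ is red, so this separator is all red and $u\in Z$.

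Second, $Y\cap Z=\emptyset$. Take $v\in Y$. Every important $\{v\}-W$ separator of size at most $k$ meets $H_v\subseteq B$, so it contains a blue vertex and none is fully red. Hence $v\notin Z$.

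The only nontrivial ingredient is the hitting-set Lemma~\ref{lem:hitting_impsep}, which is already proved. The remaining care is bookkeeping: making sure the source $\{v\}$ is disjoint from $W$ and from the set being avoided. Note that $v\in Y$, so $v$ lies in the avoided set; this is harmless because $v$ itself never belongs to any $\{v\}-W$ separator.
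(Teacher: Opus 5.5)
Your proposal is correct and follows essentially the same argument as the paper. You define $Z$ as the set of vertices $v\in V\setminus W$ that have an all-red important $\{v\}-W$ separator of size at most $k$, take $R=Y$, and take $B=\bigcup_{v\in Y}H_v$ using the $Y$-avoiding hitting sets from Lemma~\ref{lem:hitting_impsep}; your extra checks ($\{v\}\cap W=\emptyset$, $R\cap B=\emptyset$, and that $v$ lying in the avoided set is harmless) only make explicit what the paper's proof leaves implicit.
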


\begin{proof}
    We pick $Z$ to be
    \[
    Z = \{v\in V-W\mid\exists \{v\}-W \text{~important separator~}N(C)\text{~of size at most~}k\text{~s.t.~}\forall u\in N(C),c(u)=\text{red}\}
    \]
    By Lemma~\ref{lem:imp_sep}, this can be computed in time $4^kn^{O(1)}$.
    Let $Y$ be as described in Lemma~\ref{lem:shadow_coloring}. Recall that for each vertex $v \in R_W(Y)$, there is an important $\{v\}-W$ separator $C\supseteq \{v\}$ of size at most $k$ such that $N(C)\subseteq Y$. By the construction of $Z$, if $\forall v\in Y,c(v)=\text{red}$, then every $v\in R_W(Y)$ will be picked into $Z$. So we can set $R=Y$.

    To ensure that $Y\cap Z=\emptyset$, we require that for every vertex $v\in Y$, any important $\{v\}-W$ separator $N(C)$ of size at most $k$ must contain a blue vertex.
    Note that for any $v \in Y$, $Y$ does not completely contain any $\{v\}-W$ important separator of size at most $k$. We can apply Lemma~\ref{lem:hitting_impsep} here to get a hitting set $H_v$ for each $v\in Y$ to hit all $\{v\}-W$ important separators of size at most $k$, such that $H_v$ is disjoint from $Y$ (thus also disjoint from $R$). Let $H=\bigcup_{v\in Y}H_v$. As long as $\forall u\in H, c(u)=\text{blue}$, $Y\cap Z=\emptyset$ then holds. So we can set $B = H$.

    The size of $R$ is $|R|=|Y|\leq k$. The size of $B$ is $|H|=|\bigcup_{v\in Y}H_v|\leq \sum_{v\in Y}|H_v|\leq k^2$. This completes the proof.
\end{proof}

With Lemma~\ref{lem:shadow_coloring}, we can build the complete randomized algorithm by adding a random coloring procedure in the beginning.

\begin{proof}[Proof of Theorem~\ref{thm:shadow_removal}]
    Consider the following randomized procedure to generate one vertex set $Z$. We first color every vertex in $V-W$ independently. With probability $p$, a vertex is colored red. Otherwise it is colored blue. Let the coloring we obtained be $c$. Then we apply the algorithm in Lemma~\ref{lem:shadow_coloring} on $G,W,k,c$ and get a set $Z$. 
    By Lemma~\ref{lem:shadow_coloring}, there exists $R,B$ of size at most $k$ and $k^2$ respectively so that $Z$ satisfies $Y\cap Z=\emptyset$ and $R_W(Y)\subseteq Z$ as long as $c$ is $(R,B)$-restricted. The probability of this event is at least $P = p^k(1-p)^{k^2}$. If we set $p = 1/k^2$, this probability becomes $P=k^{-O(k)}$.
    We repeat the above procedure $100P^{-1}$ times and let $\mathcal{Z}$ be the set of outcomes. The probability that there is a satisfying set $Z\in \mathcal{Z}$ is then at least $1-(1-P)^{100/P}\geq 1-e^{-100}\geq 0.99$.
\end{proof}

Shadow removal can be executed deterministically by using $(n,r,l)$-splitters. An $(n,r,l)$-splitter is a family of functions from $[n]\rightarrow [l]$ such that $\forall M\subseteq [n]$ with $|M|=r$, at least one of the functions in the family is injective on $M$. When $l\geq r^2$, it can be constructed deterministically in time $(n+r)^{O(1)}$ such that the size of the family is $O(r^6\log r\log n)$, by Naor et al.~\cite{DBLP:conf/focs/NaorSS95}.

\begin{proof}[Proof of Theorem~\ref{thm:deterministic_shadow_removal}]
    We construct a $(n,k+k^2,(k+k^2)^2)$-splitter where $n=|V|$. Let vertices in $V$ be $v_1,v_2,...,v_n$. For each function $f$ in the splitter and every function $g:[(k+k^2)^2]\rightarrow \{\text{red},\text{blue}\}$ such that $|g^{-1}(\text{red})| = k$. Let $c_{f,g}$ be the coloring such that $\forall v_i\in V, c_{f,g}(v_i) = g(f(i))$. Then we apply the algorithm in Lemma~\ref{lem:shadow_coloring} on $G,W,k,c_{f,g}$ and get a set $Z_{f,g}$. Let $\mathcal{Z}=\{Z_{f,g}\}$.
    By our construction, the size of $\mathcal{Z}$ is at most $k^{O(k)}\log n$.
    By Lemma~\ref{lem:shadow_coloring}, there exists $R,B$ of size at most $k$ and $k^2$ respectively so that a vertex set $Z_{f,g}$ satisfies $Y\cap Z_{f,g}=\emptyset$ and $R_W(Y)\subseteq Z_{f,g}$ as long as $c_{f,g}$ is $(R,B)$-restricted.
    Look at the function $f$ in the splitter such that $f$ is injective on $\{i\mid v_i\in R\cup B\}$ and the function $g$ such that $\forall v_i\in R$, $g(f(i))=\text{red}$ and $\forall v_i\in B,g(f(i)) = \text{blue}$. Such $g$ exists because $f$ is injective on $\{i\mid v_i\in R\cup B\}$. Then $c_{f,g}$ is $(R,B)$-restricted. So the corresponding $Z_{f,g}$ satisfies the requirements.
\end{proof}

\subsection{Apply Shadow Removal on \abbrVMCcomp}

To apply shadow removal technique, we show that a \textbf{yes}-instance always admits a $k$-shadow-removable solution. The idea is to pick a solution $X^*$ such that the number of vertices connect to $W$ is minimized in $G-X^*$. Intuitively, such a solution is ``closest'' to $W$, and it follows that $X^*$ contains important separators that separates vertices in its shadow and $W$.

\begin{lemma}\label{lem:pushing}
    Let $I=(G,\mathbf{T},W,k)$ be an instance of \VMCcomp{}. If $I$ is a \textbf{yes}-instance, then there is a solution $X^*$ of $I$ that is $k$-shadow-removable with respect to $W$.
\end{lemma}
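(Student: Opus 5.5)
Among all solutions of $I$ (each of size at most $k$), I will pick $X^*$ that lexicographically minimizes the pair $\bigl(|U(X^*)|,\,|X^*|\bigr)$, where $U(X)$ is the set of vertices connected to $W$ in $G-X$, i.e.\ $U(X)=V\setminus (X\cup R_W(X))$. Two axioms will be used throughout, both noted in the overview. Axiom (i) says supersets of solutions disjoint from $W$ are solutions (as long as size stays $\le k$). Axiom (ii) says that if $X\cup\{v\}$ is a solution and $v\in R_W(X)$, then $X$ is a solution; this is the path argument given in the introduction, using that $W$ is a multicut of $(G,\mathbf{T})$. I then check the three bullets of Definition~\ref{def:closestset} in turn. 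The first, $|X^*|\le k$, is immediate.

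For the third bullet, suppose $v\in X^*$ and some important $\{v\}-W$ separator $N(C)$ with $|N(C)|\le k$ satisfies $N(C)\subseteq X^*$. Then $v$ has no path to $W$ avoiding $X^*\setminus\{v\}$, so $v\in R_W(X^*\setminus\{v\})$. By axiom (ii), $X^*\setminus\{v\}$ is a solution. Removing $v$ only adds $v$ to the shadow, since $v$ cannot reach $W$, so $U$ is unchanged. The size drops, contradicting the choice of $X^*$.

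For the second bullet, take $v\in R_W(X^*)$. Let $C_0$ be the component of $G-X^*$ containing $v$. Then $N(C_0)\subseteq X^*$ is a $\{v\}-W$ separator of size at most $k$. Choose an important $\{v\}-W$ separator $N(C)$ with $C\supseteq C_0$ and $|N(C)|\le|N(C_0)|$; such a separator exists by the standard argument of taking a maximal $C$ among minimal separators not larger than $N(C_0)$. If $N(C)\subseteq X^*$ we are done. Otherwise define
\[
X'=(X^*\setminus N(C_0))\cup N(C).
\]
This gives $|X'|\le|X^*|\le k$ and $X'\cap W=\emptyset$.

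The main obstacle is showing that $X'$ is a solution and that $U(X')\subseteq U(X^*)$ strictly fails to enlarge, with a genuine decrease whenever $N(C)\not\subseteq X^*$. For the solution property, I write $X'\supseteq (X^*\setminus N(C_0))\cup N(C)$ and argue via the intermediate set $X^*\cup N(C)$. This set is a solution by axiom (i). From it I then remove the vertices of $N(C_0)\setminus N(C)$ one at a time. Each such vertex lies in $C$, because $N(C_0)\subseteq C\cup N(C)$ since $C\supseteq C_0$. So each such vertex is separated from $W$ by $N(C)$, which stays deleted throughout. Hence each lies in the shadow of the current set, and axiom (ii) applies repeatedly.

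For the $U$ comparison, any path from $W$ in $G-X'$ cannot enter $C$, so it avoids $N(C_0)\setminus N(C)\subseteq C$. Such a path therefore avoids $X^*$ as well, giving $U(X')\subseteq U(X^*)$. A vertex $u\in N(C)\setminus X^*$ is not in $C_0$, because $N(C)\cap C_0=\emptyset$, and it is adjacent to $C\setminus C_0$. I need $u\in U(X^*)$ to get strictness, which is the delicate point. To handle it I will refine the choice: among important separators dominating $N(C_0)$, I use the fact that $C$ is connected and, by minimality, every vertex of $N(C)$ has a path to $W$ outside $C$. Then I argue that if $u$ were in the shadow of $X^*$, one could further push and obtain the desired contradiction by induction on $|U|$. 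If this becomes messy, I would instead take $X^*$ minimizing $|U|$ and then $|X^*|$, and use $U(X')\subsetneq U(X^*)$ via $u\in U(X^*)$ directly, following the standard pushing lemma in Marx--Razgon.
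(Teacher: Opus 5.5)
There is a genuine gap, at exactly the step you call delicate. Everything before it is fine:
\begin{itemize}
\item the third bullet;
\item the existence of an important $N(C)$ with $C\supseteq C_0$ and $|N(C)|\le|N(C_0)|$;
\item the proof that $X'=(X^*\setminus N(C_0))\cup N(C)$ is a solution, via $X^*\cup N(C)$ and repeated use of axiom (ii) with $N(C_0)\setminus N(C)\subseteq C$;
\item the inclusion $U(X')\subseteq U(X^*)$.
\end{itemize}
A minor point: $X^*\cup N(C)$ can have up to $2k$ vertices. So run axioms (i) and (ii) for the size-free notion of solution, and check $|X'|\le k$ only at the end.

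The problem is that when $N(C)\not\subseteq X^*$ you never actually obtain a lexicographic decrease. If every vertex of $N(C)\setminus X^*$ lay in the shadow of $X^*$, you could have $U(X')=U(X^*)$ and $|X'|=|X^*|$, and there would be no contradiction. Your proposed remedies do not close this:
\begin{itemize}
\item ``push further and induct on $|U|$'' is not an argument;
\item the minimality of $N(C)$ (paths from $N(C)$ to $W$ outside $C$) is not the relevant fact;
\item the fallback (minimize $|U|$, then $|X^*|$, and use $u\in U(X^*)$ directly) is the choice you already made, plus the claim that needs proving.
\end{itemize}

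The missing ingredient is a structural property of your minimizer: $X^*=N(U(X^*))$, i.e., every $x\in X^*$ has a neighbor in $U(X^*)$. It follows exactly as in your third bullet. If $x$ had no such neighbor, then $x\in R_W(X^*\setminus\{x\})$. By axiom (ii), $X^*\setminus\{x\}$ is then a solution with the same $U$ and smaller size, contradicting minimality.

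The paper proves this property ($X=N(R)$) as its first step, using the order ``minimum size, then minimum $|R|$''. It then pushes to $N(R\setminus N[C'])$. This shrinks $R$ because some vertex of $N(C)\cap C'$ lies in $X=N(R)$ and therefore has an $R$-neighbor inside $N[C']$.

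With this property, your route closes. The only remaining case is $|U(X')|=|U(X^*)|$ and $|X'|=|X^*|$. Then $U(X')=U(X^*)$, so $X'$ is also a minimizer. Hence $X'=N(U(X'))=N(U(X^*))=X^*$, which gives $N(C)\subseteq X^*$, a contradiction.

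Alternatively, you can show directly that $N(C)\cap U(X^*)\neq\emptyset$:
\begin{itemize}
\item Suppose $u\in N(C)\setminus X^*$ is a shadow vertex. Its neighbor in $C$ lies either in $X^*$ or in $u$'s shadow component, which differs from $C_0$. Since $C$ is connected and contains $C_0$, in either case $C\cap X^*\neq\emptyset$.
\item Any vertex of $C\cap X^*$ has a $U(X^*)$-neighbor in $C\cup N(C)$.
\item That neighbor's path to $W$ runs inside $U(X^*)$. Since $C\cap W=\emptyset$, the path must leave $C\cup N(C)$ through $N(C)$, giving a vertex of $N(C)\cap U(X^*)$.
\end{itemize}
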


\begin{proof}
    We say a solution to \textsc{Vertex Multicut Compression} is optimal if the cardinality is minimized. Let $X$ be a minimum optimal solution, which minimizes the number of vertices connected to $W$ in $G-X$ among all optimal solutions. To prove Lemma~\ref{lem:pushing}, it is sufficient to prove that $X$ is $k$-shadow-removable with respect to $W$.
    
    We first prove that for each vertex $v\in X$, any important $\{v\}-W$ separator of size at most $k$ is not completely contained in $X$.
    Let $R$ be the set of vertices that are connected to $W$ in $G-X$. Then $X\supseteq N(R)$. If there is a vertex $v\in X \setminus N[R]$, then we set $X'=X\setminus\{v\}$. We show that $X'$ is still a solution, which contradicts that $X$ is optimal. If $X'$ is not a solution, there must exist a path $P$ connecting two vertices in $W$ or a pair of vertices in $\mathbf{T}$ s.t. $P$ is disjoint from $X'$ but hit by $v$. Note that $P$ is also hit by $W$ in both cases, as $W$ is a vertex multicut for $(G,\mathbf{T})$. Thus there exists a subpath of $P$ connecting $v$ and $W$ without passing through $N(R)$, which is impossible.
    Therefore, we have that $X = N(R)$. Let $v$ be any vertex in $X$. Since $v \in N(R)$, there is always a $R$-path connecting $v$ and $W$ without passing any vertex in $N(R)$. Thus any $\{v\}-W$ separator is not completely contained in $X$, which implies the condition.

    We then prove that for each vertex $v \in R_W(X)$, there is an important $\{v\}-W$ separator of size at most $k$ that is totally contained in $X$.
    Now suppose that this condition is not satisfied, i.e., there is a vertex $v$ in $R_W(X)$, such that there is no important $\{v\}-W$ separator of size at most $k$ as a subset of $X$. Let the component containing $v$ in $G-X$ be $C$. Then $N(C)$ is a minimal $\{v\}-W$ separator. Since $N(C)\subseteq X$, $N(C)$ is not an important $\{v\}-W$ separator by our assumption. So, there is an $\{v\}-W$ minimal separator $N(C')$ such that $C\subset C'$ and $|N(C')|\leq |N(C)|$. Let $R' = R\setminus N[C']$. First we show that $N(R')$ hits every path connecting two vertices from $W$ or a pair in $\mathbf{T}$. Suppose not, there exists a path $P$ not hit by $N(R')$, which contains a vertex $u$ in $N(R)$ at the same time. Then there has to be a subpath of $P$ connecting $u \in N(R)$ and $W$ without passing through $N(R')$, which is impossible since $W\subseteq R'\subseteq R$. 
    
    We furtherly show that $|N(R')| \le |N(R)|$. First observe that $N(A \setminus B) \subseteq N(A) \cup B$ for any two vertex sets $A$ and $B$. Since $R' \cap N(C') = \emptyset$, it holds that $N(R') \cap C' = \emptyset$. Then we have that \[
        \begin{aligned}
            N(R') &=  N(R') \setminus C'\\
            &= N(R \setminus N[C']) \setminus C'\\ 
            &\subseteq (N(R ) \cup  N[C']) \setminus C'\\
            &\subseteq (N(R ) \setminus C') \cup  (N[C'] \setminus C') \\
            &= (N(R ) \setminus C') \cup  N(C')
        \end{aligned}
    \]
    Note that $C' \supseteq C$, thus $N(C) \subseteq N(C') \cup C'$, which implies $N(C) \setminus C' \subseteq N(C')$. Then we have that \[
        \begin{aligned}
            N(R') &\subseteq (N(R ) \setminus C') \cup  N(C')\\
            &\subseteq (N(R) \setminus N(C)) \cup ( N(C) \setminus C') \cup  N(C')\\
            &\subseteq (N(R) \setminus N(C)) \cup N(C') \cup  N(C')\\
            &= (N(R) \setminus N(C)) \cup N(C')\\
        \end{aligned}
    \]
    Since $N(C) \subseteq N(R)$ and $|N(C))| \ge |N(C')|$, it holds that $|N(R')| \le |(N(R) \setminus N(C)) \cup N(C')| \le |(N(R) |-  |N(C))| + |N(C')| \le  |N(R)| = |X|$.
    
    By the above argument, we obtain another optimal solution $N(R')$. Note that $R\cap N[C']\supseteq N(C)\cap C'\neq \emptyset$. So $R'=R\setminus N[C']\supset R$. The number of vertices connected to $W$ in $G-N(R')$ is then at most $|R'|<|R|$, contradicting the minimality of $X$. This completes the proof.
\end{proof} 

With Theorem~\ref{thm:deterministic_shadow_removal}, we can find a set $Z$ that covers the shadow in one branch. Moreover, $Z$ is disjoint from the solution. Thus we can apply \verb|torso| to build a partial branching algorithm. Thus, we are ready to prove Theorem~\ref{thm:shadowless}. Recall that the definition of a shadowless solution is a solution with empty shadow. We formalize this definition here before we prove the theorem.

\begin{definition}[Shadowless]\label{def:shadowless}
    A shadowless solution $X$ for an instance $I=(G,\mathbf{T},W,k)$ is a solution such that $R_{G,W}(X)=\emptyset$, i.e. in $G-X$, every component intersects $W$.
\end{definition}

\begin{proof}[Proof of Theorem~\ref{thm:shadowless}]
    To build the algorithm \verb|shadow_removal|, we first apply the algorithm in Theorem~\ref{thm:deterministic_shadow_removal} on $(G,W)$ and obtain a family $\mathcal{Z}$ of size at most $k^{O(k)}\log n$. Then for each $Z\in \mathcal{Z}$, run $\verb|torso|(I,Z)$ to obtain $I_Z=(G',\mathbf{T}',W',k')$. Let $\mathcal{I}$ be $\{I_Z\mid Z\in \mathcal{Z}\}$. Thus $|\mathcal{I}| \leq k^{O(k)}\log n$.

    By the definition of \verb|torso|, for every instance $I_Z=(G',\mathbf{T}',W,k)\in \mathcal{I}$, the size of $G'$ is $n^{O(1)}$ since $|V(G')|\leq |V(G)|$ and $k'=k$.
    By Lemma~\ref{lem:torso}, if $I$ is a \textbf{no}-instance, then every instance $I_Z\in\mathcal{I}$ is a \textbf{no}-instance. 
    If $I$ is a \textbf{yes}-instance, then there is a solution $X$ to $I$ which is $k$-shadow-removable, by Lemma~\ref{lem:pushing}. By Theorem~\ref{thm:deterministic_shadow_removal}, there is at least one $Z\in \mathcal{Z}$ such that $R_{G,W}(X)\subseteq  Z$ and $Z\cap X=\emptyset$. By Lemma~\ref{lem:torso}, $X$ is a solution of size at most $k$ for the instance $I_Z$. Moreover, we claim that $R_{G',W'}(X)=\emptyset$. Suppose not, let $v$ be in $R_{G',W'}(X)$. If there is a path $P$ in $G$ connecting $v$ and a vertex in $W$, then we replace every maximal $Z$-subpath $P'$ of $P$ with an edge connecting the endpoints of $P'$, obtaining a new path connecting $v$ and $W$ in $G'$. Thus $v$ and $W$ are disconnected in $G$. However, this implies $v\in R_{G,W}(X)\subseteq  Z$, contradicting that $v\in V(G')$.

    The running time of algorithm in Theorem~\ref{thm:shadowless} is $k^{O(k)}n^{O(1)}$ and the running time of $\verb|torso|$ is $n^{O(1)}$. So the total running time is $k^{O(k)}n^{O(1)}$.
\end{proof}

\section{Making Instances Bipedal}\label{sec:branching}
\newcommand{\Bnd}[1]{B(#1)}
\newcommand{\CC}[1]{CC_{#1}}
\newcommand{\mincut}{{\rm mincut}}
\newtheorem{step}{Step}




In this section, we prove the following theorem. We first recall and formally define bipedalness.

\begin{definition}[Bipedal]\label{def:bipedal}
    A graph $G$ is bipedal with respect to $W$, if and only if for every component $C$ of $G-W$, $|N_G(C)|\leq 2$.
    An instance $I=(G,\mathbf{T},W,k)$ of \abbrVMCcomp{} is bipedal if and only if $G$ is bipedal with respect to $W$.
\end{definition}

\begin{theorem}
    There is a partial branching algorithm for \abbrVMCcomp{} with running time $k^{O(k)}n^{O(1)}$, called \verb|make_bipedal|, that reduces an arbitrary instance $I$ to a set of at most $k^{O(k)}$ \textbf{bipedal} instances $\mathcal{I}$, such that if $I$ is a \textbf{yes}-instance with a shadowless solution, then there is an instance that also admits a shadowless solution. 
\end{theorem}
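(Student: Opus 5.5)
We follow the outline sketched in the introduction. The algorithm is a branching procedure whose measure is $\mu(I)=2k-OPT_{lp}(G,W)$. Since every solution is a multiway cut of $(G,W)$, an instance with $\mu<k$ (i.e.\ $OPT_{lp}>k$) is a \textbf{no}-instance and can be discarded; hence $\mu\in[k,2k]$ always. We will ensure that every branch decreases $\mu$ by at least $1/2$, or decreases it by $1/2$ after a bounded number of further steps. The depth is then $O(k)$, and each node spawns at most $O(k)$ children, which gives $k^{O(k)}$ leaves. Every operation is either deletion of a vertex $v$ with $k\mapsto k-1$ (removing the pairs of $\mathbf{T}$ containing $v$), $\verb|torso|$ (Lemma~\ref{lem:torso}), or $\verb|contract|$ (Lemma~\ref{lem:contract}). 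All of these preserve \textbf{no}-instances, so the procedure is a partial branching algorithm.

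\textbf{Rule 1 (non-zero vertices).} Call $v\in V\setminus W$ non-zero if $OPT_{lp}$ of $\verb|torso|(I,\{v\})$ exceeds $OPT_{lp}(G,W)$ by at least $1/2$. Half-integrality lets us test this in polynomial time. If such a $v$ exists, we branch into two cases: delete $v$, which decreases $k$ and lowers $\mu$ by at least $3/2$ because $OPT_{lp}$ drops by at most $1$; or apply $\verb|torso|(I,\{v\})$, which raises $OPT_{lp}$ by at least $1/2$. The torso branch preserves any shadowless solution avoiding $v$. Shadowlessness is kept because connectivity among the remaining vertices is unchanged.

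\textbf{Key lemma.} If no non-zero vertex exists, then $2\,OPT_{lp}(G,W)=\sum_{w\in W}mc(G,w,W\setminus\{w\})$.

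The inequality $\le$ comes from placing $1/2$ on the union of the isolating min-separators, and $1$ on any vertex lying in two of them. For $\ge$, take an optimal dual $\{f_P^*\}$ that minimizes the number of tight vertices. Let $U_w$ be the set of vertices reachable from $w$ through untight vertices, and set $d^*=1/2$ on $\bigcup_w N(U_w)$.

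Complementary slackness shows that a vertex in $N(U_{w_1})\cap N(U_{w_2})$ would be non-zero, so these boundaries are disjoint. Next we must show that every positive path contains exactly two boundary vertices; this requires the no-non-zero assumption together with tight-vertex minimality, via an uncrossing or rerouting argument. Given that, we get $\sum_P f_P^*=\frac12\sum_w|N(U_w)|\ge\frac12\sum_w mc$.

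I expect this lemma to be the main obstacle. In particular, the claim that every positive path meets exactly two boundary vertices will need a careful exchange argument on the dual.

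\textbf{Rule 2 (contractibility).} Let $N(\FC{w})$ denote the farthest isolating $w$-min-separators (Lemma~\ref{lem:unique-fc}). We guess whether some shadowless solution $X$ is contractible: disjoint from all these separators, with every separator vertex of $N(\FC{w})$ connected to $W\setminus\{w\}$ in $G-X$. If it is not contractible, then since $X$ is shadowless, some $v\in N(\FC{w})$ either lies in $X$ or is connected to $w$ in $G-X$. There are at most $\sum_w mc\le 2k$ choices of $(w,v)$, and for each we branch in two ways:
\begin{itemize}
\item delete $v$;
\item contract $v$ into $w$ (Lemma~\ref{lem:contract}).
\end{itemize}
In the contraction branch $OPT_{lp}$ does not drop, but $mc(G,w,\cdot)$ strictly increases because the separator was farthest. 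Therefore either $OPT_{lp}$ rises, or the key lemma's equality breaks, in which case Rule 1 fires immediately. Either way $\mu$ drops by $1/2$ within two levels.

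\textbf{Leaves.} At a leaf we assume a contractible shadowless solution exists. We partition $V$ into the sets $\Partfull{W}{G}{R}$ and proceed in three steps.
\begin{itemize}
\item The inner boundary of $\Partfull{W}{G}{\emptyset}$ lies in the union of the separators, which a contractible $X$ cuts off from $W$. Shadowlessness then forces $\Partfull{W}{G}{\emptyset}=\emptyset$.
\item Each $N(V\setminus \Part{\{w\}})$ is an isolating $w$-separator, contained in $\bigcup_w N(\FC{w})$. Counting against $\sum_w mc$ shows the farthest separators are pairwise disjoint.
\item Consequently, edges only join parts with $|R_1|=|R_2|=1$, or with $R_1\subset R_2$ and $|R_2|=2$.
\end{itemize}
We then contract each $\Part{\{w\}}$ together with its boundary into $w$ using $\verb|contract|$. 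This is valid for contractible solutions by Lemma~\ref{lem:contract}, and every resulting component of $G-W$ sees at most two terminals, so the instance is bipedal.
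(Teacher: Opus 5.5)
Your plan follows the paper's route closely: the measure $2k-OPT_{lp}(G,W)$, torso/delete on non-zero vertices, delete/contract on the farthest isolating min-separators, the LP identity via an extremal dual, then regions. However, two steps are not established.

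The first is the one you flag yourself: that every path $P$ with $f^*_P>0$ between $w_i$ and $w_j$ meets $\bigcup_w N(U_w)$ in exactly two vertices. This is the entire content of the $\ge$ direction of the key lemma, and it needs two separate arguments.
\begin{itemize}
\item If $P$ meets $N(U_{w_i})$ at some $u_1$ before the last such vertex $u_2$, take a path $Q$ from $w_i$ to $u_2$ with interior in $U_{w_i}$. Shift a small amount $\epsilon>0$ of flow from $P$ to the path consisting of $Q$ followed by the part of $P$ from $u_2$ to $w_j$. The objective is unchanged and $u_1$ becomes untight. For $\epsilon$ below half the minimum slack on the interior of $Q$, no vertex becomes tight. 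This contradicts the choice of $f^*$ with the fewest tight vertices.
\item If instead $P$ meets $N(U_{w_\ell})$ at some $u$ for a third terminal $w_\ell$, take a path from $u$ to $w_\ell$ through $U_{w_\ell}$. By complementary slackness, its other vertices have $d^*=0$ in every optimal primal solution. Append it to each of the two halves of $P$. Both resulting terminal-to-terminal walks have $d^*$-length at least $1$, while $P$ has $d^*$-length exactly $1$ because $f^*_P>0$. Hence $d^*_u\ge 1$ in every optimal $d^*$, so $u$ is non-zero. Only this second case uses the absence of non-zero vertices.
\end{itemize}

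The second problem is the final contraction. Contractibility constrains $X$ only on $\bigcup_w N(\FC{w})$. A shadowless contractible solution may therefore delete vertices in the interior of $\Part{\{w\}}$, and contracting all of $\Part{\{w\}}$ into $w$ can destroy it. Take the path on $s,w,t$ with $w$ in the middle, $W=\{w\}$, $\mathbf{T}=\{(s,t)\}$ and $k=1$. The set $X=\{s\}$ is shadowless and vacuously contractible, since $N(\FC{w})=\emptyset$, and this instance is a leaf of your branching. Yet contracting $\Part{\{w\}}=\{s,w,t\}$ into $w$ turns $(s,t)$ into $(w,w)$ and yields a \textbf{no}-instance.

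You must contract only the inner boundary $B(\Part{\{w\}})=N(V\setminus\Part{\{w\}})$ into $w$. Lemma~\ref{lem:contract} then needs an argument you skipped:
\begin{itemize}
\item Each $v\in B(\Part{\{w\}})$ lies in some $N(\FC{w'})$, so $v\notin X$. This is the identity $\bigcup_w B(\Part{\{w\}})=\bigcup_w N(\FC{w})$; your counting gives it alongside disjointness, and your claim about edges between regions also requires it.
\item By shadowlessness, $v$ reaches some terminal in $G-X$, and that terminal must be $w$. For $w'\neq w$ we have $v\notin\FC{w'}$, so on a path to $w'$ the last vertex outside $\FC{w'}$ lies in $N(\FC{w'})$ and is connected to $w'$ in $G-X$, contradicting contractibility.
\end{itemize}
Bipedality then follows, since each component of $G-W$ after this contraction lies in a single region $\Part{R}$ with $|R|\le 2$ and has its neighbourhood inside $R$.

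Two minor points. Deleting $v$ lowers $\mu$ by at least $1$, not $3/2$. Your argument for $\Part{\emptyset}=\emptyset$ really shows only that no edge leaves $\Part{\emptyset}$, because shadowlessness says nothing about vertices of $X$. Terminal-free components contained in $X$ must be excluded separately; the paper appeals to minimality of $X$.
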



Our algorithm \verb|make_bipedal| consists of two steps. The first step is a partial branching algorithm \verb|branching| and the second step is to apply some certain ``contraction'' operations on each instance output by the first step. Our algorithm relies on the concept of the farthest isolating min cut, which is defined as follows.

\begin{definition}[Farthest Isolating Min Cut]
    Let $G = (V, E)$ be a graph and $W \subseteq V$ a vertex set. For a vertex $w \in W$, let $X$ be the farthest minimum $\{w\}-(W\setminus \{w\})$ cut. 
    We denote by $\FCfull{G}{W}{w}$ the vertices connected to $w$ in $G-X$.
    When the graph $G$ is clear from context, we may abbreviate this as $\FC{w}$.
\end{definition}

In subsection~\ref{subsec:branching} we describe the branching algorithm \verb|branching| and analyze its correctness and running time. In the analysis we rely on a critical lemma obtained by analyzing the primal and dual LPs. The lemma is left proven in Subsection~\ref{subsec:MeasureLemma}. In the last subsection we show that contracting the instances produced by \verb|branching| in a certain way produces bipedal instances.

\subsection{Branching into Instances with Contractible Solutions} \label{subsec:branching}

In this subsection, we provide our branching algorithm \verb|branching|. This algorithm is a partial branching algorithm and ensures that in the \textbf{yes} case, one of the reduced instance contains a \emph{contractible} solution. The definition of a contractible solution is the following:

\begin{definition}[Contractible Solutions]
    Let $I = (G, \mathbf{T}, W, k)$ be an instance of \VMCcomp{}. A solution $X$ to $I$ is called \textit{contractible} if the following conditions hold:
    \begin{itemize}
        \item For every vertex $v \in \bigcup_{w \in W} N(\FCfull{G}{W}{w}) \setminus W$, we have $v \notin X$.
        \item For every $w \in W$ and every vertex $v \in N(\FCfull{G}{W}{w}) \setminus W$, the vertices $v$ and $w$ lie in different connected components of $G - X$.
        %
    \end{itemize}
\end{definition}

\begin{lemma}
    There is a partial branching algorithm for \abbrVMCcomp{}, called $\verb|branching|$ that runs in time $k^{O(k)}$ and reduces an arbitrary instance $I$ to a set of at most $k^{O(k)}$ instances $\mathcal{I}$ such that if $I$ is a \textbf{yes}-instance with a shadowless solution, then there is a \textbf{yes}-instance $I'=(G',\mathbf{T}',W',k')\in \mathcal{I}$ with a shadowless solution $X$ which is also contractible.
    \label{lem:branch-into-contra}
\end{lemma}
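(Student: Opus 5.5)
The plan is a bounded-depth branching algorithm whose measure is $\mu(I)=2k-OPT_{lp}(G,W)$. Since every solution of $I$ is a vertex multiway cut of $(G,W)$, we have $OPT_{lp}(G,W)\le k$ whenever $I$ is a \textbf{yes}-instance. So if $OPT_{lp}>k$ we output nothing (or a trivial \textbf{no}-instance), and otherwise $0\le\mu\le 2k$. If each branching step reduces $\mu$ by at least $1/2$ and creates at most $O(k)$ children, the recursion tree has depth at most $4k$ and hence $k^{O(k)}$ leaves. The leaves form $\mathcal{I}$. Each step uses only vertex deletion (with $k$ decreased), \verb|torso|, and \verb|contract|. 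By Lemma~\ref{lem:torso} and Lemma~\ref{lem:contract}, none of these turns a \textbf{no}-instance into a \textbf{yes}-instance, so the procedure is a partial branching algorithm. The LP is solved in polynomial time by an ellipsoid/flow formulation, so the running time is $k^{O(k)}n^{O(1)}$.

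\textbf{Rule 1 (non-zero vertex).} Call $v\in V\setminus W$ non-zero if forcing $d_v=0$ raises $OPT_{lp}$ by at least $1/2$. Forcing $d_v=0$ is realized by $\verb|torso|(I,\{v\})$, and whether $v$ is non-zero is tested by recomputing the LP. On a non-zero $v$ we branch two ways:
\begin{itemize}
\item delete $v$, giving $(G-v,\mathbf{T}-v,W,k-1)$; here $k$ drops by $1$ and $OPT_{lp}$ drops by at most $1$, so $\mu$ drops by at least $1$;
\item apply $\verb|torso|(I,\{v\})$; here $\mu$ drops by at least $1/2$.
\end{itemize}
Shadowlessness must be preserved in each branch. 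For the torso branch, note that a shadowless $X$ avoiding $v$ stays shadowless, because connectivity to $W$ among the remaining vertices is preserved. For the deletion branch, removing $v$ cannot create shadow for $X\setminus\{v\}$ in $G-v$: every component of $G-X$ still meets $W$, and the components of $(G-v)-(X\setminus\{v\})$ are exactly those components.

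\textbf{Rule 2 (no non-zero vertex).} If there is no non-zero vertex, then by the key LP identity (Lemma~\ref{lem:2LP=mincut}, stated later) we have $2OPT_{lp}=\sum_{w}mc(G,w,W\setminus\{w\})$. If the current instance has a shadowless contractible solution, we stop and output it as a leaf. Otherwise fix a shadowless solution $X$ that is not contractible. Then there exist $w$ and $v\in N(\FC{w})$ such that either $v\in X$ or $v$ is connected to $w$ in $G-X$. (A vertex $v\notin X$ in the boundary must be connected to some terminal by shadowlessness; the bad case is that this terminal is $w$ itself.) The algorithm does not know $X$, so it branches over all pairs $(w,v)$ with $v\in N(\FC{w})$. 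There are at most $\sum_w mc(G,w,W\setminus\{w\}) = 2OPT_{lp}\le 2k$ such pairs. For each pair there are two branches:
\begin{itemize}
\item delete $v$ (with $k$ decreased), which drops $\mu$ by at least $1$ as above;
\item contract $v$ into $w$ via \verb|contract| with $f(v)=w$, justified by Lemma~\ref{lem:contract}.
\end{itemize}
The leaf "stop here" is also output, so the node has $O(k)$ children in total.

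The main obstacle is the progress measure in the contraction branch. Because $v$ is not non-zero, forcing $d_v=0$ does not increase the LP, and contracting $v$ into $w$ should likewise not decrease it. Meanwhile $mc(G,w,\cdot)$ strictly increases, since every isolating $w$-separator in the new graph yields one in $G$ whose side contains $\FC{w}\cup\{v\}$, and $N(\FC{w})$ is the farthest min cut. Hence $\sum_w mc>2OPT_{lp}$ in the child, so Lemma~\ref{lem:2LP=mincut} forces a non-zero vertex to exist there. I would therefore treat "contract, then immediately apply Rule 1" as a single compound step. Its children either have $\mu$ decreased by at least $1/2$ via Rule 1, or their $\mu$ is not increased by the contraction.

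The delicate points to verify are:
\begin{itemize}
\item that $OPT_{lp}$ does not drop under the contraction, using feasibility transfer of LP solutions with $d_v=0$;
\item that shadowlessness and solution-hood are preserved by the contraction, where shadowlessness needs components of $G-X$ to still meet $W$;
\item that the branching factor stays $O(k)$.
\end{itemize}
Given these, the depth is $O(k)$ and the branching factor is $O(k)$, which yields the $k^{O(k)}$ bound.
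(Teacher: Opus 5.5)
Your proposal is correct and takes essentially the same route as the paper. It uses the measure $2k-OPT_{lp}(G,W)$, Rule 1 on non-zero vertices (torso versus delete), and Rule 2 over pairs $(w,v)$ with $v\in N(\FC{w})\setminus W$ (delete versus contract, while also keeping the current instance), and it argues as the paper does that contraction never lowers $OPT_{lp}$ yet strictly raises $\mathrm{mincut}(G,w,W\setminus\{w\})$, so Lemma~\ref{lem:2LP=mincut} forces progress; your cut-off at $OPT_{lp}>k$ and the resulting $O(k)$ branching factor are harmless sharpenings of the paper's $O(k^2)$. One small slip: a non-zero vertex is guaranteed after contraction only when $OPT_{lp}$ stays the same, and if it strictly increases you get the $1/2$ drop directly from half-integrality, which is why the paper says progress comes ``in either this branching step or the next step''.
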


Our branching algorithm uses $2k-OPT_{lp}(G,W)$ as the measure. We have the reduction rule and branching rules as follows.

\textbf{Reduction Rule 1.} If $2k - OPT_{lp}(G,W) <0$, then the algorithm terminates and returns nothing.

\begin{lemma}
    If $2k - OPT_{lp}(G,W) <0$, then $I$ is a \textbf{no}-instance.
    \label{lem:measure-correctness}
\end{lemma}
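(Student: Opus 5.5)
We prove the contrapositive: if $I$ is a \textbf{yes}-instance then $OPT_{lp}(G,W)\le k\le 2k$, so $2k-OPT_{lp}(G,W)\ge 0$. The argument is simply that the LP $\lpprimal{G,W}$ is a relaxation of the problem of finding a vertex multiway cut of $(G,W)$, and every solution to $I$ is such a multiway cut.

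First, let $X$ be a solution to $I$ with $|X|\le k$. By the definition of \abbrVMCcomp{}, $X\cap W=\emptyset$ and $X$ is a vertex multiway cut of $(G,W)$.

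Second, define the indicator vector $d_v=1$ for $v\in X$ and $d_v=0$ for $v\in V\setminus (W\cup X)$. Nonnegativity holds trivially. Take any path $P\in\mathcal{P}(G,W)$ connecting two distinct terminals $w_1,w_2\in W$. Since $X$ is a multiway cut, $w_1$ and $w_2$ lie in different components of $G-X$, so $P$ contains a vertex of $X$. That vertex is not in $W$, because $X\cap W=\emptyset$, so it is counted in $P\setminus W$. Hence $\sum_{v\in P\setminus W} d_v\ge 1$, and $\{d_v\}$ is feasible for $\lpprimal{G,W}$.

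Third, the objective value of this feasible solution is $|X|\le k$. Since $\lpprimal{G,W}$ is a minimization problem, $OPT_{lp}(G,W)\le |X|\le k$. Then $2k-OPT_{lp}(G,W)\ge 2k-k=k\ge 0$, which contradicts the hypothesis $2k-OPT_{lp}(G,W)<0$. Therefore $I$ is a \textbf{no}-instance.

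There is no real obstacle here. The only point needing care is that the hit vertex on $P$ lies outside $W$, so it corresponds to an actual LP variable; this follows from the disjointness of $X$ and $W$. Note that the bound we obtain, $OPT_{lp}\le k$, is stronger than the $2k$ threshold requires; the slack presumably exists to accommodate the measure used in the branching analysis.
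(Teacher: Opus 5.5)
Your proof is correct and is essentially the paper's argument: both use the indicator vector of a solution $X$ as a feasible point of $\lpprimal{G,W}$, which gives $OPT_{lp}(G,W)\le |X|\le k\le 2k$. Your explicit check that the vertex of $X$ hit on each terminal path lies outside $W$, and so corresponds to an actual LP variable, is a detail the paper leaves implicit.
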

\begin{proof}
    It is sufficient to prove that if $I$ is a \textbf{yes}-instance, then $OPT_{lp}(G,W) \le 2k$.
    Suppose that $I$ is a \textbf{yes}-instance. Let $X$ be any solution of $I$. Thus $|X| \le k$. We can set the variable of $\lpprimal{G,W}$ as follows: $d_v=1$ if $v \in X$ and $d_v=0$ if $v \in V \setminus (W \cup X)$. Then it satisfies the constraints of $\lpprimal{G,W}$ since $X$ is a vertex multiway cut of $(G,W)$. Thus we have that $OPT_{lp}(G,W) \le |X| \le  2k$. This completes the proof.
\end{proof}

If Reduction Rule 1 cannot be applied anymore, we apply Branching Rule 1.

\begin{definition}
    A vertex $v\in V(G) \setminus W$ is called a \emph{non-zero vertex} if 
    in every optimal solution $\{d^*_{u}\}$ to $\lpprimal{G,W}$, $d^*_{v}>0$.
\end{definition}

\textbf{Branching Rule 1.} If there is a non-zero vertex, we branch into two sub-instances:
\begin{itemize}
    \item $\verb|torso|((G,\mathbf{T},W,k),\{v\})$
\end{itemize}

In the case where Branching Rule 1 does not apply, i.e., there is no non-zero vertex, we apply Branching Rule 2.
We recall that $\mathcal{I}$ is the output of the algorithm $\verb|branching|$.

\textbf{Branching Rule 2.} Put $(G,\mathbf{T},W,k)$ into $\mathcal{I}$. For each vertex $w\in W$ and each vertex $v\in N(\FCfull{G}{W}{w})\setminus W$, let $f:\{v\}\to W$ be a mapping with $f(v)=w$. Let $(G',\mathbf{T}',W',k')$ be the result of $\verb|contract|(I,f)$. For each $v$ and $w$, we branch into two sub-instances:
\begin{itemize}
    \item $(G-\{v\},\mathbf{T}-\{v\},W,k-1)$;
    \item $(G',\mathbf{T}',W',k')$.
\end{itemize}
Since $|N(\FCfull{G}{W}{w})|\leq k$, there are at most $O(k^2)$ branches.

\begin{algorithm}[!ht]
\caption{\texttt{Branching}$(G, \mathbf{T}, W, k)$}
\label{alg:branch}
\begin{algorithmic}[1]
\Statex \textbf{Input:} An instance $I = (G, \mathbf{T}, W, k)$ of \VMCcomp{}.
\Statex \textbf{Output:} A family $\mathcal{I}$ of \VMCcomp{} instances.

\State Initialize $\mathcal{I} \gets \emptyset$. \Comment{Family of generated instances.}

\If{$2k - OPT(\mathcal{L}_{P}(G,W)) < 0$}\label{line:terminate} \Comment{Termination condition (by Lemma~\ref{lem-measure-correctness}).}
    \State \Return $\mathcal{I}=\emptyset$. 
\EndIf

\If{there is a non-zero vertex $v\in V(G) \setminus W$}
    \State Let $I'=(G',\mathbf{T}',W,k)$ be the result of $\verb|torso|(I,\{v\})$
    \State $\mathcal{I} \gets \mathcal{I} \cup \texttt{Branching}(G', \mathbf{T}', W, k)$.
    \label{algline:x1}
        \Comment{torso $v$.}
    \State $\mathcal{I} \gets \mathcal{I} \cup \texttt{Branching}(G-\{v\}, \mathbf{T}-\{v\}, W, k-1)$. \label{algline:y2}
        \Comment{delete $v$.}
    \State \Return $\mathcal{I}$.
\Else \Comment{Assume that there is no one-vertex.}
\State Add $I$ to $\mathcal{I}$. \Comment{$I$ itself may have a contractible solution.}
\If{$\bigcup_{w \in W} N(\FCfull{G}{W}{w}) \setminus W \neq \emptyset$}
    \For{each $w \in W$ and $v \in N(\FCfull{G}{W}{w}) \setminus W$}
        \State Let $f:\{v\}\to W$ be a mapping with $f(v)=w$.
        \State Let $(G',\mathbf{T}',W',k')$ be the result of $\verb|contract|(I,f)$.
        \State $\mathcal{I} \gets \mathcal{I} \cup \texttt{Branching}(G', \mathbf{T}', W', k')$. \label{algline:y1}
        \Comment{contract $v$ into $w$.}
        \State $\mathcal{I} \gets \mathcal{I} \cup \texttt{Branching}(G-\{v\}, \mathbf{T}-\{v\}, W, k-1)$. \label{algline:y2}
        \Comment{delete $v$.}
    \EndFor
\EndIf
\State \Return $\mathcal{I}$.
\EndIf

\end{algorithmic}
\end{algorithm}


We start the analysis of \verb|branching| and aim at proving Lemma~\ref{lem:branch-into-contra}. The following lemma is critical in our analysis.

\begin{restatable}{lemma}{lemLPeqMincut}
\label{lem:2LP=mincut}
    If there is no non-zero vertex, then we have 
    \[
        2OPT_{lp}(G,W) = \sum_{w\in W} \mincut(G,w,W\setminus\{w\}).
    \]
\end{restatable}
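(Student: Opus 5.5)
We prove the two inequalities separately. Throughout, let $m_w=\mincut(G,w,W\setminus\{w\})$.

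\emph{Upper bound.} For each $w\in W$ fix an isolating $w$-min-separator $N(C_w)$, for instance $N(\FC{w})$, so that $|N(C_w)|=m_w$. Define $d_v=\min\{1,\tfrac12\,|\{w: v\in N(C_w)\}|\}$. To check feasibility, take a path $P\in\mathcal P(G,W)$ from $w_1$ to $w_2$. It starts inside $C_{w_1}$ and must leave it, so it meets $N(C_{w_1})$. It ends at $w_2\in C_{w_2}$ and must enter $C_{w_2}$, so it meets $N(C_{w_2})$. If these are two distinct vertices, each contributes at least $\tfrac12$. If they are the same vertex, that vertex has $d_v=1$. Either way $\sum_{v\in P\setminus W} d_v\ge 1$. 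The total cost is at most $\tfrac12\sum_w m_w$, hence $2OPT_{lp}\le\sum_w m_w$. This direction does not use the hypothesis.

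\emph{Lower bound.} Fix an optimal dual solution $\{f^*_P\}$ that, among all optimal duals, minimizes the number of tight vertices. For each $w$, let $U_w$ be the set of vertices reachable from $w$ by a path whose vertices other than $w$ are all untight. We first show $U_w\cap W=\{w\}$. If some $U_w$ reached another terminal, there would be a $W$–$W$ path with no tight vertex, and we could increase $f$ on it, contradicting optimality. Therefore $N(U_w)$ is an isolating $w$-separator, every vertex of $N(U_w)$ is tight, and $|N(U_w)|\ge m_w$.

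Next, set $d^*_v=\tfrac12$ for $v\in\bigcup_w N(U_w)$ and $d^*_v=0$ otherwise. We claim the sets $N(U_w)$ are pairwise disjoint. Suppose $v\in N(U_{w_1})\cap N(U_{w_2})$. Then there is a $w_1$–$w_2$ path through $v$ whose other internal vertices are untight. By complementary slackness, every optimal primal solution is zero on untight vertices, so every optimal primal gives this path total weight $d_v$. Feasibility then forces $d_v\ge 1>0$ in every optimal solution, which makes $v$ a non-zero vertex and contradicts the hypothesis. It follows that $\sum_v d^*_v=\tfrac12\sum_w|N(U_w)|\ge\tfrac12\sum_w m_w$.

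It remains to show $\sum_v d^*_v\le\sum_P f^*_P=OPT_{lp}$. This is the crux, and the paper flags it as Lemma~\ref{lem:path-pass-two-N(U)}. The claim to prove is that every path $P$ with $f^*_P>0$ contains exactly two vertices of $\bigcup_w N(U_w)$: one in $N(U_{w_1})$ and one in $N(U_{w_2})$, where $w_1,w_2$ are its endpoints. Given this, double counting finishes the argument:
\[\tfrac12\sum_w|N(U_w)|=\tfrac12\sum_{v}\sum_{P\ni v}f^*_P=\tfrac12\sum_P f^*_P\,|P\cap\textstyle\bigcup_w N(U_w)|=\sum_P f^*_P.\]
Here the first equality uses that every vertex of $N(U_w)$ is tight. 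Combining with the upper bound gives equality.

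To prove the crux claim, I would argue by contradiction. Suppose a positive path $P$ from $w_1$ to $w_2$ visits a third boundary vertex $u$. Let $x$ be the first boundary vertex of $P$, in $N(U_{w_1})$, and $y$ the last, in $N(U_{w_2})$. Using the zero-region paths from $w_1$ to $x$ and from $w_2$ to $y$, consisting of untight vertices, I would reroute flow. Take $\varepsilon$ units off $P$ and place them on the shortcut path $w_1\to x\to\cdots\to y\to w_2$ formed by splicing these zero-region paths with suitable segments. The total flow stays the same, the load on $u$ drops so $u$ becomes untight, and only untight vertices gain load; for small $\varepsilon$ they stay untight. This strictly decreases the number of tight vertices, contradicting the choice of $f^*$.

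The delicate points are making the shortcut a simple path and handling the case where $x$ and $y$ lie on the same side. For the latter I would use the non-zero hypothesis again, through the half-integral optimum that the disjointness argument provides, to exclude that configuration. This rerouting is the step I expect to be the main obstacle.
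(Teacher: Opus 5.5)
Outside the crux, your proof is the paper's: the same upper-bound construction, the same optimal dual minimizing tight vertices, the same complementary-slackness argument that the sets $N(U_w)$ are disjoint, and the same double count. Your check that $U_w$ reaches no other terminal is a detail the paper leaves implicit. The gap is the crux, and the rerouting you sketch does not work. Every vertex of $P$ before $x$ or after $y$ is untight, so swapping those parts for zero-region paths unloads no tight vertex. To get from $x$ to $y$ you have no connection other than $P[x,y]$, which contains $u$. The configuration you flag as delicate cannot occur, since $x\in N(U_{w_1})$ and $y\in N(U_{w_2})$ by construction. Meanwhile the configuration that needs its own argument, $u$ on the boundary of a third terminal's region, is not treated.

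The missing idea is to split on whose boundary $u$ lies and route through \emph{that} region.

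If $u\in N(U_{w_1})$ (symmetrically $N(U_{w_2})$), let $u'$ be the last vertex of $N(U_{w_1})$ on $P$, so $u'\neq x$. Let $Q$ be a $U_{w_1}$-path from $w_1$ to $u'$, and shift $\varepsilon$ from $P$ to $Q$ followed by $P[u',w_2]$.
\begin{itemize}
\item The new path bypasses $x$ rather than $u$, so $x$ becomes untight.
\item Since $u'$ is last, $P[u',w_2]$ cannot re-enter $U_{w_1}$ (it would have to exit through $N(U_{w_1})$ again), so the new path is simple.
\end{itemize}
This is the paper's Case~1.

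If $u\in N(U_{w_3})$ with $w_3\notin\{w_1,w_2\}$, let $Q$ be a $U_{w_3}$-path from $w_3$ to $u$. The paper uses complementary slackness here.
\begin{itemize}
\item Every optimal primal $d^*$ vanishes on $Q-u$.
\item Joining $P[w_1,u]$, respectively $P[u,w_2]$, with $Q$ at $u$ gives $W$--$W$ paths, so $\sum_{v\in P[w_1,u]}d^*_v\ge 1$ and $\sum_{v\in P[u,w_2]}d^*_v\ge 1$.
\item Since $f^*_P>0$, $\sum_{v\in P\setminus W}d^*_v=1$.
\end{itemize}
Hence $d^*_u\ge 1$ in every optimum, so $u$ is a non-zero vertex, a contradiction. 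Your rerouting instinct also works here if aimed correctly: shifting $\varepsilon$ onto $P[w_1,u]$ joined with $Q$ unloads $y$ and adds load only to untight vertices. With both cases settled, your double count finishes the proof.
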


We leave the proof of Lemma~\ref{lem:2LP=mincut} for Section~\ref{subsec:MeasureLemma}.

\begin{lemma}
    Let $I = (G,\mathbf{T},W,k)$ be a \textbf{yes}-instance of \abbrVMCcomp{} with a shadowless solution $X$. For any vertex $v \in V\setminus W$ and $Z\subseteq V \setminus W$, the following properties hold: 
    \begin{itemize}
        \item[1.] $X \setminus \{v\}$ is a shadowless solution to the instance $(G-\{v\},\mathbf{T}-\{v\},W,k-1)$ if $v \in X$.
        \item[2.] Let $I'=((G',\mathbf{T}',W,k)$ be the result of $\verb|torso|(I,\{v\})$. $X$ is a shadowless solution to the instance $I'$ if $v \notin X$.
        \item[3.] Let $f:Z\rightarrow W$ be a mapping. Let $I'=(G',\mathbf{T}',W,k)$ be the result of $\verb|contract|(I,f)$,
        if $X\cap Z = \emptyset$, and $\forall v\in Z$ and $v$ is in the same component as $f(v)$ in $G-X$, then $X$ is also a shadowless solution to $I'$.
        
    \end{itemize}
    \label{lem:shadow-correctness}
\end{lemma}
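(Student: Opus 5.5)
Each of the three items is a pair of checks: that $X$ (or $X\setminus\{v\}$) is a solution of the new instance, and that it leaves no shadow. For the solution part we use the definitions directly, or Lemma~\ref{lem:torso} and Lemma~\ref{lem:contract}. For shadowlessness we show that every vertex of the new graph minus the solution can still reach $W$. We can do this by transporting a path in $G-X$ to the new graph, replacing the removed or contracted vertices appropriately.

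\emph{Item 1.} Let $v\in X$ and $X'=X\setminus\{v\}$, so $|X'|\le k-1$ and $X'\cap W=\emptyset$. We have $(G-\{v\})-X'=G-X$. Hence every path connecting two vertices of $W$, or a pair of $\mathbf{T}-\{v\}\subseteq\mathbf{T}$, in $(G-\{v\})-X'$ is already such a path in $G-X$, which is impossible. The same identity shows that the components of $(G-\{v\})-X'$ are exactly those of $G-X$, so each still meets $W$ and $X'$ is shadowless. One also checks that $W$ is still a multicut of $(G-\{v\},\mathbf{T}-\{v\})$ of size at most $k$, so the instance is well formed; this is immediate.

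\emph{Item 2.} Let $v\notin X$. By the second bullet of Lemma~\ref{lem:torso} with $Z=\{v\}$, $X$ is a solution of $I'$. For shadowlessness, take $u\in V(G')\setminus X=V(G)\setminus(X\cup\{v\})$. Since $X$ is shadowless in $G$, there is a path $P$ from $u$ to $W$ in $G-X$. If $v$ lies on $P$, its two neighbours on $P$ are joined by a $\{v\}$-path, so they are adjacent in $G'$. Shortcutting $v$ therefore gives a path from $u$ to $W$ in $G'-X$.

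\emph{Item 3.} By Lemma~\ref{lem:contract}, $X$ is a solution of $I'$. Take $u\in V(G')\setminus X$ and a path $P$ from $u$ to $W$ in $G-X$. Apply $f^*$ to each vertex of $P$. Since $f^*(v)=v$ for $v\notin Z$ and every edge $(a,b)$ of $G$ maps to an edge or a loop $(f^*(a),f^*(b))$ of $G'$, the image is a walk in $G'$ from $f^*(u)=u$ to $f^*(\text{end})\in W$. This walk avoids $X$ because $X\cap Z=\emptyset$ and $f^*(Z)\subseteq W$, which is disjoint from $X$. So $u$ is connected to $W$ in $G'-X$. In fact we can stop at the first vertex of $P$ lying in $W\cup Z$, since such a vertex maps into $W$.

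The only mildly delicate point is item~3, namely that images of paths remain valid and avoid $X$. Lemma~\ref{lem:contract} already handles the solution part. Note also that the hypothesis that $v$ and $f(v)$ lie in the same component is needed only for the solution part, not for shadowlessness.
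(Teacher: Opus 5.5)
Your proof is correct and follows the paper's argument. Item~1 uses the identity $(G-\{v\})-(X\setminus\{v\})=G-X$, item~2 shortcuts $v$ along a torso edge, and item~3 transports a path to $W$ through the contraction; for item~3 the paper truncates at the first vertex of $Z$ (the variant you mention at the end) instead of mapping the whole path by $f^*$.

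The one slip is your side remark in item~1 that $|W|\le k$ is immediate. In fact $|W|$ may equal $k+1$ (e.g.\ $v$ adjacent to every vertex of $W$ and $X=\{v\}$), so the parameter-$(k-1)$ instance need not be well formed. Neither the lemma nor the paper's proof relies on this.
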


\begin{proof}
    For property 1, it is easy to see that $X\setminus\{v\}$ is a solution of the new instance $(G-\{v\},\mathbf{T}-\{v\},W,k-1)$. Note that $(G\setminus\{v\})-(X\setminus \{v\})=G-X$. Thus $X\setminus\{v\}$ is still a shadowless solution of the new instance.
    
    For property 2, $X$ is a solution of the instance $I'$ by Lemma~\ref{lem:torso}.
    We then prove that if there exists a path $P$ from $v_1 \in V \setminus (W \cup \{v\})$ to some $w \in W$ in the graph $G - X$, then there exists a path $P'$ from $v_1$ to $w$ in the graph $G' -X$. First we have that $V(P) \cap X=\emptyset$. If $P$ does not contain $v$, we can simply let $P'=P$. Otherwise, $P$ contains $v$. Let $v_2$ and $v_3$ be the predecessor and successor of $v$. By the definition of the torso, we know that $v_2$ are adjacent to $v_3$ in $G'$. Let $P_1$ be the sub-path of $P$ from $v_1$ to $v_2$ and $P_2$ be the sub-path of $P$ from $v_3$ to $w$. We let $P'$ be the concatenation of the path $P_1$, the edge $(v_2,v_3)$ and the path $P_2$. Then $P'$ is a path in $G'$ connecting $v_1$ and $w$.
    Since $X$ is a shadowless solution of $I$, each vertex $v_1 \in V \setminus (W \cup \{v\})$ is connected to some vertex $w \in W$ in the graph $G-X$. Then each vertex $v_1 \in V \setminus (W \cup \{v\})$ is also connected to some vertex $w \in W$ in the graph $G'-X$. Thus $X$ is a shadowless solution of $I'$.
    
    
    For property 3, $X$ is a solution of the instance by Lemma~\ref{lem:contract}.
    We then prove that if there exists a path $P$ from $v_1 \in V \setminus (W \cup Z)$ to some $w \in W$ in the graph $G - X$, then there exists a path $P'$ from $v_1$ to some $w' \in W$ in the graph $G' -X$. First we have that $V(P) \cap X=\emptyset$. If $V(P) \cap Z = \emptyset$, we can simply let $P'=P$. Otherwise, let $v$ be the closest vertex in $P$ from $v_1$ that is contained in $Z$. Let $v_2$ be the predecessor of $v$. By the definition of the contraction, we know that $v_2$ are adjacent to $w' = f(v) \in W$ in $G'$. Let $P_1$ be the sub-path of $P$ from $v_1$ to $v_2$. We let $P'$ be the concatenation of the path $P_1$ and the edge $(v_2,w')$. Then $P'$ is a path in $G'$ connecting $v_1$ and $w'$.
    Since $X$ is a shadowless solution of $I$, each vertex $v_1 \in V \setminus (W \cup Z)$ is connected to some vertex $w \in W$ in the graph $G-X$. Then each vertex $v_1 \in V \setminus (W \cup Z)$ is also connected to some vertex $w' \in W$ in the graph $G'-X$. Thus $X$ is a shadowless solution of $I'$.
\end{proof}

We then prove the correctness lemma of the branching rules.

\begin{lemma}[Correctness]
    Let $I = (G,\mathbf{T},W,k)$ be an instance of \abbrVMCcomp{} and $\mathcal{I}$ be output of the algorithm $\verb|branching|$. If $I$ is a \textbf{yes}-instance with a shadowless solution, then there is a \textbf{yes}-instance $I'=(G',\mathbf{T}',W',k')\in \mathcal{I}$ with a shadowless solution which is also contractible.
    \label{lem:branching-correctness}  
\end{lemma}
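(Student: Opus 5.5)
The plan is induction on the recursion tree of \verb|branching|, with the measure $\mu(I)=2k-OPT_{lp}(G,W)$. Suppose $I$ has a shadowless solution $X$. By Lemma~\ref{lem:measure-correctness}, Reduction Rule 1 does not fire on $I$. We show that $I$ either is placed into $\mathcal{I}$ with $X$ contractible, or that some recursive call is made on an instance that still has a shadowless solution. The induction hypothesis (on the depth of recursion, which is finite since the algorithm terminates) then gives the claim. Termination and size bounds are not needed here, only that every recursive call made is on a well-defined instance.

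\emph{Case 1: Branching Rule 1 applies with non-zero vertex $v$.} If $v\in X$, Lemma~\ref{lem:shadow-correctness}(1) shows that $X\setminus\{v\}$ is a shadowless solution of the delete branch $(G-\{v\},\mathbf{T}-\{v\},W,k-1)$. If $v\notin X$, Lemma~\ref{lem:shadow-correctness}(2) shows that $X$ is a shadowless solution of $\verb|torso|(I,\{v\})$. Either way, some child has a shadowless solution, and we apply induction to that child.

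\emph{Case 2: there is no non-zero vertex.} Then $I$ itself is added to $\mathcal{I}$. If $X$ is contractible, we are done. Otherwise one of the two conditions in the definition fails for some $w\in W$ and some $v\in N(\FCfull{G}{W}{w})\setminus W$, and the algorithm branches on exactly this pair $(w,v)$.
\begin{itemize}
\item If $v\in X$, Lemma~\ref{lem:shadow-correctness}(1) applies to the delete branch.
\item If $v\notin X$, then $v$ is not separated from $w$ in $G-X$. Here I read the second condition of contractibility as requiring $v$ to be separated from $w$ by $X$, which is what the outline says: ``$v$ is either in $X$ or connected to $w$''. Taking $Z=\{v\}$ and $f(v)=w$, we have $X\cap Z=\emptyset$ and $v$ lies in the same component of $G-X$ as $f(v)$. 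So Lemma~\ref{lem:shadow-correctness}(3) shows that $X$ is a shadowless solution of $\verb|contract|(I,f)$.
\end{itemize}
Shadowlessness is used only to guarantee that these lemmas preserve a shadowless solution. It is not needed for the case distinction itself.

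The main obstacle is making the induction well-founded. This only requires that the recursion terminates, which the running-time analysis establishes through $\mu$: $\mu$ drops by at least $1/2$ under deletion or torso of a non-zero vertex. A contraction may leave $\mu$ unchanged, but by Lemma~\ref{lem:2LP=mincut} it raises $\mathrm{mincut}(G,w,W\setminus\{w\})$ while $OPT_{lp}$ stays the same, so the next step is Branching Rule 1. For correctness I would simply argue by induction on the depth of the finite recursion tree, taking termination as given and not re-proving the measure decrease. The remaining subtle point is checking that the two contractibility conditions exactly match the two branches taken, especially that the failure of the first condition ($v\in X$) is covered by the delete branch.
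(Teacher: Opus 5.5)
Your proof is correct and follows the paper's argument. The paper uses the same one-step case split: a non-zero vertex is handled by the delete/torso branches via Properties~1 and~2 of Lemma~\ref{lem:shadow-correctness}. Otherwise $I$ is placed in the output, and a violation of either contractibility condition is caught by the delete or contract branch via Properties~1 and~3. The paper then closes with the same induction over the finite recursion tree, with termination deferred to Lemma~\ref{lem:branching-complexity}.
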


\begin{proof}
    Suppose that the input instance $I = (G, \mathbf{T}, W, k)$ admits a shadowless solution. Let $X$ be the shadowless solution of $I$. By Lemma~\ref{lem:measure-correctness}, the termination condition would not hold. We prove that if $I$ admits a shadowless solution, then we either put a \textbf{yes}-instance with a shadowless and contractible solution into $\mathcal{I}$ or branch into sub-instances that admits a shadowless solution.
    
    Suppose that there is a non-zero vertex $v \in V\setminus W$. If $v \in X$, then $X \setminus \{v\}$ is a shadowless solution of the new instance $(G - {v}, \mathbf{T}-\{v\}, W,k-1)$ by Property 1 of Lemma~\ref{lem:shadow-correctness}. Let $I' = (G',\mathbf{T}', W,k)$ be the result of $\verb|torso|(G,\{v\})$. If $v \notin X$, then $X$ is also a shadowless solution of the new instance $I'$ by Property 2 of Lemma~\ref{lem:shadow-correctness}.
    
    Otherwise, we can assume that there is no non-zero vertex $v \in V(G)\setminus W$.
    If $X$ is a contractible solution, then Lemma~\ref{lem:branching-correctness} holds since we first put $I$ into $\mathcal{I}$ in Branching Rule 2. Otherwise, by the definition of contractible solutions, one of the two following cases will occur.
    In the first case, for some vertex $v \in \bigcup_{w \in W} N(\FCfull{G}{W}{w}) \setminus W$, we have $v \in X$. By Property 1 of Lemma~\ref{lem:shadow-correctness}, $X \setminus \{v\}$ is a shadowless solution of the instance $(G-\{v\},\mathbf{T}-\{v\},W,k-1)$, which is a sub-instance of Branching Rule 2.
    In the second case, for some $w \in W$ and some vertex $v \in N(\FCfull{G}{W}{w}) \setminus W$, the vertices $v$ and $w$ lie in the same connected components of $G - X$. Let $f$ be the mapping defined in Branching Rule 2 and $(G',\mathbf{T}',W',k')$ be the result of $\verb|contract|(I,f)$. By Property 3 of Lemma~\ref{lem:shadow-correctness}, $X$ is a shadowless solution of the instance $(G',\mathbf{T}',W',k')$, which is also a sub-instance of Branching Rule 2.

Above all, we have shown that either we put an instance with a contractible and shadowless solution into the output $\mathcal{I}$ or branch into a sub-instance with a shadowless solution. By a conventional induction argument, it can be shown that at least one generated instance admits a contractible and shadowless solution if the input instance admits a shadowless solution, which completes the proof.
\end{proof}


\begin{lemma}
    Let $I=(G,\mathbf{T},W,k)$ be an instance of \abbrVMCcomp{}
    \begin{itemize}
        \item[1.] Let $v \in V \setminus W$ be any non-zero vertex and $I'=(G',\mathbf{T}',W,k)$ be the result of $\verb|torso|(I,Z)$. Then we have that $OPT_{lp}(G',W) \ge OPT_{lp}(G,W) + \frac{1}{2}$.
        \item[2.] Let $v \in V$ be any vertex disjoint from $W$ and $w$ be any vertex in $W$. Let $f:\{v\}\to W$ be a mapping such that $f(v)=w$ and $I'=(G',\mathbf{T}',W,k)$ be the result of $\verb|contract|(I,f)$. Then we have that $OPT_{lp}(G',W) \ge OPT_{lp}(G,W)$.
    \end{itemize}
    
    \label{lem:OPT-torso-contra-vertex}
\end{lemma}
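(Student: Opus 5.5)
Both parts go through the primal LP and half-integrality; the real work is relating multiway-cut paths of the new graph to those of $G$. (In part 1, $Z$ should be read as $\{v\}$.)

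\textbf{Part 2 (contraction).} I would show that every feasible solution of $\lpprimal{G',W}$ extends to a feasible solution of $\lpprimal{G,W}$ of the same cost, by setting $d_v=0$. Take $P\in\mathcal{P}(G,W)$ with endpoints $w_1\neq w_2$.
\begin{itemize}
\item If $v\notin P$, then $P$ is also a path in $G'$, and its constraint in $G'$ gives the bound directly.
\item If $v\in P$, split $P$ at $v$ into a subpath $P_1$ from $w_1$ to $v$ and a subpath $P_2$ from $v$ to $w_2$. At least one of $w_1,w_2$ differs from $w$, say $w_2\neq w$.
\item Replacing $v$ by $w$ turns $P_2$ into a walk from $w$ to $w_2$ in $G'$, because every edge $(v,u)$ becomes $(w,u)$.
\item That walk contains a simple path in $\mathcal{P}(G',W)$ whose non-terminal vertices lie in $P\setminus\{v\}$.
\end{itemize}
Hence $\sum_{u\in P\setminus W}d_u\ge 1$. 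Therefore $OPT_{lp}(G,W)\le OPT_{lp}(G',W)$.

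\textbf{Part 1 (torso of a non-zero vertex).} I would first show $OPT_{lp}(G',W)\ge OPT_{lp}(G,W)$ by extending an optimal solution $d'$ of $\lpprimal{G',W}$ with $d_v=0$. Take $P\in\mathcal{P}(G,W)$ through $v$, with predecessor $a$ and successor $b$. In $G'$ the vertices $a,b$ are adjacent, so shortcutting $P$ gives a path of $G'$ whose vertices lie in $P\setminus\{v\}$. Its constraint transfers, and paths avoiding $v$ are unchanged.

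The resulting $d$ is optimal for $G$ only if the two optima are equal. In that case $d$ would be an optimal solution of $\lpprimal{G,W}$ with $d_v=0$, contradicting that $v$ is non-zero. So $OPT_{lp}(G',W)>OPT_{lp}(G,W)$.

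To upgrade strict inequality to a gap of $1/2$, I would invoke half-integrality of $\lpprimal{\cdot}$ for both $G$ and $G'$. Both optima are optimal values of half-integral LPs, so they lie in $\tfrac12\mathbb{Z}$. Two distinct elements of $\tfrac12\mathbb{Z}$ differ by at least $1/2$, which gives the claim.

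\textbf{Main obstacle.} The delicate point is the path bookkeeping in both parts: after shortcutting (torso) or relabelling (contraction), one must extract a \emph{simple} path between two \emph{distinct} terminals that uses only vertices of the original path. In part 2 this is why we choose the side $P_2$ whose terminal differs from $w$. The rest reduces to the cited half-integrality.
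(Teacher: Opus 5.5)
Your proposal is correct and follows essentially the same route as the paper. For the torso part, you extend an optimal solution for $G'$ by $d_v=0$, use non-zero-ness of $v$ to get strict inequality, and apply half-integrality to obtain the gap of $1/2$. For the contraction part, you transfer each path constraint of $G$ to a path of $G'$ through $w$. Your explicit choice of the side of $P$ whose endpoint differs from $w$ is more careful than the paper's version. The paper always takes the subpath from the starting terminal to $v$, which does not yield a terminal-to-terminal path in $G'$ when that starting terminal is $w$ itself.
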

\begin{proof}
    We first prove Property 1. By the half-integrality of $\lpprimal{G,W}$, it is sufficient to prove that $OPT_{lp}(G',W) > OPT_{lp}(G,W)$.
    First note that $\lpprimal{G,W}$ is equivalent to $\lpprimal{G',W}$ by setting the variable of $v$ as $0$. Thus we naturally have that $OPT_{lp}(G',W) \ge OPT_{lp}(G,W)$. 
    Suppose that $OPT_{lp}(G',W) = OPT_{lp}(G,W)$. This implies that there exists an optimal solution of $\lpprimal{G,W}$ such that $d_v=0$. However, $v$ is a non-zero vertex, which is a contradiction. Above all, we have that $OPT_{lp}(G',W) \neq OPT_{lp}(G,W)$ and thus $OPT_{lp}(G',W) > OPT_{lp}(G,W)$, which completes the proof.

    We then prove Property 2. Let $\{d_u'\}$ be any optimal solution of $\lpprimal{G',W}$. We set variables $\{d_u\}$ such that $d_v=0$ and $d_u = d_u'$ for all $u \in V \setminus \{v\}$. We then prove that $\{d_u\}$ is a valid solution of $\lpprimal{G,W}$. Suppose not. Then we can find a path $P$ of $G$ such that $\sum_{u \in V(P)} d_u <1$. If $v \notin V(P)$, $P$ is also in $G'$ and thus $\sum_{u \in V(P)} d_u' = \sum_{u \in V(P)} d_u<1$, which is a contradiction. Otherwise, we can assume that $v \in V(P)$. Then $v$ will be contracted into $w=f(v)$ in the graph $G'$. Let $P'$ be the sub-path of $P$ from the starting point to $v$. Note that $v \notin V(P')$, which means $P'$ is also a path of the graph $G'$. Then we have that $\sum_{u\in V(P')} d_u' =  \sum_{u\in V(P')} d_u < \sum_{u\in V(P)} d_u <1$. This contradicts the fact that $\{d_u\}$ is an optimal solution of $\lpprimal{G',W}$. Therefore, $\{d_u\}$ is a valid solution of $\lpprimal{G,W}$ and thus $OPT_{lp}(G,W) \le \sum_{u \in V \setminus (W \cup \{v\})}d_u = \sum_{u \in V \setminus (W \cup \{v\})}d_u' = OPT_{lp}(G',W)$.
\end{proof}

\begin{lemma}[Complexity]
    Given an \abbrVMCcomp{} instance $I = (G,\mathbf{T},W,k)$, the algorithm $\verb|branching|$ will terminate in time $O^*(k^{O(k)})$ and output a set $\mathcal{I}$ of at most $k^{O(k)}$ instances.
    \label{lem:branching-complexity}
\end{lemma}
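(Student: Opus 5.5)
We bound the search tree of \verb|branching| with the measure $\mu(I)=2k-OPT_{lp}(G,W)$. At the root $\mu\le 2k$, since $OPT_{lp}(G,W)\ge 0$. Reduction Rule 1 cuts off every node with $\mu<0$. The plan is to show that along every root-to-leaf path $\mu$ drops by at least $1/2$ at each step, except for a bounded number of ``free'' steps that are each immediately followed by a decreasing step. Then the depth is $O(k)$. Every node has at most $O(k^2)$ children: 2 under Branching Rule 1, and $O(k^2)$ under Branching Rule 2, because $|W|\le k+1$ and each $|N(\FCfull{G}{W}{w})|\le k$. The latter holds because otherwise $\mathrm{mincut}>k$ for some $w$, and any solution would need more than $k$ vertices. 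This fact has to be justified, or the algorithm must add a rule that terminates in this case. The total number of nodes, and hence of instances in $\mathcal{I}$, is then $(k^2)^{O(k)}=k^{O(k)}$. Each node does polynomial work: it solves the LP (by half-integrality and standard separation, or by flow-based formulations), tests non-zeroness by re-solving the LP with $d_v$ fixed to $0$, and computes farthest min cuts via Lemma~\ref{lem:unique-fc}. This gives running time $k^{O(k)}n^{O(1)}$.

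The measure analysis goes step by step. For the torso branch of Branching Rule 1, Lemma~\ref{lem:OPT-torso-contra-vertex}(1) gives $OPT_{lp}$ up by $1/2$ with $k$ unchanged, so $\mu$ drops by $1/2$. For a deletion branch in either rule, $k$ drops by $1$ and $OPT_{lp}(G-v,W)\ge OPT_{lp}(G,W)-1$: extend an optimal solution of $G-v$ by setting $d_v=1$. Hence $\mu$ drops by at least $1$.

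The contraction branch of Branching Rule 2 is the crux. There $k$ is unchanged and Lemma~\ref{lem:OPT-torso-contra-vertex}(2) only gives $OPT_{lp}(G',W)\ge OPT_{lp}(G,W)$, so $\mu$ does not increase but may not drop. To handle it, use that Rule 2 applies only when there is no non-zero vertex, so Lemma~\ref{lem:2LP=mincut} gives $2OPT_{lp}(G,W)=\sum_w\mathrm{mincut}(G,w,W\setminus\{w\})$. Contracting a vertex $v\in N(\FCfull{G}{W}{w})$ into $w$ strictly increases $\mathrm{mincut}(G',w,W\setminus\{w\})$. The reason is that any isolating $w$-separator in $G'$ corresponds to one in $G$ whose component contains $\FCfull{G}{W}{w}\cup\{v\}$, which is strictly farther than the farthest min cut, so its size is larger. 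The other terms do not decrease, since contraction into $w$ only removes the option of cutting $v$ for separators of the other terminals, or is neutral for them. So $\sum_w\mathrm{mincut}$ increases by at least $1$.

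Now either $2OPT_{lp}(G',W)\ge 2OPT_{lp}(G,W)+1$, in which case $\mu$ dropped by $1/2$, or equality in Lemma~\ref{lem:2LP=mincut} fails in $G'$. In the second case, the contrapositive of Lemma~\ref{lem:2LP=mincut} says $G'$ has a non-zero vertex, so the child applies Branching Rule 1 and both of its branches decrease $\mu$ by at least $1/2$. We must also check that $2OPT_{lp}\le\sum\mathrm{mincut}$ always holds (the easy direction from the overview), so failure of equality means strict inequality. Thus every two consecutive levels decrease $\mu$ by at least $1/2$, and the depth is at most $2\cdot 2\cdot(2k+1)=O(k)$.

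I expect the main obstacle to be the rigorous monotonicity of $\sum_w\mathrm{mincut}$ under contraction, particularly for terminals $w'\ne w$. If the increase at $w$ cannot be shown without side effects, a fallback is to argue directly that $v$, after contraction, forces a strictly larger LP value or a non-zero vertex, using the LP structure from Lemma~\ref{lem:2LP=mincut}.
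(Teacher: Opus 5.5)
Your proposal is correct and follows the paper's proof essentially step for step. It uses the same measure $2k-OPT_{lp}(G,W)$, the same per-rule decrease analysis, and the same treatment of the contraction branch: the sum of isolating min-cuts strictly increases, so either $OPT_{lp}$ rises by at least $1/2$, or by Lemma~\ref{lem:2LP=mincut} $G'$ has a non-zero vertex and Branching Rule 1 fires at the next level.

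On the point you flag about $|N(\FC{w})|\le k$, no extra rule is needed. When Branching Rule 2 applies there is no non-zero vertex and Reduction Rule 1 has not fired, so Lemma~\ref{lem:2LP=mincut} gives $\sum_{w\in W}|N(\FC{w})| = 2\,OPT_{lp}(G,W)\le 4k$. Hence Rule 2 creates at most $8k$ children.
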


\begin{proof}
    Let $I = (G,\mathbf{T},W,k)$ be the input instance. Note that in each recursive call of $\verb|branching|$, the reduction and branching rules take only polynomial time and put at most one instance into the output $\mathcal{I}$.
    Therefore, to prove that the algorithm $\verb|branching|$ will terminate in time $O^*(k^{O(k)})$ and output at most $k^{O(k)}$ instances, it is sufficient to bound the number of leaves of the search tree.
    We then bound the number of leaves of the search tree by bounding the number of sub-instances in each branching rule and the depth of the search tree.
    
    We first bound the number of sub-instances in each branching rule. In Branching Rule 1, the number of sub-instances are 2. In Branching Rule 2, the number of sub-instances is bounded by $2\sum_{w \in W} |N(\FCfull{G}{W}{w}) \setminus W| \le 2k^2$.

    We use the measure $2k - OPT_{lp}(G,W)$ to bound the depth of the search tree. If $2k - OPT_{lp}(G,W) <0$, the algorithm will terminate by Reduction 1. Otherwise we have that $2k - OPT_{lp}(G,W) >0$. Suppose that there is a non-zero vertex $v \in V \setminus W$, we will either delete $v$ or make a torso of $v$ by Branching Rule 1. If $v$ is deleted, $k$ will decrease by 1 and it is easy to see that $OPT_{lp}(G,W)$ will decrease by at most 1. Thus the measure will decrease at least 1 in this case. If we make a torso of $v$, $k$ will not change and $OPT_{lp}(G,W)$ will increase by at least $\frac{1}{2}$ by Property 1 of Lemma~\ref{lem:OPT-torso-contra-vertex}. Thus the measure will decrease at least $\frac{1}{2}$ in this case. Therefore, in Branching Rule 1, the measure will decrease at least $\frac{1}{2}$.
    
    In Branching Rule 2, the algorithm will terminate if $\bigcup_{w \in W} N(\FCfull{G}{W}{w}) \setminus W = \emptyset$. Otherwise, we pick each $w \in W$ and $v \in N(\FCfull{G}{W}{w}) \setminus W$. We furtherly either delete $v$ or contract $v$ into $w$. Similarly, we know that the measure will decrease at least 1 if we delete $v$.   
    Let $I'=(G',\mathbf{T}',W,k)$ be the result of $\verb|contract|(I,f)$. We prove by contradiction that if the measure does not decrease when we contract $v$ into $w$ in Branching Rule 2, then there exists non-zero vertices in the graph $G'$. Assume that the measure does not decrease when we contract $v$ into $w$ in Branching Rule 2. Since Branching Rule 1 cannot be applied for the instance $I$, then $G$ does not contain any non-zero vertices. Thus, by Lemma~\ref{lem:2LP=mincut}, we have that $ 2OPT_{lp}(G,W) = \sum_{w_0\in W} \mincut(G,w_0,W\setminus\{w_0\})$. It is not hard to see $\mincut(G',w_0,W\setminus\{w_0\}) \ge \mincut(G,w_0,W\setminus\{w_0\})$ for any $w_0 \in W$. Since we contract $v \in N(\FCfull{G}{W}{w})$ into $w$ to obtain $I'$, it also holds that $\mincut(G',w,W\setminus\{w\})>\mincut(G,w,W\setminus\{w\})$.
    Therefore, we have that \[
        \sum_{w_0\in W} \mincut(G',w_0,W\setminus\{w_0\}) > \sum_{w_0\in W} \mincut(G,w_0,W\setminus\{w_0\})
    \]
    
    We claim that $G'$ has to contain a non-zero vertex.
    Suppose not, we have that $ 2OPT_{lp}(G',W) = \sum_{w_0\in W'} \mincut(G',w_0,W\setminus\{w_0\})$ by Lemma~\ref{lem:2LP=mincut}. Then we have that \[
        \begin{aligned}
            2OPT_{lp}(G',W) &= \sum_{w_0\in W} \mincut(G',w_0,W\setminus\{w_0\}) \\
            &>\sum_{w_0\in W} \mincut(G,w_0,W\setminus\{w_0\}) \\
            &=2OPT_{lp}(G,W).
        \end{aligned}
    \]
    Thus it holds that $2k - 2OPT_{lp}(G',W) < 2k - 2OPT_{lp}(G,W)$. This contradicts the fact that the measure does not decrease. Therefore, in this case, $G'$ must contain non-zero vertices. Then either the algorithm  will terminate or Branching Rule 1 will be applied. By Property 2 of Lemma~\ref{lem:OPT-torso-contra-vertex}, we have that $OPT_{lp}(G',W) \ge OPT_{lp}(G,W)$. Thus the measure of $I'$ is not greater than the measure of $I$. Combined with the half-integrality, we can say that if the algorithm does not terminate, the measure will decrease by at least $\frac{1}{2}$ in either this branching step or the next step. Thus the depth of the search tree can still be bounded by $O(k)$.
    
    Above all, the number of sub-instances of each branching rule can be bounded by $O(k^2)$ and the depth of the search tree can be bounded by $O(k)$. Therefore, the number of leaves of the search tree can be bounded by $k^{O(k)}$, which completes the proof.
\end{proof}

\begin{proof}[Proof of Lemma~\ref{lem:branch-into-contra}]
First we prove that the algorithm $\verb|branching|$ is a partial branching algorithm.
For each $I'=(G',\mathbf{T}',W',k')\in \mathcal{I}$, it is easy to check $|I'|\leq |I|^{O(1)}$ and $k'\leq k$.
Note that if $I$ is a \textbf{no}-instance, then the new instance $(G-\{v\},\mathbf{T}-\{v\},W,k-1)$ is a \textbf{no}-instance for any vertex $v \in V \setminus W$.
Similarly by Lemma~\ref{lem:torso} and Lemma~\ref{lem:contract}, the corresponding sub-instances are still \textbf{no}-instances if the original instance $I$ is a \textbf{no}-instance. Then by a conventional induction argument, one can easily prove that if $I$ is a \textbf{no}-instance, all instances in $\mathcal{I}$ are \textbf{no}-instances.

By Lemma~\ref{lem:branching-correctness}, we have that if $I$ is a \textbf{yes}-instance with a shadowless solution, then there is a \textbf{yes}-instance $I'=(G',\mathbf{T}',W',k')\in \mathcal{I}$ with a shadowless solution which is also contractible.

By Lemma~\ref{lem:branching-complexity}, we have that the algorithm $\verb|branching|$ will terminate in time $O^*(k^{O(k)})$ and output a set $\mathcal{I}$ of at most $k^{O(k)}$ instances.

Above all, we finish the proof of Lemma~\ref{lem:branch-into-contra}.
\end{proof}

\subsection{Proof of Lemma~\ref{lem:2LP=mincut}}\label{subsec:MeasureLemma}
The goal of this subsection is to prove Lemma~\ref{lem:2LP=mincut}, restated below for convenience. 
\lemLPeqMincut*
We begin with the following direction.
\begin{lemma}\label{lem:2LP<=mincut}
    It holds that 
    \[
    2OPT_{lp}(G,W)\le \sum_{w\in W} \mincut(G,w,W\setminus\{w\}).
    \]
\end{lemma}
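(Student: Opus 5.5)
We exhibit a feasible solution of $\lpprimal{G,W}$ whose value is at most $\frac12\sum_{w\in W}\mincut(G,w,W\setminus\{w\})$; since $OPT_{lp}(G,W)$ is the minimum over feasible solutions, the inequality follows. If some $\mincut(G,w,W\setminus\{w\})$ is infinite (e.g.\ $w$ is adjacent to another terminal), the right-hand side is infinite and there is nothing to prove. So we may assume every isolating separator exists.

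For each $w\in W$, fix a minimum $\{w\}-(W\setminus\{w\})$ separator. For concreteness take $S_w=N(\FC{w})\setminus W$, so $|S_w|=\mincut(G,w,W\setminus\{w\})$; any minimum separator works equally well. Define, for $v\in V\setminus W$,
\[
d_v=\min\Bigl(1,\tfrac12\,\bigl|\{w\in W: v\in S_w\}\bigr|\Bigr).
\]
Then $\sum_v d_v\le \frac12\sum_v|\{w: v\in S_w\}| = \frac12\sum_{w}|S_w|$ by double counting. This gives the required bound on the objective value.

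It remains to check feasibility. Let $P\in\mathcal{P}(G,W)$ connect distinct terminals $w_1\ne w_2$. Since $S_{w_1}$ separates $w_1$ from $W\setminus\{w_1\}\ni w_2$, the path $P$ contains a vertex $a\in S_{w_1}$. Symmetrically, $P$ contains some $b\in S_{w_2}$. Note that $a,b\notin W$, because separators avoid the source and sink sets.

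\emph{Case $a\ne b$:} then $d_a,d_b\ge\frac12$ and both are counted in the sum, so $\sum_{u\in P\setminus W}d_u\ge 1$.

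\emph{Case $a=b$:} the vertex lies in two distinct separators $S_{w_1},S_{w_2}$, so $d_a=1$.

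Either way the constraint for $P$ holds, and nonnegativity is immediate. The only subtle point is taking the minimum with $1$, so that vertices lying in many separators are not overcharged. Even without the cap the counting bound would still hold, so this is just to keep the definition natural.
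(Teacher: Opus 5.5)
Your proposal is correct and matches the paper's proof: fix a minimum isolating separator for each terminal, put weight $1/2$ on vertices in exactly one separator and $1$ on vertices in two or more, verify feasibility with the same two-case argument ($a\ne b$ versus $a=b$), and bound the objective by double counting. Your remark about adjacent terminals (where no separator exists and both sides are infinite) is a harmless edge case that the paper leaves implicit.
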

\begin{proof}
For each $w\in W$, let $N(C_w)$ be a minimum separator (where $C_w$ is the reachable set) that separates $w$ from $W \setminus \{w\}$, so $|N(C_w)| = \mincut(G,w,W\setminus\{w\})$.
Consider a solution $d: V \setminus W \to \mathbb{R}_{\ge 0}$ to the primal LP $\lpprimal{G,W}$ defined by:
\[
    d_v = \begin{cases} 
        1 & \text{if } v\in N(C_w) \text{ for at least two $w\in W$,}\\
        1/2 & \text{if } v \in N(C_w) \text{ for exactly one $w\in W$}, \\
        0 & \text{otherwise.}
    \end{cases}
\]
To see that $\{d_v\}$ is feasible, consider any path $P$ connecting distinct $w_i,w_j\in W$. The path must traverse at least one vertex in $N(C_{w_i})$ and at least one vertex in $N(C_{w_j})$. If $P$ intersects $N(C_{w_i})$ and $N(C_{w_j})$ at a common vertex $v$, then $d_v=1$ and so $\sum_{z\in P}d_{z}\geq d_v=1$. Otherwise, $P$ contains distinct vertices $u\in N(C_{w_i})$ and $v\in N(C_{w_j})$, implying $\sum_{z\in P}d_{z}\geq d_u+d_v\geq 1/2+1/2\geq 1$.
Since $\{d_v\}$ is feasible, we have 
\[ 
OPT_{lp}(G,W)\le \sum_{v\in V\setminus W} d_v\leq \frac{1}{2}\sum_{v\in W}|N(C_w)|=\frac{1}{2}\sum_{w\in W} \mincut(G,w,W\setminus\{w\}),
\] 
which completes the proof.
\end{proof}

It remains to show $2OPT_{lp}(G,W)\ge \sum_{w\in W} \mincut(G,w,W\setminus\{w\})$.
To this end, we exploit the structural properties of the dual LP $\lpdual{G,W}$. In particular, our argument relies on a canonical choice of an optimal dual solution that maximizes the set of vertices with non-binding capacity constraints.
For convenience, a vertex $v \in V \setminus W$ is called \emph{untight} with respect to a solution $\{f_P\}$ to the dual LP $\lpdual{G,W}$ if its capacity constraint is not binding, i.e., $\sum_{P: v \in P} f_P < 1$. 
Otherwise, $v$ is called \emph{tight}.

Let $\{f^*_P\}$ be an optimal solution to the dual LP that maximizes the number of untight vertices, and let $S^*$ denote this set of untight vertices.
For each $w \in W$, we define the \emph{untight region} $U_w\subseteq S^*$ as the set of vertices in $V \setminus W$ that are reachable from $w$ via a path consisting entirely of untight vertices. Let $N(U_w)$ denote the open neighborhood of $U_w$, i.e., $N(U_w) := \{v \notin U_w \mid \exists u \in U_w, (u,v) \in E(G) \}$. By this definition, all vertices in $N(U_{w})$ are tight.
We have the following property for untight regions.
\begin{lemma}\label{lem:N(U)-disjoint}
    If there is no non-zero vertex, then $N(U_{w_i}) \cap N(U_{w_j}) = \emptyset$ holds for any two distinct $w_i, w_j \in W$.
\end{lemma}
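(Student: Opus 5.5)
We argue by contradiction: suppose some vertex $v$ lies in $N(U_{w_i})\cap N(U_{w_j})$ for distinct $w_i,w_j\in W$. The plan is to show that every optimal primal solution must put positive weight on $v$. Then $v$ is a non-zero vertex, contradicting the hypothesis.

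\textbf{Step 1: build a short path through $v$.} By definition of the untight regions, $v$ has a neighbour $u_i\in U_{w_i}$ (or $v$ is adjacent to $w_i$ itself) and a neighbour $u_j\in U_{w_j}$ (or $v$ is adjacent to $w_j$). There is a path from $w_i$ to $u_i$ all of whose internal vertices, and the endpoint $u_i$, are untight, and similarly from $w_j$ to $u_j$. Concatenate them through $v$ to get a walk $Q$ from $w_i$ to $w_j$ in which every vertex of $V\setminus W$ other than $v$ is untight.

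\textbf{Step 2: make it a simple path with the same property.} Shortcut $Q$ to a simple path $P\in\mathcal{P}(G,W)$. If $P$ still passes through $v$, its vertices outside $W\cup\{v\}$ are all untight. If the shortcutting removes $v$, we get a $w_i$--$w_j$ path (or a path between two other terminals met along the way) consisting only of untight vertices. Complementary slackness with an optimal dual solution forces every vertex with $d^*_u>0$ to be tight, so this all-untight path would have total primal weight $0<1$, contradicting primal feasibility. Hence $P$ contains $v$, and we may take $P$ to be a simple path between two distinct terminals whose only possibly tight non-terminal vertex is $v$. The same shortcut argument handles the case where the walk meets a third terminal.

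\textbf{Step 3: conclude via complementary slackness.} Let $\{d^*_u\}$ be any optimal primal solution. Since $\{f^*_P\}$ is dual-optimal, complementary slackness gives $d^*_u=0$ for every untight $u$. Primal feasibility on $P$ then gives
\[
\sum_{u\in P\setminus W}d^*_u=d^*_v\ge 1>0 .
\]
Since this holds for every optimal primal solution, $v$ is a non-zero vertex, which is the desired contradiction.

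The main obstacle I expect is the bookkeeping in Step 2: ensuring that the path produced is simple, joins two distinct terminals, and still contains $v$. The argument above resolves this by observing that any all-untight terminal-to-terminal path is itself infeasible for the primal. The choice of $\{f^*_P\}$ maximizing the number of untight vertices is not needed here; any optimal dual solution works, so that extremality is presumably used only later, in the optimality counting.
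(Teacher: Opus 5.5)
Your proof is correct and follows the same route as the paper. You concatenate the untight paths from $w_i$ and $w_j$ through $v$, then use complementary slackness ($d^*_u=0$ on untight vertices) and primal feasibility to force $d^*_v\ge 1$ in every optimal primal solution, which contradicts the assumption that there is no non-zero vertex. Your Step 2 shows the concatenation can be taken simple and still containing $v$, since an all-untight terminal-to-terminal path would violate primal feasibility; this fills in a detail the paper leaves implicit. You are also right that the extremal choice of $\{f^*_P\}$ is not used in this lemma.
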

\begin{proof}
Suppose for the sake of contradiction that there exists a vertex $v \in N(U_{w_i}) \cap N(U_{w_j})$.
Then $v$ is a tight vertex.
Since $v \in N(U_{w_i}) \cap N(U_{w_j})$, there exist neighbors $u_i \in U_{w_i}$ and $u_j \in U_{w_j}$ of $v$. By the definition of untight regions, there is a path from $w_i$ to $u_i$ consisting entirely of (untight) vertices in $U_{w_i}$, and similarly for $w_j$ and $u_j$. Concatenating these segments through $v$, we obtain a path $P$ connecting $w_i$ and $w_j$ where $v$ is the only tight vertex on the path. Consider any optimal solution $\{d^*_v\}$ to the primal LP $\lpprimal{G,W}$. By complementary slackness, $d^*_u = 0$ for every untight vertex $u$; in particular, $d^*_u=0$ for all vertices $u\in P\setminus \{v\}$. Together with the feasibility constraint $\sum_{z \in P \setminus W} d^*_z \ge 1$ for path $P$, we have that $d^*_v\geq 1$ in \emph{every} optimal solution to $\lpprimal{G,W}$.
However, this implies that $v$ is a non-zero vertex, a contradiction. 
\end{proof}

\begin{lemma}\label{lem:path-pass-two-N(U)}
    Assume there is no non-zero vertex. Let $P$ be any simple path connecting distinct $w_i, w_j \in W$ with $f^*_{P} > 0$. Then $P$ intersects the set $\bigcup_{w \in W} N(U_w)$ at exactly two distinct vertices, one in $N(U_{w_i})$ and one in $N(U_{w_j})$.
\end{lemma}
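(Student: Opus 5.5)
The proof has two parts. First I show that $P$ meets $\bigcup_{w\in W}N(U_w)$ in at least the two required vertices. Then I rule out any third intersection vertex by rerouting a small amount of flow, which would contradict the choice of $\{f^*_P\}$ as an optimal dual solution with the maximum number of untight vertices.

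\emph{Preliminary observation.} For $w\neq w'$ we have $U_w\cap U_{w'}=\emptyset$, and no edge joins $U_w\cup\{w\}$ to $U_{w'}\cup\{w'\}$. Otherwise there would be a $w$--$w'$ path all of whose inner vertices are untight. By complementary slackness every optimal primal solution has $d^*=0$ on untight vertices, so this path would violate primal feasibility.

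\emph{At least two vertices.} Walk along $P$ from $w_i$, and let $a$ be the first vertex of $P$ not in $U_{w_i}\cup\{w_i\}$. Such a vertex exists, since by the observation $w_j\notin U_{w_i}$ and $w_j$ is not adjacent to $U_{w_i}\cup\{w_i\}$. Its predecessor lies in $U_{w_i}\cup\{w_i\}$, so $a\in N(U_{w_i})$ (adjacency to $w_i$ itself also counts here, as in the definition of the regions). Symmetrically, walking from $w_j$, let $b$ be the first vertex not in $U_{w_j}\cup\{w_j\}$; then $b\in N(U_{w_j})$. Lemma~\ref{lem:N(U)-disjoint} gives $a\neq b$, and it is also what uses the assumption that there is no non-zero vertex. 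By construction, every vertex of $P$ strictly before $a$ is untight and lies in $U_{w_i}$, hence in no $N(U_w)$. Likewise every vertex strictly after $b$ lies in $U_{w_j}$. Moreover $a$ precedes $b$ on $P$. If $b$ came at or before $a$, then $b$ would be in $U_{w_i}\cup\{w_i\}$, so $b$ would be untight, contradicting $b\in N(U_{w_j})$.

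\emph{No third vertex.} Suppose some $x\in V(P)\setminus\{a,b\}$ lies in $N(U_w)$ for some $w\in W$. By the previous paragraph $x$ lies strictly between $a$ and $b$ on $P$. Let $R$ be a path from $w$ to $x$ whose inner vertices lie in $U_w$, which exists since $x\in N(U_w)$. I build a new $W$--$W$ walk $Q$ as follows.
\begin{itemize}
\item If $w\neq w_i$, let $Q$ be the prefix of $P$ from $w_i$ to $x$ followed by $R$ reversed. This sub-case covers $w=w_j$.
\item If $w=w_i$, let $Q$ be $R$ followed by the suffix of $P$ from $x$ to $w_j$.
\end{itemize}
I shortcut $Q$ to a simple path, which only lowers vertex loads. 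I then decrease $f^*_P$ by a small $\varepsilon>0$ and increase $f^*_Q$ by $\varepsilon$; this is possible since $f^*_P>0$. The objective value is unchanged. Each vertex of $Q$ is either on $P$, and so its load does not increase, or in $U_w$, and so it was untight and stays untight for small $\varepsilon$. Hence the new solution is feasible and optimal, and no untight vertex becomes tight.

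In the first sub-case, $b$ lies strictly after $x$ on the simple path $P$, so $b\notin Q$. Also $b\notin U_w$, because $b$ is tight. So $b$ loses load $\varepsilon$ and becomes untight. In the second sub-case, $a$ lies strictly before $x$ and $a\notin U_{w_i}$, so the same argument applies to $a$. Either way the number of untight vertices strictly increases, contradicting the choice of $\{f^*_P\}$.

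Hence $P\cap\bigcup_{w\in W}N(U_w)=\{a,b\}$, with $a\in N(U_{w_i})$ and $b\in N(U_{w_j})$. The step needing the most care is the case analysis on $w$, which ensures that $Q$ really avoids one of the tight vertices $a$ or $b$.
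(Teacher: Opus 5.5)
Your proof is correct, but it handles one case differently from the paper.

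The paper splits a violation into two cases. If $P$ meets $N(U_{w_i})$ twice, it reroutes $\epsilon$ of flow from $w_i$ through $U_{w_i}$ to the last such vertex, which makes the earlier one untight. This is your exchange argument. If $P$ meets $N(U_{w_k})$ at some $u$ with $w_k\notin\{w_i,w_j\}$, the paper argues on the primal side instead. Complementary slackness makes every optimal $d^*$ vanish on the untight path from $w_k$ to $u$, so both halves of $P$ at $u$ have $d^*$-length at least $1$. Since $f^*_P>0$, $P$ itself has length exactly $1$. Hence $d^*_u\ge 1$ in every optimal primal solution, so $u$ would be a non-zero vertex.

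You treat every extra boundary vertex $x$ with the same rerouting. You head toward the terminal $w$ with $x\in N(U_w)$, and use either the prefix from $w_i$ (when $w\neq w_i$) or the suffix to $w_j$ (when $w=w_i$), so that the new path misses $b$ or $a$. Your preliminary localisation of $a$ and $b$ is what guarantees one of these two choices always works: everything before $a$ lies in $U_{w_i}\cup\{w_i\}$ and everything after $b$ lies in $U_{w_j}\cup\{w_j\}$.

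As a result, you use the no-non-zero-vertex hypothesis only through Lemma~\ref{lem:N(U)-disjoint}, to get $a\neq b$. Excluding further boundary vertices then holds for any optimal dual solution with a maximal set of untight vertices. Your write-up also makes explicit several points the paper leaves implicit: why the exit and entry vertices exist, why they are non-terminals and correctly ordered, and why the rerouted path may be taken simple. The paper's primal argument, for its part, needs no new flow path in the third-terminal case, at the cost of invoking the hypothesis a second time.
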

\begin{proof}
We prove by contradiction. Suppose there exists a simple path $P$ connecting $w_i, w_j$ with $f^*_{P} > 0$ that violates the statement.
Since $P$ originates in $U_{w_i}$ and terminates in $U_{w_j}$, it must traverse at least one vertex in $N(U_{w_i})$ to exit $U_{w_i}$ and at least one vertex in $N(U_{w_j})$ to enter $U_{w_j}$.
By Lemma~\ref{lem:N(U)-disjoint}, $N(U_{w_i})\cup N(U_{w_j})=\emptyset$, implying $|P \cap (N(U_{w_i}) \cup N(U_{w_j}))| \ge 2$.
Consequently, a violation can only occur in one of the following two cases:
\begin{itemize}
    \item Case~1: The path $P$ intersects the boundaries of exactly two untight regions (i.e., $N(U_{w_i})$ and $N(U_{w_j})$), but $|P \cap (N(U_{w_i}) \cup N(U_{w_j}))| \ge 3$.
    \item Case~2: The path $P$ intersects $N(U_{w_k})$ for some $w_k\in W\setminus \{w_i,w_j\}$.
\end{itemize}

For Case~1, without loss of generality, assume $P$ intersects $N(U_{w_i})$ at two distinct vertices $u_1$ and $u_2$, where $u_2$ is the \emph{last} vertex of $N(U_{w_i})$ visited by $P$ and $u_1$  appears before $u_2$, along the direction from $w_i$ to $w_j$.
By the definition of the untight region $U_{w_i}$, there exists a path $Q$ from $w_i$ to $u_2$ such that all internal vertices of $Q$ belong to $U_{w_i}$ (and are thus untight).

We construct a new path $P'$ by concatenating $Q$ with the suffix of $P$ (from $u_2$ to $w_j$, denoted by $P_{u_2 \to w_j}$).
Crucially, $P'$ does not contain $u_1$. 
We modify $\{f^*_P\}$ to $\{f'_P\}$ as follows:
\[
    f'_{P} = f^*_{P}-\epsilon, f'_{P'}=f^*_{P'}+\epsilon, \text{ and } f'_{R}=f^*_{R} \text{ for all other $R\in \mathcal{P}(G,W)$},
\]
where $\epsilon$ is a sufficiently small positive value defined by
\[
\epsilon=\frac{1}{2} \min \left( f^*_P, \min_{v \in V(Q) \setminus \{u_2\}} \left( 1 - \sum_{R: v \in R} f^*_R \right) \right).\]
Since all internal vertices of $Q$ are untight, the slack term $(1-\sum_{R:v\in R}f^*_{R})$ is positive. Together with $f^*_P>0$, we have $\epsilon > 0$. We claim that the new $\{f'_P\}$ is feasible. The reason is as follows. 
For vertices on the suffix $P_{u_2 \to w_j}$, the flow decrease of $P$ and flow increase of $P'$ cancel out.
For internal vertices of $Q$, the capacity constraint is satisfied by the choice of $\epsilon$. For $u_2$, the total flow remains unchanged. Thus, $\{f'_P\}$ is feasible (and optimal since $\sum_{P\in \mathcal{P}(G,W)} f^*_P = \sum_{P\in \mathcal{P}(G,W)} f'_P$).

However, consider the tight vertex $u_1$. The original path $P$ passes through $u_1$, but the new path $P'$ bypasses it. Thus, the total load on $u_1$ strictly decreases, i.e.,
\[
    \sum_{R: u_1 \in R} f'_R \leq \sum_{R: u_1 \in R} f^*_R - \epsilon = 1 - \epsilon < 1.
\]
This implies that $u_1$ is an untight vertex with respect to $\{f'_{P}\}$.
Since we only increased flow on the internal vertices in $Q$, and these vertices remain untight due to the choice of $\epsilon$, the set of untight vertices for $\{f'_P\}$ is a strict superset of that for $\{f^*_P\}$. This contradicts the maximality of the untight set in the optimal solution $f^*$.

For Case~2, $P$ intersects not only $N(w_i)$ and $N(w_j)$, but also $N(w_k)$ for some  $w_k\in W\setminus\{w_i,w_j\}$. Let $u \in P \cap N(U_{w_k})$ be a vertex in this intersection.
By the definition of the untight region $U_{w_k}$, there exists a path $Q$ connecting $w_k$ to $u$ such that all internal vertices belong to $U_{w_k}$. 

Let $\{d^*_v\}$ be any optimal solution to the primal LP $\lpprimal{G,W}$. By complementary slackness, $d^*_v = 0$ for all $v \in Q \setminus \{u\}$. We decompose the path $P$ at vertex $u$ into two segments $P_{w_i \to u}$ (from $w_i$ to $u$) and $P_{w_j \to u}$ (from $w_j$ to $u$) and construct two new paths:
\begin{itemize}
    \item path $R_{w_i\to w_k}$ connecting $w_i$ and $w_k$ by concatenating $P_{w_i\to u}$ with $Q$ (from $u$ to $w_k$), and
    \item path $R_{w_j\to w_k}$ connecting $w_j$ and $w_k$ by concatenating $P_{w_j\to u}$ with $Q$ (from $u$ to $w_k$).
\end{itemize}
Since $d^*$ is a feasible primal solution, together with $d^*_v = 0$ for all $v \in Q \setminus \{u\}$, we have 
\begin{align}\label{eq:two-seg-lb}
    \sum_{v \in P_{w_i\to u}} d^*_v = \sum_{v \in R_{w_i\to w_k}} d^*_v \ge 1 \quad \text{and} \quad \sum_{v \in P_{w_j\to u}} d^*_v=\sum_{v \in R_{w_j\to w_k}} d^*_v \ge 1.
\end{align}
Since $f^*_P > 0$, by complementary slackness and substituting Eq.~\eqref{eq:two-seg-lb}, we have
\[
    1 = \left( \sum_{v \in P_{w_i \to u}} d^*_v \right) + \left( \sum_{v \in P_{w_j\to u}} d^*_v \right) - d^*_u \ge 1 + 1 - d^*_u = 2 - d^*_u,
\]
which simplifies to $d^*_u \ge 1$. That is, $d^*_u\geq 1$ in \emph{every} optimal solution to $\lpprimal{G,W}$.
However, this implies that $u$ is a non-zero vertex, a contradiction. 
\end{proof}

\begin{lemma}\label{lem:2LP>=mincut}
    If there is no non-zero vertex, then
    \[
    2OPT_{lp}(G,W)\ge \sum_{w\in W} \mincut(G,w,W\setminus\{w\}).
    \]
\end{lemma}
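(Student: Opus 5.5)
We use the optimal dual solution $\{f^*_P\}$ fixed above, the one maximizing the set of untight vertices, together with its untight regions $U_w$. The idea is to count $\sum_P f^*_P = OPT_{lp}(G,W)$ in two ways, using the boundary sets $N(U_w)$.

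\emph{First count.} By Lemma~\ref{lem:path-pass-two-N(U)}, every path $P$ with $f^*_P>0$ meets $\bigcup_{w\in W} N(U_w)$ in exactly two vertices. Summing over all such paths therefore gives
\[
2\sum_{P} f^*_P=\sum_{P} f^*_P\,\Bigl|P\cap \bigcup_{w\in W} N(U_w)\Bigr| .
\]
We exchange the order of summation on the right. This gives $\sum_{v\in\bigcup_w N(U_w)}\sum_{P\ni v} f^*_P$. Every vertex of $N(U_w)$ is tight, because $U_w$ is a maximal region of untight vertices reachable from $w$. So each inner sum equals $1$, and the right-hand side is $\bigl|\bigcup_{w} N(U_w)\bigr|$. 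By Lemma~\ref{lem:N(U)-disjoint} the sets $N(U_w)$ are pairwise disjoint, so this is $\sum_{w\in W}|N(U_w)|$. Hence $2OPT_{lp}(G,W)=\sum_w |N(U_w)|$.

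\emph{Second count.} It remains to show $|N(U_w)|\ge \mincut(G,w,W\setminus\{w\})$. To get this, we check that $N(U_w)$ separates $w$ from $W\setminus\{w\}$. Let $C=U_w\cup\{w\}$; this set is connected and contains $w$, and $N(C)\setminus W\subseteq N(U_w)$. We first rule out $C$ being adjacent to another terminal $w'$. If it were, there would be a $w$--$w'$ path whose internal vertices are all untight. By complementary slackness every optimal primal solution is $0$ on untight vertices. The primal constraint for this path would then read $0\ge 1$, a contradiction. So $N(C)$ avoids $W$, and $N(C)=N(U_w)$ is a $w$--$(W\setminus\{w\})$ separator. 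Consequently $|N(U_w)|\ge\mincut$. Summing over $w$ gives the claim.

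\emph{Main obstacle.} The delicate point is the exchange of summation in the first count. Since $P$ is a simple path, Lemma~\ref{lem:path-pass-two-N(U)} means it contains exactly two vertices of the union, so no vertex is counted twice. We must also account for vertices that are tight but not in any $N(U_w)$; these contribute nothing to either side, since the sum only ranges over $v\in\bigcup_w N(U_w)$. Both points need only be stated carefully; the substantive work was already done in Lemmas~\ref{lem:N(U)-disjoint} and~\ref{lem:path-pass-two-N(U)}.
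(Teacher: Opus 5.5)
Your proposal is correct and is essentially the paper's proof. You double-count $\sum_P f^*_P$ through the tight boundary vertices, using Lemma~\ref{lem:path-pass-two-N(U)} together with the disjointness from Lemma~\ref{lem:N(U)-disjoint} to get $2OPT_{lp}(G,W)=\sum_{w\in W}|N(U_w)|$, and then bound each $|N(U_w)|$ from below because it is a $w$--$(W\setminus\{w\})$ separator. Your complementary-slackness check that $U_w\cup\{w\}$ has no other terminal as a neighbour fills in a step the paper only asserts. Here $U_w$ should be understood to contain $w$, as the paper implicitly does, so that your $C$ is just $U_w$.
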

\begin{proof}

    Consider the optimal solution $\{f^*_{P}\}$ to the dual LP $\lpdual{G,W}$. Recall that for every $v \in \bigcup_{w\in W}N(U_w)$, the capacity constraint is binding, i.e., $\sum_{P: v \in P} f^*_P = 1$.
    Thus, we have
    \[
        \sum_{w \in W} |N(U_w)| = \sum_{w \in W} \sum_{v \in N(U_w)} 1 = \sum_{w \in W} \sum_{v \in N(U_w)} \left( \sum_{P: v \in P} f^*_P \right).
    \]
    We then exchange the order of summation to iterate over flow paths. Let $P$ be a simple path connecting $w_i$ and $w_j$ with $f^*_P > 0$. By Lemma~\ref{lem:path-pass-two-N(U)}, such a path intersects $\bigcup_{w\in W} N(U_w)$ exactly twice: once at a vertex in $N(U_{w_i})$ and once at a (different) vertex in $N(U_{w_j})$. Thus, each flow variable $f^*_P$ is counted exactly twice in the summation, and we have
    \begin{align}\label{eq:sum_neighbor=2LP}
        \sum_{w \in W} |N(U_w)|= \sum_{w \in W} \sum_{v \in N(U_w)} \left( \sum_{P: v \in P} f^*_P \right) = \sum_{\substack{P \in \mathcal{P}(G,W)\\ f^{*}_{P}>0}} 2 f^*_P = 2 OPT_{lp}(G,W).
    \end{align}
    
    On the other hand, for each $w \in W$, $N(U_w)$ is a separator that separates $w$ from $W \setminus \{w\}$ (since any path from $w$ to another vertex in $W \setminus \{w\}$ must exit $U_w$). Thus, we have $|N(U_w)| \ge \mincut(G, w, W \setminus \{w\})$. Substituting this into Eq.~\eqref{eq:sum_neighbor=2LP} yields
    \[
        2OPT_{lp}(G,W) = \sum_{w\in W}|N(U_{w})|\ge \sum_{w \in W} \mincut(G, w, W \setminus \{w\}),
    \]
    which completes the proof.
\end{proof}

Lemma~\ref{lem:2LP=mincut} directly follows from Lemma~\ref{lem:2LP<=mincut} and Lemma~\ref{lem:2LP>=mincut}.

\subsection{Contracting Contractible Instances}\label{subsec:contract}
We now describe how to convert an instance $I$ of \abbrVMCcomp{} into a bipedal instance $I'$.
Note that we do not require that $I$ has a solution if and only if $I'$ does. Instead, we will pay more attention to the case that $I$ admits a contractible solution.

\begin{restatable}{lemma}{lembipedal}\label{lem:compti-reduce-to-bipedal}
    There exists a polynomial-time algorithm that takes an \VMCcomp{} instance $I=(G, \mathbf{T}, W, k)$, and outputs a bipedal instance $I' = (G',\mathbf{T}',W',k')$, such that:
    \begin{enumerate}[(1)]
        \item {if $I$ admits a shadowless and contractible solution, then $I'$ admits a shadowless solution;}\label{condition:1:lem:compti-reduce-to-bipedal}
        \item if $I'$ is a \textbf{yes}-instance, then $I$ is a \textbf{yes}-instance;\label{condition:2:lem:compti-reduce-to-bipedal}
        \item $|V(G')|\leq |V(G)|$ and $k'\leq k$;\label{condition:3:lem:compti-reduce-to-bipedal}
    \end{enumerate}

\end{restatable}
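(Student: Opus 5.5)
The plan is to build $I'$ by one application of \verb|contract|, mapping the boundary vertices of the farthest isolating min cuts to terminals. I then check (2) and (3) directly from Lemma~\ref{lem:contract}, get (1) from contractibility plus shadowlessness, and finally verify bipedality from the structure of the parts $\Part{R}$.

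\emph{Construction.} First compute $\FC{w}$ for every $w\in W$ using Lemma~\ref{lem:unique-fc}. Then test two structural conditions that every instance with a shadowless contractible solution must satisfy. The first is $\Partfull{W}{G}{\emptyset}=\emptyset$. The second is that the sets $N(\FC{w})$ are pairwise disjoint, and every $v\in N(\FC{w})\setminus W$ lies in $\FC{w'}$ for exactly one $w'\neq w$. If a test fails, output a fixed trivial bipedal \textbf{no}-instance, for example a single terminal together with an unsatisfiable pair and $k'=0$; this respects (2) and (3). Otherwise set $Z=\bigcup_{w}N(\FC{w})\setminus W$ and define $f(v)=w'$, the unique terminal with $v\in\FC{w'}$, $w'\neq w$. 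Output $I'=\verb|contract|(I,f)$, possibly followed by a further contraction of inner boundaries as described below. Property (3) is immediate, and (2) is the first item of Lemma~\ref{lem:contract}.

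\emph{Property (1).} Let $X$ be shadowless and contractible. The first contractibility condition gives $X\cap Z=\emptyset$. For $v\in N(\FC{w})\setminus W$, the second condition says $v$ is not connected to $w$ in $G-X$. Shadowlessness says $v$ is connected to some terminal in $G-X$. I then need this terminal to be exactly $f(v)$. Here I would argue that the component of $v$ in $G-X$ is contained in $\FC{w''}$ for the terminal $w''$ it contains, using that $N(\FC{w''})$ is a minimum separator and that the component of $w''$ in $G-X$ can be uncrossed with $\FC{w''}$ by submodularity. Once this holds, the second item of Lemma~\ref{lem:contract} together with Property 3 of Lemma~\ref{lem:shadow-correctness} shows that $X$ is a shadowless solution of $I'$. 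The same argument, using a shadowless contractible $X$, is also how I would prove that the two structural tests never fail in the \textbf{yes} case. For $\Partfull{W}{G}{\emptyset}=\emptyset$, the inner boundary of that part lies in $\bigcup_w N(\FC{w})$, which is separated from $W$ after contraction, so shadowlessness forces the part to be empty (this is Lemma~\ref{lem:compactible-p-empty} in the overview). Disjointness follows from comparing $\sum_w|N(V\setminus\Part{\{w\}})|$ with $\sum_w|N(\FC{w})|$.

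\emph{Bipedality.} Every vertex of $G'-W$ lies in some part $\Part{R}$ with $R\neq\emptyset$. Using the edge structure from the overview (an edge between $\Part{R_1}$ and $\Part{R_2}$ with $R_1\neq R_2$ forces $|R_1|=|R_2|=1$, or $|R_1|=1$, $|R_2|=2$ and $R_1\subseteq R_2$), I would additionally contract the inner boundary $N(V\setminus\Part{\{w\}})$ into $w$ for each $w$, again via Lemma~\ref{lem:contract}. After this, each component of $G'-W$ sits inside parts touching at most two terminals, so $|N_{G'}(C)|\le 2$.

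\emph{Main obstacle.} The step I expect to be hardest is the uncrossing claim that a boundary vertex $v$ reaches precisely its designated terminal $f(v)$, and more generally that all vertices contracted into $w$ lie in $w$'s component of $G-X$. I would attack it through farthest-ness: if $w$'s component of $G-X$ escaped $\FC{w}$, I would replace $\FC{w}$ by its union with that component. The union's neighbourhood would then be a separator of size at most $|N(\FC{w})|$ with a strictly larger source side, contradicting farthest-ness. Making this rigorous requires submodularity of $|N(\cdot)|$ together with the fact that $X$ already spends at least $\mincut$ vertices around each terminal.
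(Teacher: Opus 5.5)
Your construction matches the paper's. Once $\Part{\emptyset}=\emptyset$, Properties (a) and (b) in the proof of Lemma~\ref{lem:prop-area} give $\bigcup_w N(\FC{w})=\bigcup_w \Bnd{\Part{\{w\}}}$. So your $f$ is the map of Step~\ref{step:contract}, your second structural test is implied by the first, and the extra contraction of inner boundaries is vacuous.

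The gap is in item~(1), exactly at the step you flag, and the route you propose there does not work. Farthest-ness plus submodularity cannot show that the component $C$ of $w$ in $G-X$ stays inside $\FC{w}$. Submodularity only yields $|N(\FC{w}\cup C)|\le |N(\FC{w})|+|N(C)|-|N(\FC{w}\cap C)|\le |N(C)|$. The fact $|N(C)|\ge \mincut(G,w,W\setminus\{w\})$ points the wrong way, so nothing contradicts farthest-ness.

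In fact the containment fails for shadowless solutions that are not contractible. Take terminals $w,w',w''$, further vertices $a,y_1,y_2,z$, edges $wa$, $ay_1$, $ay_2$, $y_1w'$, $y_2w''$, $w'z$, $zw''$, and $\mathbf{T}=\emptyset$, $k=3$. The only size-one separator between $w$ and $\{w',w''\}$ is $\{a\}$, so $\FC{w}=\{w\}$. Yet $X=\{y_1,y_2,z\}$ is a shadowless solution in which the component of $w$ is $\{w,a\}$. So no argument using only that $X$ is a (shadowless) multiway cut and that $N(\FC{w})$ is the farthest minimum separator can succeed.

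The missing observation is that this containment is exactly what contractibility states; no cut counting is needed. Suppose the component of $w''$ in $G-X$ contained a vertex outside $\FC{w''}$. A path in $G-X$ from $w''$ to it would pass through some $y\in N(\FC{w''})\setminus W$. Then $y$ lies in the component of $w''$, violating the second contractibility condition.

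A contracted vertex $v$ lies in $\Part{\{f(v)\}}$, hence outside $\FC{w''}$ for every $w''\neq f(v)$, so it cannot reach any such $w''$. Since $v\notin X$ by the first condition and $X$ is shadowless, $v$ reaches $f(v)$. This is Claims~1 and~2 in the paper's proof of Lemma~\ref{lem:compactible-property}.

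The same one-line argument gives $\Part{\emptyset}=\emptyset$ in Lemma~\ref{lem:compactible-p-empty}: a path in $G-X$ from a vertex outside $\FC{w}$ to $w$ must meet $N(\FC{w})$. Your sketch of that test rests on the same unproved containment and is repaired the same way.
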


\subsubsection{Regions}

\begin{definition}[Regions]
    Let $G=(V, E)$ be a graph and $W\subseteq V$ be a vertex set. 
    For a vertex set $R\subseteq W$, we define 
    \[
    \Partfull{W}{G}{R}:=\bigcap_{w\in R}\FCfull{G}{W}{w} \setminus \bigcup_{w'\in W\setminus R}\FCfull{G}{W}{w'}.
    \]
    We may simply write it as $\Part{R}$ when $G$ is clear in the context.
\end{definition}

Note that $\Partfull{W}{G}{\emptyset}$ may be nonempty.  
The sets $\Partfull{W}{G}{R}$ for all subsets $R \subseteq W$ (including $\emptyset$) form a partition of the vertex set $V$.
An illustration of regions for $|W|=3$ is shown in Figure~\ref{fig:Area}.

\begin{figure}[!h]
    \centering
    \includegraphics[width=0.75\textwidth]{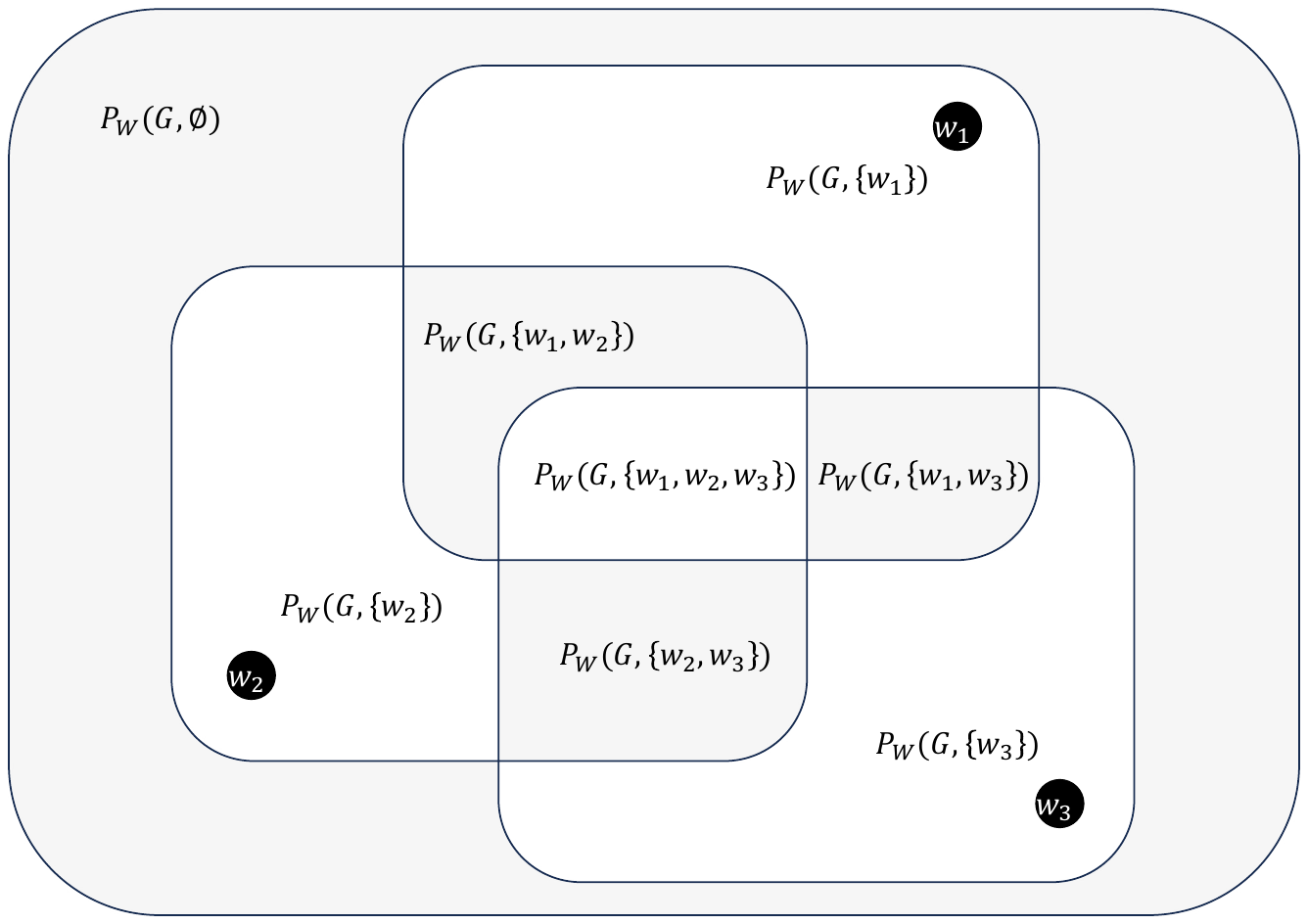}
    \caption{An example for regions, where $W=\{w_1,w_2,w_3\}$. The diagram shows how the $2^3=8$ regions $\Partfull{W}{G}{R}$ (one for each $R\subseteq W$) are formed by the intersections of $\FCfull{G}{W}{w_1}$, $\FCfull{G}{W}{w_2}$, and $\FCfull{G}{W}{w_3}$ (and their complements). Together, these $8$ disjoint regions form a complete partition of the entire vertex set.}
    \label{fig:Area}
\end{figure}

We first prove the following property.

\begin{lemma}\label{lem:compactible-p-empty}
     Let $I=(G=(V,E),\mathbf{T},W,k)$ be an instance of the \abbrVMCcomp{} problem.
     If $I$ admits a shadowless and contractible solution, then $\Partfull{W}{G}{\emptyset}=\emptyset$.
\end{lemma}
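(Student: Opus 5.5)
We argue by contradiction. Suppose $X$ is a shadowless, contractible solution of $I$ and some vertex $u$ lies in $\Partfull{W}{G}{\emptyset}$, i.e.\ $u\notin \FCfull{G}{W}{w}$ for every $w\in W$. Because $X$ is shadowless and $X\cap W=\emptyset$, we first handle $u\in X$ separately; otherwise $u$ lies in a component of $G-X$ that meets $W$, so there is a path $P$ in $G-X$ from $u$ to some $w\in W$. Since $X$ is a multiway cut for $W$, this $w$ is unique and $P$ can be chosen to contain no other vertex of $W$.

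Walk along $P$ from $w$ towards $u$. It starts in $\FCfull{G}{W}{w}$ (this set contains $w$) and ends outside it. Hence $P$ contains a first vertex $v\notin \FCfull{G}{W}{w}$, and $v\in N(\FCfull{G}{W}{w})$. Here $v\notin W$: the only vertex of $W$ in $N[\FCfull{G}{W}{w}]$ would be one from $W\setminus\{w\}$, and $P$ avoids those. Thus $v\in N(\FCfull{G}{W}{w})\setminus W$, and $v$ is joined to $w$ by the subpath of $P$, which lies in $G-X$. This contradicts the second condition of contractibility, which says $v$ and $w$ lie in different components of $G-X$. (The first condition alone would only give $v\notin X$, which is consistent with this.)

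It remains to treat $u\in X$. Contractibility forces $u\notin\bigcup_w N(\FCfull{G}{W}{w})$, but that alone is not enough. I would instead use a minimality property of the solution. Arguing as in the proof of Lemma~\ref{lem:pushing}, a minimum solution satisfies $X=N(R)$, where $R$ is the set of vertices joined to $W$ in $G-X$. So $u$ has a neighbour $r\in R$, and $r\notin X$.
\begin{itemize}
\item If $r\in \Partfull{W}{G}{\emptyset}$, the previous paragraph applied to $r$ gives a contradiction.
\item Otherwise $r\in \FCfull{G}{W}{w}$ for some $w$. Then $u$, being a neighbour of $r$ outside $\FCfull{G}{W}{w}$, lies in $N(\FCfull{G}{W}{w})\setminus W$. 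This contradicts the first contractibility condition, since $u\in X$.
\end{itemize}

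\textbf{Main obstacle.} The difficulty is this case $u\in X$. It needs a step reducing to a solution with $X=N(R)$, and this reduction must keep both shadowlessness and contractibility. Deleting a vertex of $X\setminus N(R)$ keeps the solution property by the argument in Lemma~\ref{lem:pushing}. It leaves the components of $G-X$ that meet $W$ unchanged, since the deleted vertex is not adjacent to $R$. It does not add any vertex to $X$. Both conditions should therefore survive, and the deleted vertex itself becomes a shadow vertex only if it is not adjacent to $R$. So I would apply the $u\notin X$ argument after choosing $X$ minimal with respect to inclusion. The remaining point is that a vertex deleted this way is not in the shadow; I expect this to be where care is needed, possibly by deleting whole components rather than single vertices.
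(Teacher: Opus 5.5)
Your case $u\notin X$ is correct and coincides with the paper's argument. The gap is in the case $u\in X$, and the repair you suggest cannot work. Since $X$ is shadowless, every vertex of $G-X$ reaches $W$, so $R=V\setminus X$ and $N(R)\subseteq X$. A vertex $x\in X\setminus N(R)$ therefore has all its neighbours in $X$, and it becomes an isolated, $W$-free vertex of $G-(X\setminus\{x\})$. In other words, it \emph{always} lands in the new shadow. Releasing a whole component $D$ of $G[X\setminus N(R)]$ fails for the same reason: all outside neighbours of $D$ lie in $N(R)\subseteq X\setminus D$, so $D$ becomes a $W$-free component. So inclusion-minimality among shadowless contractible solutions never forces $X=N(R)$.

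In fact the statement as literally written is false. Let $G$ consist of three isolated vertices $w_1,w_2,u$, with $W=\{w_1,w_2\}$, $\mathbf{T}=\emptyset$ and $k=1$. Then $\FC{w_i}=\{w_i\}$ and $N(\FC{w_i})=\emptyset$. Hence $X=\{u\}$ is a solution that is shadowless and (vacuously) contractible, while $u\in\Partfull{W}{G}{\emptyset}$. This is exactly the ``component contained in $X$'' situation the paper flags in its footnote.

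What is missing is an additional hypothesis rather than a cleverer reduction: $X$ must be a minimum-cardinality solution of $I$ among \emph{all} solutions. The paper assumes this in its footnote, arguing that shadow removal and the branching steps preserve optimality. With that hypothesis your argument is complete as written. The first half of the proof of Lemma~\ref{lem:pushing} gives $X=N(R)$, and your two subcases for the neighbour $r\in R$ of $u$ are correct.

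The paper handles $u\in X$ more directly. Any path from $u$ to some $w\in W$ contains an $\FC{w}$-path from a vertex of $N(\FC{w})$ to $w$. By contractibility this subpath meets $X$, necessarily at an internal vertex in $\FC{w}$, and hence not at $u$. Every $\mathbf{T}$-path or $W$--$W$ path through $u$ contains such a $u$--$W$ subpath, so $X\setminus\{u\}$ is still a solution, contradicting minimality. Only the solution property of $X\setminus\{u\}$ matters there. It is not shadowless, since $u$ itself lands in its shadow, contrary to what the paper's text claims.
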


\begin{proof}

     Suppose $\Partfull{W}{G}{\emptyset} \neq \emptyset$, then let $v$ be a vertex in $\Partfull{W}{G}{\emptyset}$. Let $X$ be an optimal shadowless and contractible solution of $I$. Then either $v\in X$ or not. If $v\notin X$, then since $X$ is shadowless, there is a path connecting $v$ to some $w\in W$ in $G-X$. Since $v\in \Partfull{W}{G}{\emptyset}\subseteq  V(G) \setminus \FC{w}$, such a path contains a subpath connecting a vertex in $N(\FC{w})$ and $w$, contradicting that $X$ is contractible. If $v\in X$, we claim that $X\setminus \{v\}$ is still a shadowless and contractible solution, contradicting that $X$ is optimal. Indeed, any path connecting $v$ and some $w\in W$ has to contain a subpath, which is an $\FC{w}$-path connecting some $v'\in N(\FC{w})$ and $w$. Since $X$ is contractible, this actually implies that any path connecting $v$ and $W$ is hit by a vertex in $X\setminus \{v\}$. Any path connecting a pair in $\mathbf{T}$ or two vertices in $W$ is hit by some vertex in $W$, implying that $X\setminus\{v\}$ is a solution. Similarly, we can easily verify that $X\setminus\{v\}$ is contractible and shadowless\footnote{Here if $v$ is in a component $CC$ in $G$, and $CC\subseteq X$, our argument is not complete. However, it is actually fine to assume that $X$ is optimal, because our shadow removal procedure and branching algorithm preserves optimality. Another way is to add a reduction rule to avoid this trivial case (in fact never appears).}.
\end{proof}

The following property of regions is crucial in our proof.

\begin{lemma}[Property of Regions]\label{lem:prop-area}
    Let $G=(V,E)$ be a graph and $W$ be a vertex set such that $\Partfull{W}{G}{\emptyset}=\emptyset$.
    For any two distinct subsets $R_1, R_2\subseteq W$, 
    if there exists an edge between a vertex in $\Partfull{W}{G}{R_1}$ and a vertex in $\Partfull{W}{G}{R_2}$, 
    then one of the following conditions holds:
    \begin{enumerate}[(1)]
        \item {$|R_1| = |R_2| = 1$}; or
        \item $|R_1| = 1$, $|R_2| = 2$, and $R_1 \subseteq R_2$; or
        \item $|R_1| = 2$, $|R_2| = 1$, and $R_2 \subseteq R_1$.
    \end{enumerate}
\end{lemma}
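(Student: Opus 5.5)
The plan is to first locate the endpoints of every edge between distinct regions inside the farthest isolating min-separators, and then use a counting argument to show that these separators are pairwise disjoint and that each separator vertex lies in exactly one $\FC{u}$.

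Write $F_w := \FC{w}$ and $S_w := N(F_w)$, so that $|S_w| = \mincut(G,w,W\setminus\{w\})$.

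\textbf{Step 1 (localizing cross edges).} Take an edge $(a,b)$ with $a \in \Part{R_1}$, $b \in \Part{R_2}$ and $R_1 \neq R_2$, and pick $w$ in the symmetric difference, say $w \in R_2 \setminus R_1$. Then $b \in F_w$ and $a \notin F_w$, so $a \in S_w$. In words: the endpoint outside $F_w$ of any cross edge lies in $S_w$ for every $w$ that separates the two region labels. Since $\Part{\emptyset}=\emptyset$, every vertex lies in some $F_u$.

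\textbf{Step 2 (structural claim).} I would then prove the following.
\emph{Claim:} (A) the sets $S_w$, $w\in W$, are pairwise disjoint, and (B) every $x \in S_w$ belongs to $F_u$ for exactly one $u \in W\setminus\{w\}$, i.e.\ $x \in \Part{\{u\}}$.

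For this, set $C_w := \Part{\{w\}}$, which contains $w$ because $w \notin F_{w'}$ for $w' \neq w$. Then $N(C_w)$ is an isolating $w$-separator, so $\sum_w |N(C_w)| \ge \sum_w |S_w|$.

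Conversely, I want to show the following for each $x \in N(C_w)$, with $y\in C_w$ a neighbour of $x$ and $x \in \Part{R}$, $R\neq\{w\}$:
\begin{itemize}
\item either $w \notin R$, in which case Step 1 gives $x \in S_w$;
\item or some $u \in R\setminus\{w\}$ exists, in which case $y \in S_u$.
\end{itemize}
Charging each vertex of $N(C_w)$ to a vertex of $\bigcup_u S_u$ in this way, I would try to show that every vertex of $\bigcup_u S_u$ is charged at most once overall. Together with the reverse inequality, this forces all multiplicities to be one, which yields (A) and (B).

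\textbf{Main obstacle.} This charging/counting step is where I expect the difficulty. If the direct charging fails, my fallback is vertex posimodularity, namely
\[
|N(F_u\setminus N[F_w])| + |N(F_w\setminus N[F_u])| \le |S_u| + |S_w|,
\]
combined with the minimality of $S_u$ and $S_w$ to obtain the needed disjointness.

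\textbf{Step 3 (concluding from the claim).} Return to the edge $(a,b)$ with $w \in R_2\setminus R_1$ and $a \in S_w$.
\begin{itemize}
\item By (B), $R_1 = \{u\}$ for a single $u$.
\item If $b \in F_{u'}$ for some $u' \notin \{u,w\}$, then $u' \in R_2\setminus R_1$, and Step 1 gives $a \in S_{u'}$. This contradicts (A), since $a \in S_w$ as well. Hence $R_2 \subseteq \{u,w\}$, and $w \in R_2$.
\item If $R_2=\{u,w\}$, we are in case (2).
\item If $R_2=\{w\}$, both sets are singletons, which is case (1).
\end{itemize}
Swapping the roles of $R_1$ and $R_2$ gives case (3).
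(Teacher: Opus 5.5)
Your Steps 1 and 3 are sound. Your Claim (A)+(B) is exactly what the paper proves: its Properties (a) and (b) say the $S_w$ are pairwise disjoint and $\bigcup_w N(V\setminus C_w)=\bigcup_w S_w$. Given the Claim, your Step 3 derives the three cases correctly, and more carefully than the paper's closing case split. The gap is Step 2, which is the core of the lemma.

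The charging you propose cannot work. You need $\sum_w |N(C_w)|\le \bigl|\bigcup_u S_u\bigr|$, but outer neighbourhoods of different $C_w$ can overlap and can be larger than the cuts. Take $W=\{w_1,w_2\}$ with edges $w_1a,\ ac_1,\ ac_2,\ c_1b,\ c_2b,\ bw_2$. Then:
\begin{itemize}
\item $F_{w_1}=\{w_1,a,c_1,c_2\}$ and $F_{w_2}=\{w_2,b,c_1,c_2\}$;
\item $S_{w_1}=\{b\}$ and $S_{w_2}=\{a\}$;
\item $P_W(\emptyset)=\emptyset$;
\item $N(C_{w_1})=N(C_{w_2})=\{c_1,c_2\}$.
\end{itemize}
So $\sum_w|N(C_w)|=4>2=\bigl|\bigcup_u S_u\bigr|$; your rule sends both $c_1$ and $c_2$ to $a$.

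The posimodularity fallback, as you describe it, also falls short. It never uses $P_W(\emptyset)=\emptyset$, and without that hypothesis (A) is false: the path on $w_1,x,w_2$ gives $S_{w_1}=S_{w_2}=\{x\}$.

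The missing idea is to count the inner boundaries $B(C_w)=N(V\setminus C_w)$, that is, the vertices $y$ from your second bullet, instead of the outer neighbours $x$.
\begin{itemize}
\item Each $y\in B(C_w)$ has a neighbour $x\in P_W(R)$ with $R\neq\emptyset,\{w\}$, so $y\in S_u$ for any $u\in R\setminus\{w\}$.
\item $B(C_w)$ is an isolating $w$-separator. It avoids $w$: a neighbour of $w$ outside $C_w$ would lie in some $F_u$ with $u\neq w$, which would put $w\in S_u$, and that is impossible.
\item The sets $B(C_w)\subseteq C_w$ are pairwise disjoint automatically, since the regions $C_w$ are.
\end{itemize}
Hence
\[
\Bigl|\bigcup_{w} B(C_w)\Bigr|\le\Bigl|\bigcup_{u} S_u\Bigr|\le\sum_u|S_u|\le\sum_w|B(C_w)|=\Bigl|\bigcup_{w} B(C_w)\Bigr| .
\]
Equality throughout gives (A), together with $\bigcup_u S_u=\bigcup_w B(C_w)$, which yields (B). This is the paper's argument, and with it in place your Step 3 finishes the proof.
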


    

\begin{proof}

    Our proof relies on the following two properties.

    \textbf{Property (a).}
    For any $w_i,w_j\in W$,
    \[
        N(\FC{w_i})\cap N(\FC{w_j}) = \emptyset,
    \]

    \textbf{Property (b).}
    \[
        \bigcup_{w\in W} \Bnd{\Part{\{w\}}} = \bigcup_{w\in W} N(\FC{w}).
    \]

    We first prove these two properties by establishing the following chain of inequalities.

    Since $\Part{\emptyset} = \emptyset$, we know that $\bigcup_{w\in W}\Bnd{\Part{\{w\}}}\subseteq \bigcup_{w\in W} N(\FC{w})$. 
    Thus, it holds that
    \begin{equation}
        \left|\bigcup_{w\in W}\Bnd{\Part{\{w\}}}\right| \leq \left|\bigcup_{w\in W} N(\FC{w})\right|\leq\sum_{w\in W} |N(\FC{w})|. 
        \label{eq:1}
    \end{equation}


    For each $w\in W$, by the definition of $\FC{w}$, 
    $N(\FC{w})$ is a minimum separator that separates $\{w\}$ from $(W\setminus \{w\})$.
    Since $\Bnd{\Part{\{w\}}}$ is a separator that separates $\{w\}$ from $(W\setminus \{w\})$, we have $|N(\FC{w})|\leq |\Bnd{\Part{\{w\}}}|$.
    Thus, it holds that 
    \begin{equation}
        \sum_{w\in W} |N(\FC{w})|\leq \sum_{w\in W} |\Bnd{\Part{\{w\}}}|.
        \label{eq:3}
    \end{equation}

    By the definition of $\Part{\{w\}}$ for every $w\in W$, we know that for each $w_i, w_j\in W$, it holds that $\Part{\{w_i\}}\cap \Part{\{w_j\}} = \emptyset$.
    Thus, it holds that
    \begin{equation}
        \sum_{w\in W} |\Bnd{\Part{\{w\}}}| = \left|\bigcup_{w\in W}\Bnd{\Part{\{w\}}}\right|.
        \label{eq:4}
    \end{equation}

    Combining the Eqs.~\eqref{eq:1}\eqref{eq:3}\eqref{eq:4}, we have that
    \begin{equation}
        \left|\bigcup_{w\in W}\Bnd{\Part{\{w\}}}\right| \leq 
        \sum_{w\in W} |N(\FC{w})| \leq
        \sum_{w\in W} |\Bnd{\Part{\{w\}}}| = 
        \left|\bigcup_{w\in W}\Bnd{\Part{\{w\}}}\right|.
        \label{eq:5}
    \end{equation}
    Thus, all inequalities hold with equality simultaneously.
    In particular, 
    \[
    \left|\bigcup_{w\in W} N(\FC{w})\right| = 
        \sum_{w\in W} |N(\FC{w})|,
    \]
    which implies that for any $w_i,w_j\in W$,
    \[
        N(\FC{w_i})\cap N(\FC{w_j}) = \emptyset.
    \]
    Thus, Property (a) holds.

    By Eq.~\eqref{eq:5}, we also know that $\left|\bigcup_{w\in W}\Bnd{\Part{\{w\}}}\right| = \left|\bigcup_{w\in W} N(\FC{w})\right|$.
    Recall that since $\Part{\emptyset} = \emptyset$, we know that $\bigcup_{w\in W}\Bnd{\Part{\{w\}}}\subseteq \bigcup_{w\in W} N(\FC{w})$.
    Thus, it holds that
    \[
        \bigcup_{w\in W} \Bnd{\Part{\{w\}}} = \bigcup_{w\in W} N(\FC{w}).
    \]
    Thus, Property (b) holds.

    Now, we are ready to prove the theorem.
    It is sufficient to show that for any $R, R'\subseteq W$ such that (i) $|R\setminus R'|\geq 2$ or (ii) $|R\setminus R'|\leq 1$ and $|R| + |R'| \geq 3$, there is no edge between $R$ and $R'$. 
    By contradiction, we assume that there exists an edge $(u, v)$ such that $u\in \Part{R}$ and $v\in \Part{R'}$.
    We consider the following two cases.
    
    Case (i). $|R\setminus R'|\geq 2$. 
    Let $w_1$ and $w_2$ be two vertices in $R\setminus R'$. 
    By the definition of $\Part{\cdot}$, we know that $u\in N(\FC{w_1})$ and $u\in N(\FC{w_2})$.
    However, by Property (a), it holds that $N(\FC{w_1})\cap N(\FC{w_2}) = \emptyset$, which leads to a contradiction.

    Case (ii). $|R\setminus R'|\leq 1$ and $|R| + |R'| \geq 3$.
    In this case, it is not hard to see that $|R|\geq 2$ and $|R'|\geq 2$.
    Let $w$ be a vertex in $R\setminus R'$.
    By the definition of $\Part{\cdot}$, we know that $v\in N(\FC{w})$.
    However, since $v\in \Part{R'}$ where $|R'|\geq 2$, it holds that $v\notin \bigcup_{w\in W} \Bnd{\Part{\{w\}}}$.
    Thus, it holds that 
    $\bigcup_{w\in W} \Bnd{\Part{\{w\}}} \neq \bigcup_{w\in W} N(\FC{w})$, which contradicts Property (b).

    Thus, this theorem holds.
\end{proof}

\begin{corollary}\label{coro-area-3}
    Let $G=(V,E)$ be a graph and $W$ be a vertex set such that $\Partfull{W}{G}{\emptyset}=\emptyset$.
    For any $R\subseteq W$ with $|R| \geq 3$, we have that $\Partfull{W}{G}{R} = \emptyset$.
\end{corollary}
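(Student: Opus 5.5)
The plan is to use Property (a) established in the proof of Lemma~\ref{lem:prop-area}: the farthest isolating min separators $N(\FC{w})$, $w\in W$, are pairwise disjoint. Property (a) was derived only from the hypothesis $\Part{\emptyset}=\emptyset$, so it holds here. The key observation is that every vertex $v\in\Part{R}$ with $|R|\ge 3$ would have to lie in $N(\FC{w})$ for at least two different $w$, unless it is an interior vertex of several $\FC{w}$ simultaneously. I will rule out the interior case by connectivity and then invoke Property (a).

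First I would show that $\FC{w_i}\cap\FC{w_j}$ contains no vertex of $W$ and, more importantly, that $\FC{w_i}$ and $\FC{w_j}$ meet only in a way controlled by their boundaries. The cleaner route is via Lemma~\ref{lem:prop-area} itself. Suppose $\Part{R}\neq\emptyset$ for some $|R|\ge 3$, and pick $v\in\Part{R}$. Since $v\in\FC{w}$ for some $w\in R$, and $\FC{w}$ is the connected component of $w$ in $G-N(\FC{w})$, there is a path from $v$ to $w$ inside $\FC{w}$. Here $w\in\Part{\{w\}}$, because $w\notin \FC{w'}$ for $w'\neq w$, as $\FC{w'}$ avoids $W\setminus\{w'\}$. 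Walking along this path from $v$ to $w$, the region index starts at $R$ with $|R|\ge3$ and ends at $\{w\}$. Since the regions partition $V$ (using $\Part{\emptyset}=\emptyset$), some edge of the path goes from a vertex in $\Part{R_1}$ with $|R_1|\ge 3$ to a vertex in $\Part{R_2}$ with $|R_2|\le 2$. Alternatively, the first edge leaving the set $\bigcup_{|R'|\ge3}\Part{R'}$ does this.

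At that edge, Lemma~\ref{lem:prop-area} requires $\{|R_1|,|R_2|\}\subseteq\{1,2\}$ with the specified patterns. In particular, any edge incident to a region with index of size at least $3$ violates all three options (1)--(3). This contradiction shows $\Part{R}=\emptyset$ for all $|R|\ge3$.

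The only subtle point is ensuring that the edge found is between two \emph{distinct} regions, and that the walk really reaches $\Part{\{w\}}$. Both follow since the regions partition $V$ and the endpoint $w$ lies in a region of index size $1\neq|R|$. So the path must cross between distinct regions at some edge, and I take the first such crossing out of $v$'s region. That crossing edge joins $\Part{R}$ (index size $\ge3$) with some other region, already contradicting Lemma~\ref{lem:prop-area}. I expect no real obstacle beyond checking that $w\in\Part{\{w\}}$, which follows from the definition of $\FC{w'}$ as the side of an isolating separator containing $w'$ and avoiding $W\setminus\{w'\}$.
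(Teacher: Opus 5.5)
Your proof is correct and is essentially the paper's own argument. Pick $v\in\Part{R}$ and follow a path inside $\FC{w}$ to $w\in\Part{\{w\}}$; the first edge of that path leaving $\Part{R}$ then contradicts Lemma~\ref{lem:prop-area}, since none of its three cases allows an index set of size at least $3$. The opening plan via Property (a) and ``interior vertices'' is never actually used and can be dropped.
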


\begin{proof}
    
    Suppose for the sake of contradiction that there exists $R\subseteq W$ with $|R|\geq 3$ such that $\Partfull{W}{G}{R} \neq \emptyset$.
    By the definition of the regions, $\Partfull{W}{G}{R} \subseteq \bigcap_{w\in R}\FCfull{G}{W}{w}$. Thus, each vertex in $\Partfull{W}{G}{R}$ must be connected to some vertex $w\in R$.
    That is, there is a path from a vertex in $\Partfull{W}{G}{R}$ to vertex $w$.
    Such a path should contain an edge between $\Partfull{W}{G}{R}$ and $\Partfull{W}{G}{R'}$ for some $R'\supseteq \{w\}$ s.t. $R'\neq R$, since $w\in \Partfull{W}{G}{\{w\}} $ and $\Partfull{W}{G}{R}\cap \Partfull{W}{G}{\{w\}}=\emptyset$.
    However, since $|R|\geq 3$, by Lemma~\ref{lem:prop-area}, there should not exist an edge between $\Partfull{W}{G}{R}$ and $\Partfull{W}{G}{R'}$ for any $R'\subseteq W$.  
    A contradiction.
    Therefore, $\Partfull{W}{G}{R} = \emptyset$ for all $|R|\geq 3$.
\end{proof}

\subsubsection{The algorithm}

\begin{step}\label{step:pempty}
    If $\Partfull{W}{G}{\emptyset}\neq\emptyset$, then return a trivial bipedal no-instance (e.g., $(\emptyset,\emptyset,\emptyset,-1)$).
\end{step}

\begin{step}\label{step:contract}
    Let $f:\bigcup_{w\in W}\Bnd{\Partfull{W}{G}{\{w\}}}\setminus W\to W$ be a function where $f(v)=w$ if $v\in \Bnd{\Partfull{W}{G}{\{w\}}}$
    (note that $f$ is well-defined, since for any distinct $w',w\in W$, $\Bnd{P_{W}(G,\{w\})}\cap \Bnd{P_{W}(G,\{w'\})}=\emptyset$).
    Let $(G_{f}, \textbf{T}_f, W, k)$ be the resulting instance of $\verb|contract|(I,f)$.
    Return $(G_{f}, \textbf{T}_f, W, k)$.
\end{step}

\begin{lemma}\label{lem:bipedal}
    Let $I=(G=(V,E),\mathbf{T},W,k)$ be an instance of the \VMCcomp{} problem, where $\Partfull{W}{G}{\emptyset}=\emptyset$.
    It holds that $(G_{f}, \mathbf{T}_{f}, W, k)$ is a bipedal instance.
\end{lemma}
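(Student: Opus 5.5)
We need to show every component $C$ of $G_f-W$ has $|N_{G_f}(C)|\le 2$. The plan is to first pin down which vertices survive the contraction and where they lie in the region partition, then show each component of $G_f-W$ lives in a single region of size at most two and only touches the corresponding terminals.

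By Corollary~\ref{coro-area-3} and the hypothesis $\Part{\emptyset}=\emptyset$, every vertex of $G$ lies in $\Part{\{w\}}$ for some $w$ or in $\Part{\{w_1,w_2\}}$ for some pair. After the contraction of Step~\ref{step:contract}, the inner boundaries $\Bnd{\Part{\{w\}}}\setminus W$ disappear, merged into $w$. So the vertices of $G_f-W$ are of two kinds: interior vertices of some $\Part{\{w\}}$ (all of whose $G$-neighbours lie in $\Part{\{w\}}$), and vertices of regions $\Part{\{w_1,w_2\}}$.

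Next, take a component $C$ of $G_f-W$ and argue that it lies inside a single region. An edge of $G_f$ between two non-$W$ vertices is an original edge of $G$, since contraction only redirects edges to $W$. By Lemma~\ref{lem:prop-area}, an edge of $G$ between different regions joins either two singleton regions or $\Part{\{w\}}$ and $\Part{\{w,w'\}}$. In either case one endpoint lies in $\Bnd{\Part{\{w\}}}$ for the singleton side, so it has been contracted into $W$ in $G_f$. Hence no edge of $G_f-W$ crosses regions.

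If $C\subseteq\Part{\{w\}}$, every neighbour of $C$ in $G$ is in $\Part{\{w\}}$; it is either a vertex of $C$, the vertex $w$ itself, or a boundary vertex mapped to $w$. Thus $N_{G_f}(C)\subseteq\{w\}$.

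If instead $C\subseteq\Part{\{w_1,w_2\}}$, then a $G$-neighbour of $C$ outside $C$ lies, by Lemma~\ref{lem:prop-area}, either in $\Part{\{w_1\}}$ or $\Part{\{w_2\}}$, or it is a vertex of $W$ adjacent directly. An outside neighbour $u\in\Part{\{w_i\}}$ is adjacent to a vertex outside its region, so $u\in\Bnd{\Part{\{w_i\}}}$ and $f(u)=w_i$. A vertex $w\in W$ adjacent directly satisfies $w\in\Part{\{w\}}$, which forces $w\in\{w_1,w_2\}$. Thus $N_{G_f}(C)\subseteq\{w_1,w_2\}$.

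The main obstacle is bookkeeping about $W$ itself and the definition of $f$. I need to confirm that $w\in\Part{\{w\}}$, i.e.\ that $w\notin\FC{w'}$ for $w'\ne w$, which holds since $\FC{w'}$ avoids $W\setminus\{w'\}$. I also need the well-definedness of $f$, namely disjointness of the boundaries; this follows from disjointness of the singleton regions plus Property~(a) in the proof of Lemma~\ref{lem:prop-area}, which I would restate here.
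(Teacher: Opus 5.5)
Your proof is correct and follows the same route as the paper. Corollary~\ref{coro-area-3} reduces to regions of size one or two. Lemma~\ref{lem:prop-area} shows that every cross-region edge has an endpoint on the inner boundary of a singleton region, so that endpoint is contracted into $W$. The two cases then give $N_{G_f}(C)\subseteq\{w\}$ or $N_{G_f}(C)\subseteq\{w_1,w_2\}$. You spell out explicitly what the paper compresses into its one-line claim that the components of $G_f-W$ are the components of $G-(W\cup\bigcup_{w}\Bnd{\Part{\{w\}}})$.

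One small simplification: well-definedness of $f$ needs only that the singleton regions are disjoint, since $\Bnd{S}\subseteq S$ for any $S$, so Property~(a) is not needed there.
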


\begin{proof}

    By Corollary~\ref{coro-area-3}, it holds that $\Part{R}=\emptyset$ for any $|R| \geq 3$. 
    Since $\Part{\emptyset} = \emptyset$, the vertices in $G$ must be in $\Part{R}$ where $1\leq |R| \leq 2$.
    
    \begin{figure}[!t]
        \centering
        \includegraphics[width=0.7\textwidth]{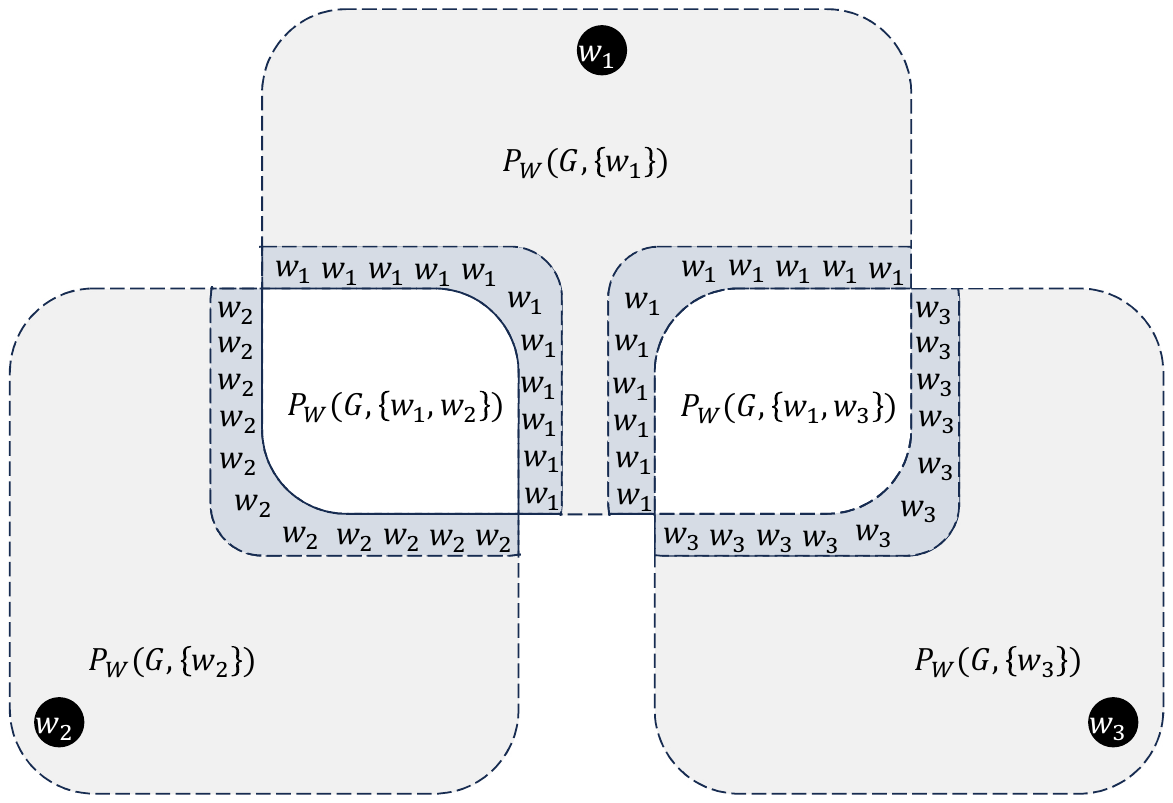}
        \caption{An illustration for the contraction in the proof of Lemma \ref{lem:bipedal}, where $W = \{w_1,w_2,w_3\}$.
        In this contraction, the vertices in $\Bnd{\Part{\{w_i\}}}$ are contracted into $w_i$ for each $w_i\in W$.
        }
        \label{fig:Contraction}
    \end{figure}

    Recall that $f:\bigcup_{w\in W}\Bnd{\Part{\{w\}}}\setminus W\to W$ is a function where $f(v)=w$ if $v\in \Bnd{\Part{\{w\}}}$.
    And we contract $f^{-1}(w)$ into $w$ for each $w\in W$.
    See Figure~\ref{fig:Contraction} for an illustration.
    By the definition of $f(\cdot)$,
    the connected components of $G_f - W$ are exactly the connected components of $G - (W \cup  \bigcup_{w\in W}\Bnd{\Part{\{w\}}})$.
    Consider a vertex $v$ in $G_f-W$. Let $\CC{v}$ denote the component containing $v$ in $G_f- W$.
    If $v\in \Part{\{w\}}$ for some $w$, then since $\Bnd{\Part{\{w\}}}$ is contracted into $w$, $N_{G_f}(CC_v)\cap W\subseteq \{w\}$, thus $|N_{G_f}(CC_v)\cap W|\leq 1$. Otherwise, by Corollary~\ref{coro-area-3} and $\Part{\emptyset} = \emptyset$, $v\in \Part{\{w_1,w_2\}}$ for some $w_1,w_2\in W$. 
    By Lemma~\ref{lem:prop-area}, $N(\Part{\{w_1,w_2\}})\subseteq \Part{\{w_1\}}\cup \Part{\{w_2\}}$. Since $\Bnd{\Part{\{w_1\}}}$ is contracted into $w_1$ and $\Bnd{\Part{\{w_2\}}}$ is contracted into $w_2$, $N_{G_f}(CC_v)\cap W\subseteq \{w_1,w_2\}$, thus $|N_{G_f}(CC_v)\cap W|\leq 2$. This completes the proof.
\end{proof}

Next, we show the correspondence of the solutions of the input \abbrVMCcomp{} instance and the output bipedal instance.

\begin{lemma}\label{lem:compactible-property}
    Let $I=(G=(V,E),\mathbf{T},W,k)$ be an instance of the \VMCcomp{} problem.
    It holds that
    \begin{enumerate}[(1)]
        \item {if $I$ admits a shadowless and contractible solution, then $(G_{f}, \mathbf{T}_{f}, W, k)$ admits a shadowless solution};
        \item if $(G_{f}, \mathbf{T}_{f}, W, k)$ is a \textbf{yes}-instance, then $I$ is a \textbf{yes}-instance.
    \end{enumerate}
\end{lemma}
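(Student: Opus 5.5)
Both parts rest on Lemma~\ref{lem:contract} and Lemma~\ref{lem:shadow-correctness}. Part (2) is immediate. $(G_f,\mathbf{T}_f,W,k)$ is exactly $\verb|contract|(I,f)$. By the first bullet of Lemma~\ref{lem:contract}, if $I$ is a \textbf{no}-instance then so is the contracted instance. Taking the contrapositive, every solution of $(G_f,\mathbf{T}_f,W,k)$ is a solution of $I$.

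For part (1), I fix a shadowless and contractible solution $X$ of $I$. I want to apply Property~3 of Lemma~\ref{lem:shadow-correctness} with $Z=\bigcup_{w\in W}\Bnd{\Part{\{w\}}}\setminus W$ and the map $f$ of Step~\ref{step:contract}. This needs two facts: $X\cap Z=\emptyset$, and for every $v\in Z$, $v$ lies in the same component of $G-X$ as $f(v)$.

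First, Lemma~\ref{lem:compactible-p-empty} gives $\Part{\emptyset}=\emptyset$, so Properties (a) and (b) from the proof of Lemma~\ref{lem:prop-area} hold. By Property (b), $Z=\bigcup_{w}N(\FC{w})\setminus W$. The first contractibility condition then gives $X\cap Z=\emptyset$.

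Second, take $v\in \Bnd{\Part{\{w\}}}\setminus W$, so $f(v)=w$. By Property (b), $v\in N(\FC{w'})$ for some $w'$, and by Property (a) this $w'$ is unique. The main step is to show $w'=w$.
\begin{itemize}
\item Since $v\in \Part{\{w\}}\subseteq \FC{w}$, the vertex $v$ is not in $N(\FC{w})$. So I need a different argument.
\item Since $v$ is an inner boundary vertex of $\Part{\{w\}}$, it has a neighbour $u\notin \Part{\{w\}}$.
\item By Lemma~\ref{lem:prop-area}, $u\in \Part{R}$ with $R=\{w''\}$ for some $w''\neq w$, or $R=\{w,w''\}$.
\item In both cases $u\in \FC{w''}$ and $v\notin \FC{w''}$, so $v\in N(\FC{w''})$. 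Hence $w'=w''\neq w$.
\end{itemize}

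So the naive map sends $v$ to the ``wrong'' terminal relative to the contractibility condition. I therefore need to check the definition of $f$ carefully. The intended contraction is presumably of $N(\FC{w''})$-type vertices, and contractibility says exactly that such $v$ is disconnected from $w''$. Shadowlessness says $v$ reaches some terminal in $G-X$. I would argue that this terminal is $w$.
\begin{itemize}
\item $X$ is a multiway cut, so $v$ reaches exactly one terminal.
\item Consider any path from $v$ to a terminal $w^*\ne w$ in $G-X$. Where it leaves $\FC{w^*}$'s complement, it enters through some vertex of $N(\FC{w^*})$, and then continues inside $\FC{w^*}$ to $w^*$.
\item That entry vertex reaches $w^*$ in $G-X$, which contradicts contractibility. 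The exception is when the entry vertex is $v$ itself with $w^*=w''$, and this is also forbidden by contractibility.
\end{itemize}
Hence $v$ is connected to $w=f(v)$ in $G-X$, as required.

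With both hypotheses verified, Property~3 of Lemma~\ref{lem:shadow-correctness} shows that $X$ is a shadowless solution of $(G_f,\mathbf{T}_f,W,k)$. The main obstacle is the terminal bookkeeping in the middle step: showing that the unique terminal reachable from $v$ agrees with the label $f(v)$ given by the region containing $v$, using Properties (a) and (b) together with both contractibility conditions.
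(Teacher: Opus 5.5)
Your proof is correct and follows the paper's argument. Part (1) rests on the same two claims, followed by Property~3 of Lemma~\ref{lem:shadow-correctness}: a path in $G-X$ from $v$ to any $w^*\neq w$ must enter $\FC{w^*}$ through $N(\FC{w^*})$, contradicting contractibility, and shadowlessness forces $v$ to reach some terminal. Part (2) is the first bullet of Lemma~\ref{lem:contract}.

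Your detour showing $v\in N(\FC{w''})$ for some $w''\neq w$ is unnecessary, and nothing is ``wrong'' with $f$. Your explicit check that $X\cap Z=\emptyset$, via Lemma~\ref{lem:compactible-p-empty} and the first contractibility condition, fills a step the paper leaves implicit.
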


\begin{proof}


    We first prove statement (1). 
    Assume that $I$ admits a shadowless and contractible solution and let $X$ be the shadowless and contractible solution of $I$.
    Let $\CC{v}$ for a vertex $v\in V$ denote the component containing $v$ in $G- X$.
    Let $w$ be an arbitrary vertex in $W$ and $v$ be an arbitrary vertex in $\Bnd{\Part{\{w\}}}$. 
    We have the following two claims.
    
    Claim~1: $\CC{v}\cap (W\setminus \{w\})=\emptyset$. 
    Since $X$ is a contractible solution, we have $\CC{w}\cap N(\FC{w})= \emptyset$.
    Suppose for the sake of contradiction that there is some $w'\in W\setminus\{w\}$ such that $v\in \CC{w'}$. Then there must be a path from $v$ to $w'$ in $G- X$. Since $v\notin \FC{w'}$,  such a path should contain at least one vertex in $N(\FC{w'})$, and this implies that $\CC{w'}\cap N(\FC{w'})\neq \emptyset$. A contradiction. 

    Claim~2: $\CC{v}\cap W\neq \emptyset$.
    It clearly holds since $X$ is a shadowless solution.
    
    Combining the above two claims, we have that for any $w\in W$ and $v\in \Bnd{\Part{\{w\}}}$, $\CC{v}\cap W=\{w\}$.
    That is, for any $w\in W$, all vertices in 
     $\Bnd{\Part{\{w\}}}$ and $w$ are in the same component in $G - X$. 
     This means that $X$ is also a solution of $(G_{f}, \mathbf{T}_{f}, W, k)$.
     By Lemma~\ref{lem:shadow-correctness}, $X$ is a shadowless solution of $(G_{f}, \mathbf{T}_{f}, W, k)$. 

     For statement (2), it is clear that any solution to $(G_{f}, \mathbf{T}_{f}, W, k)$ can be transformed into a solution to $I$ by reversing the corresponding contraction operations that yield $G_{f}$ and $\mathbf{T}_{f}$.
\end{proof}

The proof of Lemma \ref{lem:compti-reduce-to-bipedal} is given below.

\begin{proof}
 [Proof of Lemma~\ref{lem:compti-reduce-to-bipedal}]

    For an input \abbrVMCcomp{} instance $I$,
    our algorithm always outputs a bipedal instance:
    in Step~\ref{step:pempty}, it outputs a trivial bipedal no-instance; 
    in Step~\ref{step:contract}, 
    since Step~\ref{step:pempty} cannot be applied, it holds that $\Part{\emptyset} = \emptyset$. By Lemma~\ref{lem:bipedal}, the instance output in Step~\ref{step:contract} is a bipedal instance.
    Next, we check the three conditions.

    For condition~(\ref{condition:1:lem:compti-reduce-to-bipedal}), if $I$ admits a shadowless and contractible solution, by Lemma~\ref{lem:compactible-property}, we know that the output instance $(G_f,\textbf{T}_f, W, k)$ admits a shadowless solution. Thus, condition~(\ref{condition:1:lem:compti-reduce-to-bipedal}) holds.
    For condition~(\ref{condition:2:lem:compti-reduce-to-bipedal}), by Lemma~\ref{lem:compactible-property}, we know that $I$ is a \textbf{yes}-instance, and thus condition~(\ref{condition:2:lem:compti-reduce-to-bipedal}) holds.
    For condition~(\ref{condition:3:lem:compti-reduce-to-bipedal}), it can be verified that the number of vertices and the parameter $k$ do not increase.

    Finally, we analyze the running time.
    For Step~\ref{step:contract}, contracting vertices to obtain $G_{f}$ and $\mathbf{T}_{f}$ can be done in $n^{O(1)}$ time. 
    Thus, our algorithm runs in $n^{O(1)}$ time.
\end{proof}

\section{Conclusion and Discussion}\label{sec:conclusion}
In this work, we presented a $k^{O(k)} n^{O(1)}$-time algorithm for the \textsc{Vertex Multicut} problem, improving upon the previous best $k^{O(k^2)} n^{O(1)}$-time result by Chitnis et al.~\cite{DBLP:journals/talg/ChitnisCHM15-FPT-DSFVS}. Our main technical contribution is a refined shadow removal procedure. 
Together with a deep analysis of the reduction to bipedal instances, this yields a faster algorithm while preserving the overall framework of Marx and Razgon~\cite{DBLP:journals/siamcomp/MarxR14-first-FPT}.  
Our refined shadow removal procedure also implies a $k^{O(k^2)}n^{O(1)}$ time algorithm for {\sc Directed Subset Feedback Vertex Set} and a $k^{O(k)}n^{O(1)}$ time algorithm for {\sc Directed Multiway Cut}, improving over the previously best known algorithms of Chitnis et al. ~\cite{DBLP:journals/talg/ChitnisCHM15-FPT-DSFVS}.
We believe that the refined shadow removal technique and other techniques developed in this paper can be applied to other parameterized cut and separation problems, particularly those that can be expressed in terms of bipedal compression or shadowless solutions.

An interesting direction for future work is to explore the possibility of obtaining single-exponential parameterized algorithms for \textsc{Edge Multicut}, 
that is, algorithms running in time $2^{O(k)} n^{O(1)}$. Another problem is to improve over the bound in the shadow removal theorem, or to prove a lower bound.

\bibliography{bib_file}

@inproceedings{2-multicut-YannakaMKPCSPC1983,
author = {Yannakakis, Mihalis and Kanellakis, Paris C. and Cosmadakis, Stavros S. and Papadimitriou, Christos H.},
title = {Cutting and Partitioning a Graph aifter a Fixed Pattern (Extended Abstract)},
year = {1983},
isbn = {3540123172},
publisher = {Springer-Verlag},
address = {Berlin, Heidelberg},
booktitle = {Proceedings of the 10th Colloquium on Automata, Languages and Programming},
pages = {712–722},
numpages = {11}
}

@article{DBLP:journals/siamcomp/DahlhausJPSY94-complexity-of-multicut,
  author       = {Elias Dahlhaus and
                  David S. Johnson and
                  Christos H. Papadimitriou and
                  Paul D. Seymour and
                  Mihalis Yannakakis},
  title        = {The Complexity of Multiterminal Cuts},
  journal      = {{SIAM} J. Comput.},
  volume       = {23},
  number       = {4},
  pages        = {864--894},
  year         = {1994},
  url          = {https://doi.org/10.1137/S0097539792225297},
  doi          = {10.1137/S0097539792225297},
  bibsource    = {dblp computer science bibliography, https://dblp.org}
}

@article{DBLP:journals/siamcomp/GargVY96-logk-approxi,
  author       = {Naveen Garg and
                  Vijay V. Vazirani and
                  Mihalis Yannakakis},
  title        = {Approximate Max-Flow Min-(Multi)Cut Theorems and Their Applications},
  journal      = {{SIAM} J. Comput.},
  volume       = {25},
  number       = {2},
  pages        = {235--251},
  year         = {1996},
  url          = {https://doi.org/10.1137/S0097539793243016},
  doi          = {10.1137/S0097539793243016},
  bibsource    = {dblp computer science bibliography, https://dblp.org}
}

@article{DBLP:journals/cc/ChawlaKKRS06-approx-harness,
  author       = {Shuchi Chawla and
                  Robert Krauthgamer and
                  Ravi Kumar and
                  Yuval Rabani and
                  D. Sivakumar},
  title        = {On the Hardness of Approximating Multicut and Sparsest-Cut},
  journal      = {Comput. Complex.},
  volume       = {15},
  number       = {2},
  pages        = {94--114},
  year         = {2006},
  url          = {https://doi.org/10.1007/s00037-006-0210-9},
  doi          = {10.1007/S00037-006-0210-9},
  bibsource    = {dblp computer science bibliography, https://dblp.org}
}

@article{DBLP:journals/tcs/Marx06-FPT-open-1,
  author       = {D{\'{a}}niel Marx},
  title        = {Parameterized graph separation problems},
  journal      = {Theor. Comput. Sci.},
  volume       = {351},
  number       = {3},
  pages        = {394--406},
  year         = {2006},
  url          = {https://doi.org/10.1016/j.tcs.2005.10.007},
  doi          = {10.1016/J.TCS.2005.10.007},
  bibsource    = {dblp computer science bibliography, https://dblp.org}
}

@article{DBLP:journals/algorithmica/ChenLL09-FPT-open-3,
  author       = {Jianer Chen and
                  Yang Liu and
                  Songjian Lu},
  title        = {An Improved Parameterized Algorithm for the Minimum Node Multiway
                  Cut Problem},
  journal      = {Algorithmica},
  volume       = {55},
  number       = {1},
  pages        = {1--13},
  year         = {2009},
  url          = {https://doi.org/10.1007/s00453-007-9130-6},
  doi          = {10.1007/S00453-007-9130-6},
  bibsource    = {dblp computer science bibliography, https://dblp.org}
}

@article{DBLP:journals/siamcomp/MarxR14-first-FPT,
  author       = {D{\'{a}}niel Marx and
                  Igor Razgon},
  title        = {Fixed-Parameter Tractability of Multicut Parameterized by the Size
                  of the Cutset},
  journal      = {{SIAM} J. Comput.},
  volume       = {43},
  number       = {2},
  pages        = {355--388},
  year         = {2014},
  url          = {https://doi.org/10.1137/110855247},
  doi          = {10.1137/110855247},
  bibsource    = {dblp computer science bibliography, https://dblp.org}
}

@article{DBLP:journals/siamcomp/BousquetDT18-first-FPT-ind,
  author       = {Nicolas Bousquet and
                  Jean Daligault and
                  St{\'{e}}phan Thomass{\'{e}}},
  title        = {Multicut Is {FPT}},
  journal      = {{SIAM} J. Comput.},
  volume       = {47},
  number       = {1},
  pages        = {166--207},
  year         = {2018},
  url          = {https://doi.org/10.1137/140961808},
  doi          = {10.1137/140961808},
  bibsource    = {dblp computer science bibliography, https://dblp.org}
}

@article{DBLP:journals/tcs/BodlaenderFHMPR10-app-fuzzy-1,
  author       = {Hans L. Bodlaender and
                  Michael R. Fellows and
                  Pinar Heggernes and
                  Federico Mancini and
                  Charis Papadopoulos and
                  Frances A. Rosamond},
  title        = {Clustering with partial information},
  journal      = {Theor. Comput. Sci.},
  volume       = {411},
  number       = {7-9},
  pages        = {1202--1211},
  year         = {2010},
  url          = {https://doi.org/10.1016/j.tcs.2009.12.016},
  doi          = {10.1016/J.TCS.2009.12.016},
  bibsource    = {dblp computer science bibliography, https://dblp.org}
}

@article{DBLP:journals/tcs/DemaineEFI06-app-fuzzy-2,
  author       = {Erik D. Demaine and
                  Dotan Emanuel and
                  Amos Fiat and
                  Nicole Immorlica},
  title        = {Correlation clustering in general weighted graphs},
  journal      = {Theor. Comput. Sci.},
  volume       = {361},
  number       = {2-3},
  pages        = {172--187},
  year         = {2006},
  url          = {https://doi.org/10.1016/j.tcs.2006.05.008},
  doi          = {10.1016/J.TCS.2006.05.008},
  bibsource    = {dblp computer science bibliography, https://dblp.org}
}

@article{DBLP:journals/ml/BansalBC04-app-fuzzy-3,
  author       = {Nikhil Bansal and
                  Avrim Blum and
                  Shuchi Chawla},
  title        = {Correlation Clustering},
  journal      = {Mach. Learn.},
  volume       = {56},
  number       = {1-3},
  pages        = {89--113},
  year         = {2004},
  url          = {https://doi.org/10.1023/B:MACH.0000033116.57574.95},
  doi          = {10.1023/B:MACH.0000033116.57574.95},
  bibsource    = {dblp computer science bibliography, https://dblp.org}
}

@article{DBLP:journals/ipl/MarxR09-approx-fpt,
  author       = {D{\'{a}}niel Marx and
                  Igor Razgon},
  title        = {Constant ratio fixed-parameter approximation of the edge multicut
                  problem},
  journal      = {Inf. Process. Lett.},
  volume       = {109},
  number       = {20},
  pages        = {1161--1166},
  year         = {2009},
  url          = {https://doi.org/10.1016/j.ipl.2009.07.016},
  doi          = {10.1016/J.IPL.2009.07.016},
  bibsource    = {dblp computer science bibliography, https://dblp.org}
}

@article{DBLP:journals/siamcomp/ChitnisHM13-FPT-dir-multiway-cut,
  author       = {Rajesh Hemant Chitnis and
                  MohammadTaghi Hajiaghayi and
                  D{\'{a}}niel Marx},
  title        = {Fixed-Parameter Tractability of Directed Multiway Cut Parameterized
                  by the Size of the Cutset},
  journal      = {{SIAM} J. Comput.},
  volume       = {42},
  number       = {4},
  pages        = {1674--1696},
  year         = {2013},
  url          = {https://doi.org/10.1137/12086217X},
  doi          = {10.1137/12086217X},
  bibsource    = {dblp computer science bibliography, https://dblp.org}
}

@article{DBLP:journals/toct/PilipczukW18a-dir-mul-four,
  author       = {Marcin Pilipczuk and
                  Magnus Wahlstr{\"{o}}m},
  title        = {Directed Multicut is W[1]-hard, Even for Four Terminal Pairs},
  journal      = {{ACM} Trans. Comput. Theory},
  volume       = {10},
  number       = {3},
  pages        = {13:1--13:18},
  year         = {2018},
  url          = {https://doi.org/10.1145/3201775},
  doi          = {10.1145/3201775},
  bibsource    = {dblp computer science bibliography, https://dblp.org}
}

@article{DBLP:journals/talg/ChitnisCHM15-FPT-DSFVS,
  author       = {Rajesh Hemant Chitnis and
                  Marek Cygan and
                  Mohammad Taghi Hajiaghayi and
                  D{\'{a}}niel Marx},
  title        = {Directed Subset Feedback Vertex Set Is Fixed-Parameter Tractable},
  journal      = {{ACM} Trans. Algorithms},
  volume       = {11},
  number       = {4},
  pages        = {28:1--28:28},
  year         = {2015},
  url          = {https://doi.org/10.1145/2700209},
  doi          = {10.1145/2700209},
  bibsource    = {dblp computer science bibliography, https://dblp.org}
}

@book{DBLP:books/sp/CyganFKLMPPS15-book-of-para,
  author       = {Marek Cygan and
                  Fedor V. Fomin and
                  Lukasz Kowalik and
                  Daniel Lokshtanov and
                  D{\'{a}}niel Marx and
                  Marcin Pilipczuk and
                  Michal Pilipczuk and
                  Saket Saurabh},
  title        = {Parameterized Algorithms},
  publisher    = {Springer},
  year         = {2015},
  url          = {https://doi.org/10.1007/978-3-319-21275-3},
  doi          = {10.1007/978-3-319-21275-3},
  isbn         = {978-3-319-21274-6},
  bibsource    = {dblp computer science bibliography, https://dblp.org}
}

@inproceedings{DBLP:conf/stacs/BousquetDTY09-emc-tree-kernel,
  author       = {Nicolas Bousquet and
                  Jean Daligault and
                  St{\'{e}}phan Thomass{\'{e}} and
                  Anders Yeo},
  editor       = {Susanne Albers and
                  Jean{-}Yves Marion},
  title        = {A Polynomial Kernel for Multicut in Trees},
  booktitle    = {26th International Symposium on Theoretical Aspects of Computer Science,
                  {STACS} 2009, February 26-28, 2009, Freiburg, Germany, Proceedings},
  series       = {LIPIcs},
  volume       = {3},
  pages        = {183--194},
  publisher    = {Schloss Dagstuhl - Leibniz-Zentrum f{\"{u}}r Informatik, Germany},
  year         = {2009},
  url          = {https://doi.org/10.4230/LIPIcs.STACS.2009.1824},
  doi          = {10.4230/LIPICS.STACS.2009.1824},
  bibsource    = {dblp computer science bibliography, https://dblp.org}
}

@article{DBLP:journals/networks/GuoN05-emc-tree-fpt,
  author       = {Jiong Guo and
                  Rolf Niedermeier},
  title        = {Fixed-parameter tractability and data reduction for multicut in trees},
  journal      = {Networks},
  volume       = {46},
  number       = {3},
  pages        = {124--135},
  year         = {2005},
  url          = {https://doi.org/10.1002/net.20081},
  doi          = {10.1002/NET.20081},
  bibsource    = {dblp computer science bibliography, https://dblp.org}
}

@article{DBLP:journals/algorithmica/GargVY97-tree-np-hard,
  author       = {Naveen Garg and
                  Vijay V. Vazirani and
                  Mihalis Yannakakis},
  title        = {Primal-Dual Approximation Algorithms for Integral Flow and Multicut
                  in Trees},
  journal      = {Algorithmica},
  volume       = {18},
  number       = {1},
  pages        = {3--20},
  year         = {1997},
  url          = {https://doi.org/10.1007/BF02523685},
  doi          = {10.1007/BF02523685},
  bibsource    = {dblp computer science bibliography, https://dblp.org}
}

@article{DBLP:journals/ipl/GottlobL07-multi-logical,
  author       = {Georg Gottlob and
                  Stephanie Tien Lee},
  title        = {A logical approach to multicut problems},
  journal      = {Inf. Process. Lett.},
  volume       = {103},
  number       = {4},
  pages        = {136--141},
  year         = {2007},
  url          = {https://doi.org/10.1016/j.ipl.2007.03.005},
  doi          = {10.1016/J.IPL.2007.03.005},
  bibsource    = {dblp computer science bibliography, https://dblp.org}
}

@inproceedings{DBLP:conf/ciac/PichlerRW10-multi-tree-decom,
  author       = {Reinhard Pichler and
                  Stefan R{\"{u}}mmele and
                  Stefan Woltran},
  editor       = {Tiziana Calamoneri and
                  Josep D{\'{\i}}az},
  title        = {Multicut Algorithms via Tree Decompositions},
  booktitle    = {Algorithms and Complexity, 7th International Conference, {CIAC} 2010,
                  Rome, Italy, May 26-28, 2010. Proceedings},
  series       = {Lecture Notes in Computer Science},
  volume       = {6078},
  pages        = {167--179},
  publisher    = {Springer},
  year         = {2010},
  url          = {https://doi.org/10.1007/978-3-642-13073-1\_16},
  doi          = {10.1007/978-3-642-13073-1\_16},
  bibsource    = {dblp computer science bibliography, https://dblp.org}
}

@article{DBLP:journals/jal/CalinescuFR03-bounded-deg-tw,
  author       = {Gruia C{u{a}}linescu and
                  Cristina G. Fernandes and
                  Bruce A. Reed},
  title        = {Multicuts in unweighted graphs and digraphs with bounded degree and bounded tree-width},
  journal      = {J. Algorithms},
  volume       = {48},
  number       = {2},
  pages        = {333--359},
  year         = {2003},
  url          = {https://doi.org/10.1016/S0196-6774(03)00073-7},
  doi          = {10.1016/S0196-6774(03)00073-7},
  bibsource    = {dblp computer science bibliography, https://dblp.org}
}

@article{DBLP:journals/jacm/ChuzhoyK09-directed-inapprox,
  author       = {Julia Chuzhoy and
                  Sanjeev Khanna},
  title        = {Polynomial flow-cut gaps and hardness of directed cut problems},
  journal      = {J. {ACM}},
  volume       = {56},
  number       = {2},
  pages        = {6:1--6:28},
  year         = {2009},
  url          = {https://doi.org/10.1145/1502793.1502795},
  doi          = {10.1145/1502793.1502795},
  bibsource    = {dblp computer science bibliography, https://dblp.org}
}

@article{DBLP:journals/eor/GuoHKNU08-graph-class,
  author       = {Jiong Guo and
                  Falk H{\"{u}}ffner and
                  Erhan Kenar and
                  Rolf Niedermeier and
                  Johannes Uhlmann},
  title        = {Complexity and exact algorithms for vertex multicut in interval and
                  bounded treewidth graphs},
  journal      = {Eur. J. Oper. Res.},
  volume       = {186},
  number       = {2},
  pages        = {542--553},
  year         = {2008},
  url          = {https://doi.org/10.1016/j.ejor.2007.02.014},
  doi          = {10.1016/J.EJOR.2007.02.014},
  bibsource    = {dblp computer science bibliography, https://dblp.org}
}

@article{DBLP:journals/mst/Xiao10,
  author       = {Mingyu Xiao},
  title        = {Simple and Improved Parameterized Algorithms for Multiterminal Cuts},
  journal      = {Theory Comput. Syst.},
  volume       = {46},
  number       = {4},
  pages        = {723--736},
  year         = {2010},
  url          = {https://doi.org/10.1007/s00224-009-9215-5},
  doi          = {10.1007/S00224-009-9215-5},
  bibsource    = {dblp computer science bibliography, https://dblp.org}
}

@inproceedings{DBLP:conf/soda/HatzelJLMPSS23,
  author       = {Meike Hatzel and
                  Lars Jaffke and
                  Paloma T. Lima and
                  Tom{\'{a}}s Masar{\'{\i}}k and
                  Marcin Pilipczuk and
                  Roohani Sharma and
                  Manuel Sorge},
  editor       = {Nikhil Bansal and
                  Viswanath Nagarajan},
  title        = {Fixed-parameter tractability of {DIRECTED} {MULTICUT} with three terminal
                  pairs parameterized by the size of the cutset: twin-width meets flow-augmentation},
  booktitle    = {Proceedings of the 2023 {ACM-SIAM} Symposium on Discrete Algorithms,
                  {SODA} 2023, Florence, Italy, January 22-25, 2023},
  pages        = {3229--3244},
  publisher    = {{SIAM}},
  year         = {2023},
  url          = {https://doi.org/10.1137/1.9781611977554.ch123},
  doi          = {10.1137/1.9781611977554.CH123}
}

@article{DBLP:journals/mor/KargerKSTY04,
  author       = {David R. Karger and
                  Philip N. Klein and
                  Clifford Stein and
                  Mikkel Thorup and
                  Neal E. Young},
  title        = {Rounding Algorithms for a Geometric Embedding of Minimum Multiway
                  Cut},
  journal      = {Math. Oper. Res.},
  volume       = {29},
  number       = {3},
  pages        = {436--461},
  year         = {2004},
  url          = {https://doi.org/10.1287/moor.1030.0086},
  doi          = {10.1287/MOOR.1030.0086},
  bibsource    = {dblp computer science bibliography, https://dblp.org}
}

@article{DBLP:journals/jal/GargVY04,
  author       = {Naveen Garg and
                  Vijay V. Vazirani and
                  Mihalis Yannakakis},
  title        = {Multiway cuts in node weighted graphs},
  journal      = {J. Algorithms},
  volume       = {50},
  number       = {1},
  pages        = {49--61},
  year         = {2004},
  url          = {https://doi.org/10.1016/S0196-6774(03)00111-1},
  doi          = {10.1016/S0196-6774(03)00111-1},
  bibsource    = {dblp computer science bibliography, https://dblp.org}
}

@article{DBLP:journals/siamcomp/NaorZ01,
  author       = {Joseph Naor and
                  Leonid Zosin},
  title        = {A 2-Approximation Algorithm for the Directed Multiway Cut Problem},
  journal      = {{SIAM} J. Comput.},
  volume       = {31},
  number       = {2},
  pages        = {477--482},
  year         = {2001},
  url          = {https://doi.org/10.1137/S009753979732147X},
  doi          = {10.1137/S009753979732147X},
  bibsource    = {dblp computer science bibliography, https://dblp.org}
}

@article{DBLP:journals/ipl/CaoCF14,
  author       = {Yixin Cao and
                  Jianer Chen and
                  Jia{-}Hao Fan},
  title        = {An O(1.84\({}^{\mbox{k}}\)) parameterized algorithm for the multiterminal
                  cut problem},
  journal      = {Inf. Process. Lett.},
  volume       = {114},
  number       = {4},
  pages        = {167--173},
  year         = {2014},
  url          = {https://doi.org/10.1016/j.ipl.2013.12.001},
  doi          = {10.1016/J.IPL.2013.12.001},
  bibsource    = {dblp computer science bibliography, https://dblp.org}
}

@inproceedings{DBLP:conf/focs/NaorSS95,
  author       = {Moni Naor and
                  Leonard J. Schulman and
                  Aravind Srinivasan},
  title        = {Splitters and Near-Optimal Derandomization},
  booktitle    = {36th Annual Symposium on Foundations of Computer Science, {FOCS} 1995,
                  Milwaukee, Wisconsin, USA, 23-25 October 1995},
  pages        = {182--191},
  publisher    = {{IEEE} Computer Society},
  year         = {1995},
  url          = {https://doi.org/10.1109/SFCS.1995.492475},
  doi          = {10.1109/SFCS.1995.492475},
  bibsource    = {dblp computer science bibliography, https://dblp.org}
}

@book{downey2013fundamentals,
  title={Fundamentals of parameterized complexity},
  author={Downey, Rodney G and Fellows, Michael R and others},
  volume={4},
  year={2013},
  publisher={Springer}
}

@article{DBLP:journals/jacm/ChenLLOR08,
  author       = {Jianer Chen and
                  Yang Liu and
                  Songjian Lu and
                  Barry O'Sullivan and
                  Igor Razgon},
  title        = {A fixed-parameter algorithm for the directed feedback vertex set problem},
  journal      = {J. {ACM}},
  volume       = {55},
  number       = {5},
  pages        = {21:1--21:19},
  year         = {2008}
}

@article{DBLP:journals/jcss/RazgonO09,
  author       = {Igor Razgon and
                  Barry O'Sullivan},
  title        = {Almost 2-SAT is fixed-parameter tractable},
  journal      = {J. Comput. Syst. Sci.},
  volume       = {75},
  number       = {8},
  pages        = {435--450},
  year         = {2009}
}

@article{DBLP:journals/toct/CyganPPW13,
  author       = {Marek Cygan and
                  Marcin Pilipczuk and
                  Michal Pilipczuk and
                  Jakub Onufry Wojtaszczyk},
  title        = {On multiway cut parameterized above lower bounds},
  journal      = {{ACM} Trans. Comput. Theory},
  volume       = {5},
  number       = {1},
  pages        = {3:1--3:11},
  year         = {2013}
}

@inproceedings{DBLP:conf/soda/LokshtanovR0Z20,
  author       = {Daniel Lokshtanov and
                  M. S. Ramanujan and
                  Saket Saurabh and
                  Meirav Zehavi},
  editor       = {Shuchi Chawla},
  title        = {Parameterized Complexity and Approximability of Directed Odd Cycle
                  Transversal},
  booktitle    = {Proceedings of the 2020 {ACM-SIAM} Symposium on Discrete Algorithms,
                  {SODA} 2020, Salt Lake City, UT, USA, January 5-8, 2020},
  pages        = {2181--2200},
  publisher    = {{SIAM}},
  year         = {2020}
}

@article{DBLP:journals/iandc/LokshtanovM13,
  author       = {Daniel Lokshtanov and
                  D{\'{a}}niel Marx},
  title        = {Clustering with local restrictions},
  journal      = {Inf. Comput.},
  volume       = {222},
  pages        = {278--292},
  year         = {2013}
}

@article{DBLP:journals/siamdm/KratschPPW15,
  author       = {Stefan Kratsch and
                  Marcin Pilipczuk and
                  Michal Pilipczuk and
                  Magnus Wahlstr{\"{o}}m},
  title        = {Fixed-Parameter Tractability of Multicut in Directed Acyclic Graphs},
  journal      = {{SIAM} J. Discret. Math.},
  volume       = {29},
  number       = {1},
  pages        = {122--144},
  year         = {2015}
}

@inproceedings{DBLP:conf/soda/LokshtanovMRSZ21,
  author       = {Daniel Lokshtanov and
                  Pranabendu Misra and
                  M. S. Ramanujan and
                  Saket Saurabh and
                  Meirav Zehavi},
  editor       = {D{\'{a}}niel Marx},
  title        = {FPT-approximation for {FPT} Problems},
  booktitle    = {Proceedings of the 2021 {ACM-SIAM} Symposium on Discrete Algorithms,
                  {SODA} 2021, Virtual Conference, January 10 - 13, 2021},
  pages        = {199--218},
  publisher    = {{SIAM}},
  year         = {2021}
}

@inproceedings{DBLP:conf/stoc/KorhonenL23,
  author       = {Tuukka Korhonen and
                  Daniel Lokshtanov},
  editor       = {Barna Saha and
                  Rocco A. Servedio},
  title        = {An Improved Parameterized Algorithm for Treewidth},
  booktitle    = {Proceedings of the 55th Annual {ACM} Symposium on Theory of Computing,
                  {STOC} 2023, Orlando, FL, USA, June 20-23, 2023},
  pages        = {528--541},
  publisher    = {{ACM}},
  year         = {2023}
}

@article{DBLP:journals/siamdm/KimMPSW24,
  author       = {Eun Jung Kim and
                  Tom{\'{a}}s Masar{\'{\i}}k and
                  Marcin Pilipczuk and
                  Roohani Sharma and
                  Magnus Wahlstr{\"{o}}m},
  title        = {On Weighted Graph Separation Problems and Flow Augmentation},
  journal      = {{SIAM} J. Discret. Math.},
  volume       = {38},
  number       = {1},
  pages        = {170--189},
  year         = {2024},
  url          = {https://doi.org/10.1137/22m153118x},
  doi          = {10.1137/22M153118X},
  bibsource    = {dblp computer science bibliography, https://dblp.org}
}

@article{DBLP:journals/jacm/KimKPW25,
  author       = {Eun Jung Kim and
                  Stefan Kratsch and
                  Marcin Pilipczuk and
                  Magnus Wahlstr{\"{o}}m},
  title        = {Flow-augmentation {I:} Directed graphs},
  journal      = {J. {ACM}},
  volume       = {72},
  number       = {1},
  pages        = {5:1--5:38},
  year         = {2025},
  url          = {https://doi.org/10.1145/3706103},
  doi          = {10.1145/3706103},
  bibsource    = {dblp computer science bibliography, https://dblp.org}
}

@inproceedings{DBLP:conf/iwpec/Guillemot08a,
  author       = {Sylvain Guillemot},
  editor       = {Martin Grohe and
                  Rolf Niedermeier},
  title        = {{FPT} Algorithms for Path-Transversals and Cycle-Transversals Problems
                  in Graphs},
  booktitle    = {Parameterized and Exact Computation, Third International Workshop,
                  {IWPEC} 2008, Victoria, Canada, May 14-16, 2008. Proceedings},
  series       = {Lecture Notes in Computer Science},
  volume       = {5018},
  pages        = {129--140},
  publisher    = {Springer},
  year         = {2008},
  url          = {https://doi.org/10.1007/978-3-540-79723-4\_13},
  doi          = {10.1007/978-3-540-79723-4\_13},
  bibsource    = {dblp computer science bibliography, https://dblp.org}
}

@book{vazirani2001approximation,
  title={Approximation algorithms},
  author={Vazirani, Vijay V},
  volume={1},
  year={2001},
  publisher={Springer}
}
\bibliographystyle{alpha}

\end{document}